  \providecommand\BibTeX{{%
    \normalfont B\kern-0.5em{\scshape i\kern-0.25em b}\kern-0.8em\TeX}}}
\gdef\@copyrightpermission{
\begin{minipage}{0.3\columnwidth} \href{https://creativecommons.org/licenses/by/4.0/}{\includegraphics[width=0.90\textwidth]{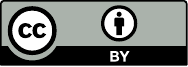}} \end{minipage}\hfill \begin{minipage}{0.7\columnwidth}
\href{https://creativecommons.org/licenses/by/4.0/}{This work is licensed under a Creative Commons Attribution International 4.0 License.}
\end{minipage}
\vspace{5pt} }
\def\eqref#1{equation~\ref{#1}}
\def\1{\bm{1}}
\def\eps{{\epsilon}}
\def\vx{{\bm{x}}}
\DeclareMathAlphabet{\mathsfit}{\encodingdefault}{\sfdefault}{m}{sl}
\SetMathAlphabet{\mathsfit}{bold}{\encodingdefault}{\sfdefault}{bx}{n}
\newcommand{\R}{\mathbb{R}}
\DeclareMathOperator*{\argmin}{arg\,min}
\algnewcommand\algorithmicinput{\textbf{Input:}}
\algnewcommand\algorithmicoutput{\textbf{Output:}}
\algnewcommand\Input{\item[\algorithmicinput]}%
\algnewcommand\Output{\item[\algorithmicoutput]}%
\newtheorem{theorem}{Theorem}[section]
\newtheorem{lemma}[theorem]{Lemma}
\newtheorem{definition}[theorem]{Definition}
\newtheorem*{example*}{Example}
\newcommand{\prob}{\mathbb{P}}
\renewcommand{\eps}{\varepsilon}
\renewcommand{\paragraph}[1]{\noindent\textbf{#1}}
\newcommand{\highlight}[1]{\textcolor{blue}{\underline{#1}}}
\newcommand{\naive}[1]{\textcolor{red}{#1}}
\newcommand{\nonpriv}[1]{\textcolor{orange}{#1}}
\definecolor{ForestGreen}{RGB}{34,139,34}
\newcommand{\flaim}[1]{\textcolor{ForestGreen}{#1}}
\newcommand{\review}[1]{\textcolor{black}{#1}}
\begin{document}

\title{FLAIM: AIM-based Synthetic Data Generation in the Federated Setting}

\author{Samuel Maddock}
\authornote{Author correspondence to s.maddock@warwick.ac.uk}
\affiliation{%
  \institution{University of Warwick}
  \country{}
}

\author{Graham Cormode}
\affiliation{%
  \institution{Meta AI \& University of Warwick}
  \country{}
  }

\author{Carsten Maple}
\affiliation{%
  \institution{University of Warwick}
  \country{}
  }

\begin{abstract}
    Preserving individual privacy while enabling collaborative data sharing is crucial for organizations. Synthetic data generation is one solution, producing artificial data that mirrors the statistical properties of private data. While numerous techniques have been devised under
    differential privacy, they predominantly assume data is centralized. However, data is often distributed across multiple clients in a federated manner.
    In this work, we initiate the study of federated synthetic tabular data generation. Building upon a SOTA central method known as AIM, we present \textit{DistAIM} and \textit{FLAIM}. We first show that it is straightforward to distribute AIM, extending a recent approach based on secure multi-party computation which necessitates additional overhead, making it less suited to federated scenarios. We then demonstrate that naively federating AIM can lead to substantial degradation in utility under the presence of heterogeneity. To mitigate both issues, we propose an augmented FLAIM approach that maintains a private proxy of heterogeneity. We simulate our methods across a range of benchmark datasets under different degrees of heterogeneity and show we can improve utility while reducing overhead.
\end{abstract}

\begin{CCSXML}
<ccs2012>
<concept>
<concept_id>10002978.10002991.10002995</concept_id>
<concept_desc>Security and privacy~Privacy-preserving protocols</concept_desc>
<concept_significance>500</concept_significance>
</concept>
</ccs2012>
\end{CCSXML}

\ccsdesc[500]{Security and privacy~Privacy-preserving protocols}

\keywords{Synthetic Data, Federated Learning, Differential Privacy}

\maketitle

\section{Introduction}
\label{sec:intro}

Modern computational applications are predicated on the availability of significant volumes of high-quality data.
Increasingly, such data is not freely available: it may not be collected in the volume needed, and may be subject to privacy concerns.
Recent regulations such as the General Data Protection Regulation (GDPR) restrict the extent to which data collected for a specific purpose may be processed for some other goal.  
The aim of \textit{synthetic data generation} (SDG) %
is to solve this problem by allowing the creation of realistic artificial data that shares the same structure and statistical properties as the original data source. 
SDG is an active area of research, offering the potential for organisations to share useful datasets while protecting the privacy of individuals \citep{assefa2020generating, mendelevitch2021fidelity, van2023beyond}.

SDG methods fall into two %
categories: deep learning \citep{ kingma2019introduction, xu2019modeling, goodfellow2020generative} %
and %
statistical models %
\citep{young2009using, zhang2017privbayes}. %
Nevertheless, without strict privacy measures in place, it is possible for SDG models to leak information about the data it was trained on \citep{murakonda2021quantifying, stadler2022synthetic, houssiau2022tapas}. %
It is common for deep learning approaches such as Generative Adversarial Networks (GANs) to produce verbatim copies of training data, breaching privacy
\citep{srivastava2017veegan, vanbreugel2023membership}. %
A standard approach to prevent leakage is to use Differential Privacy (DP) \citep{dwork2014foundations}. 
DP is a formal definition which ensures the output of an algorithm does not depend %
heavily on any one individual's data by introducing calibrated random noise. %
Under DP, statistical models have become state-of-the-art (SOTA) for tabular data and often outperform deep learning %
counterparts \citep{tao2021benchmarking, liu2022utility, ganev2023understanding}. 
Approaches are based on Bayesian networks \citep{zhang2017privbayes}, Markov random fields \citep{mckenna2019graphical} and iterative marginal-based methods \citep{liu2021iterative, aydore2021differentially, mckenna2022aim}.

Private SDG methods perform well in centralized settings where a trusted curator holds all the data. However, in many 
settings, %
data cannot be easily centralized. 
Instead, there are multiple participants %
each holding a small private dataset who wish to generate synthetic data. %
Federated learning (FL) is a paradigm that applies when multiple parties wish to collaboratively train a model without sharing data directly \citep{kairouz2019advances}. 
In FL, local data remains on-device, and only model updates are transmitted back to a central %
aggregator~\citep{mcmahan2017learning}%
. FL methods commonly adopt differential privacy to provide formal privacy guarantees and is widely used in deep learning %
\citep{mcmahan2017communication, kairouz2021practical, huba2022papaya}. %
However, there has been minimal focus on federated SDG: we only identify a recent effort of Pereira et al. to distribute Multiplicative Weights with Exponential Mechanism (MWEM) via secure multi-party computation (SMC)~\citep{pereira2022secure}. Their work focuses on a distributed setting which assumes a small number of participants are \emph{all} available to secret-share data before the protocol begins. This is not suited for the fully federated setting where there may be thousands of clients and only a small proportion available at a particular round.

In this work, we study generating differentially private tabular data in the federated setting where only a small proportion of clients are available per-round who exhibit strong data heterogeneity. %
We propose FLAIM, a novel federated analogue to the current SOTA central DP algorithm AIM \citep{mckenna2022aim}. 
We show how an analog to traditional FL training can be formed with clients performing a number of local steps before sending model updates to the server in the form of noisy marginals. 
We highlight how this naive extension can suffer severely under strong heterogeneity %
which is exacerbated when only a few clients participate per round. %
To circumvent this, we modify FLAIM %
by replacing components of central AIM with newly-built steps that are better suited %
to the federated setting, such as augmenting clients' local choices via a private proxy of skew to ensure decisions are not adversely affected by heterogeneity.

\begin{figure}[t]
    \centering   
    \includegraphics[width=0.4\textwidth]{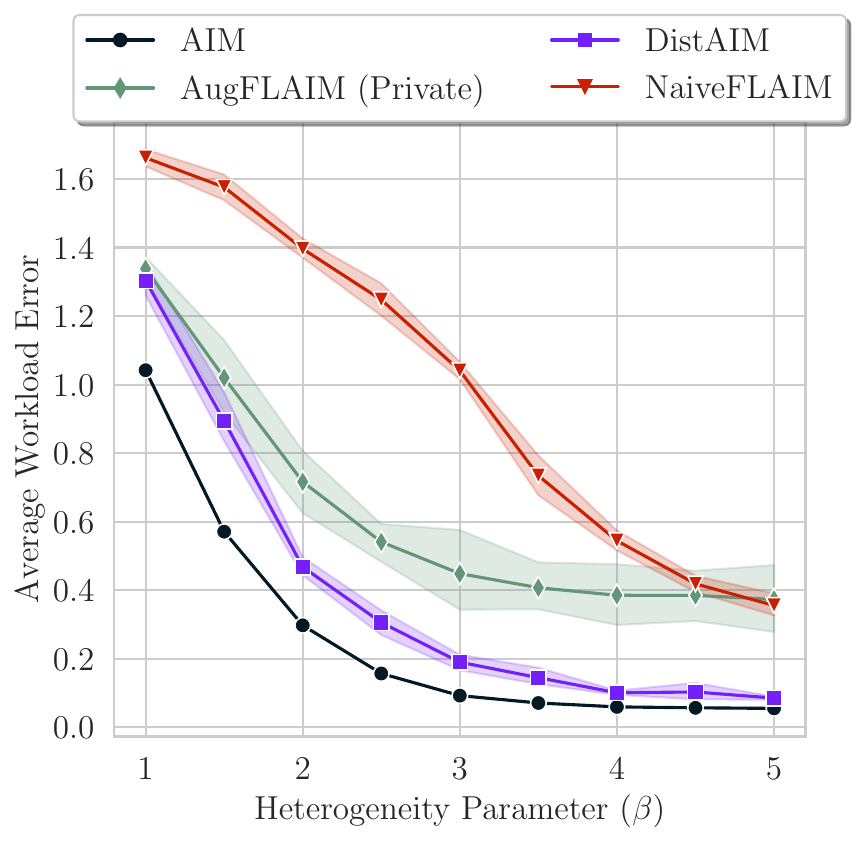}
    \caption{%
    Average error over a workload of marginals for (FL)AIM trained with $\eps=1$ on a toy federated dataset. $\beta$ varies client feature skew where large $\beta$ results in less skew. \label{fig:1}}
    \Description{The figure shows the average error over a workload of marginals whilst varying heterogeneity and demonstrates that the AugFLAIM method achieves better performance than NaiveFLAIM whilst matching that of DistAIM in high heterogeneity scenarios.}
\end{figure}
\begin{example*} Figure \ref{fig:1} presents a federated scenario where $10\%$ of 100 clients participate per round. 
Each client holds data with varying degrees of feature skew, where a larger $\beta$ implies less heterogeneity. 
We use four variations: Centralised AIM (black); Distributed AIM, our adaptation of \citet{pereira2022secure} (purple); our naive federated AIM approach (red); and our improved federated version (green). %
We plot the $L_1$ error over a workload of marginal queries trained with $\eps=1$. 
Due to client availability, there is an inevitable utility gap between central and distributed AIM. 
By naively federating AIM, client decisions made in local training are strongly affected by heterogeneity while distributed AIM is not, resulting in a big loss in utility. 
This gap is almost closed in high skew scenarios (small $\beta$) by penalising clients' local decisions via a private measure of heterogeneity (AugFLAIM).
\end{example*}
\noindent Our main contributions are as follows:
\begin{itemize}
    \item We are the first to study marginal-based methods in the federated setting. We extend the work of Pereira et al. \cite{pereira2022secure} who focus on a strongly synchronized distributed setting with MWEM to instead form a distributed protocol that replaces MWEM with AIM to obtain greater utility (DistAIM).
    \item Motivated to reduce the overheads present in DistAIM, we propose FLAIM, our federated analogue of AIM \citep{mckenna2022aim} that is designed specifically for the federated setting. We propose novel extensions based on augmenting utility scores in AIM decisions via a private proxy that reduces the effect heterogeneity has on local decisions, resulting in increased model performance and smaller overheads.
    \item We show empirically our FLAIM method outperforms federated deep learning approaches such as DP-CTGAN, which extends conclusions of prior studies to the federated setting.
    \item We perform an extensive empirical study on $7$ realistic tabular datasets. We show FLAIM obtains utility matching DistAIM but reduces the need for heavyweight SMC, resulting in less overhead. Furthermore, we show our FLAIM approaches are resistant to varying levels of heterogeneity\footnote{Our %
    code is available at 
    \url{https://github.com/Samuel-Maddock/flaim}}.
\end{itemize}
\section{Preliminaries}
\label{sec:prelim}
We assume the existence of $K$ participants each holding local datasets $D_1, \dots, D_K$ over a set of $d$ attributes such that the full dataset is denoted $D := \cup_{k} D_k$. Additionally, we assume that each attribute is categorical\footnote{We discretize continuous features via uniform binning, see Appendix \ref{appendix:datasets} for details.}. For a record $\vx := (x_1, \dots, x_d) \in D$ we denote $x_i$ as the value of attribute $i$. 
For each attribute $i \in [d]$, we define $A_i$ as the set of discrete values that $x_i$ can take. For a subset of attributes $q \subseteq [d]$ we abuse notation and let $x_q$ be the subset of $\vx$ with attributes in the set $q$. We are mostly concerned with computing marginal queries over $D$ (or individual $D_k$). Let $q \subseteq [d]$ and define $A_q := \prod_{i \in q} A_i$, as the set of values $q$ can take and $n_q := |A_q|$ as the cardinality of $q$. 

\begin{definition}[Marginal Query]
A marginal query for a subset of features $q \subseteq [d]$ is a function $M_q: \mathcal{D} \rightarrow \R^{n_q}$ where each entry is a count of the form $(M_q(D))_j := \sum_{x \in D} \1[x_q = a_j]$, $\forall j \in [n_q], a_j \in A_q$. %
\end{definition}%
As an example, consider a dataset with two features: unemployment and age where $A_1 = \{0, 1\}$ and $A_2 = \{1, 2, ..., 99\}$. The output of the marginal query $q = \{unemployment, age\}$ is a vector where an entry is a count of each record that satisfies a possible combination of feature values e.g., $\{unemp=0, age=18\}$. The goal in workload-based synthetic data generation is to generate a synthetic dataset $\hat D$ that minimises $\operatorname{Err}(D, \hat D)$ over a given workload of (marginal) queries $Q$.
We follow existing work and study the average workload error under the $L_1$ norm \citep{mckenna2022aim}.

\begin{definition}[Average Workload Error]
Denote the workload $Q = \{q_1, \dots, q_m\}$ as a set of marginal queries where each $q \subseteq [d]$. The average workload error for synthetic dataset $\hat D$ is defined %
$
\textstyle
    \operatorname{Err}(D, \hat D; Q) := \frac{1}{|Q|} \sum_{q \in Q} \|M_{q}(D) - M_{q}(\hat D)\|_1
$
\end{definition}%
We are interested in producing a synthetic dataset $\hat D$ with marginals close to that of $D$. However, in the federated setting it is often impossible to form the global dataset $D := \cup_k D_k$ due to privacy restrictions or client availability. 
Instead the goal is to gather sufficient information from local datasets $D_k$ and train a model that learns $M_q(D)$. 
For any $D_k$, the marginal query $M_q(D_k)$ and local workload error $\operatorname{Err}(D_k, \hat D)$ are defined analogously.

\paragraph{Differential Privacy (DP)}
\citep{dwork2006calibrating} is a formal notion %
that guarantees the output of an algorithm does not depend heavily on any individual. 
We seek to guarantee $(\eps,\delta)$-DP, where the parameter $\eps$ is called the \underline{privacy budget} and determines an upper bound on the privacy leakage of the algorithm. 
The parameter $\delta$ defines the probability of failing to meet this, and is set very small.
DP has many attractive properties including sequential composition, meaning that if two algorithms are $(\eps_1, \delta_1)$-DP and $(\eps_2, \delta_2)$-DP respectively, then their joint output on a specific dataset satisfies $(\eps_1 + \eps_2, \delta_1 + \delta_2)$-DP. Tighter %
bounds are obtained via zero-Concentrated DP (zCDP)~\citep{bun2016concentrated}: %

\begin{definition}[$\rho$-zCDP] \label{def:zcdp}
A mechanism $\mathcal{M}$ is $\rho$-zCDP if for any two neighbouring datasets $D, D^\prime$ and all $\alpha \in (1,\infty)$ we have $D_\alpha(\mathcal{M}(D) | \mathcal{M}(D^\prime) \leq \rho \cdot \alpha$,
where $D_\alpha$ is Renyi divergence of order $\alpha$.
\end{definition}
One can convert $\rho$-zCDP to obtain an $(\eps,\delta)$-DP guarantee. The notion of \say{adjacent} datasets can lead to different privacy definitions. We assume \underline{example-level privacy}, which defines two datasets $D$, $D^\prime$ to be adjacent if $D^\prime$ can be formed from the addition/removal of a single row from $D$. To satisfy DP it is common to require bounded \underline{sensitivity} of the function we wish to privatize. %
\begin{definition}[Sensitivity]
Let $f: \mathcal{D} \rightarrow \R^d$ be a function over a dataset. The $L_2$ sensitivity of $f$, denoted $\Delta_2(f)$, is defined as $\Delta_2(f) := \max_{D \sim D^\prime} \|f(D) - f(D^\prime)\|_2$,
where $D \sim D^\prime$ represents the example-level relation between datasets. Similarly, $\Delta_1(f)$ is defined with the $L_1$ norm as $\Delta_1(f) := \max_{D \sim D^\prime} \|f(D) - f(D^\prime)\|_1$.
\end{definition}
\noindent
We use two foundational DP 
methods %
that are core to many DP-SDG algorithms, the Gaussian and Exponential mechanisms \citep{dwork2014foundations}.

\begin{definition}[Gaussian Mechanism]
Let $f: \mathcal{D} \rightarrow \R^d$, the Gaussian mechanism is defined as $GM(f) = f(D) + \Delta_2(f) \cdot \mathcal{N}(0, \sigma^2 I_d)$.
The Gaussian mechanism satisfies $\frac{1}{2\sigma^2}$-zCDP.
\end{definition}
\begin{definition}[Exponential Mechanism]
Let $\review{u(q; \cdot)} : \mathcal{D} \rightarrow \R$ be a utility function defined for all $q \in Q$. The exponential mechanism releases $q$ with probability $\mathbb{P}[\mathcal{M}(D) = q] \propto \exp(\frac{\eps}{2\Delta} \cdot \review{u(q; D))}$,
with $\Delta := \max_q \Delta_1(\review{u(q; D)})$. This satisfies $\frac{\eps^2}{8}$-zCDP.
\end{definition}

\paragraph{Iterative Methods (Select-Measure-Generate).}
\label{sec:aim}
Recent methods for private tabular data generation follow the \say{Select-Measure-Generate} paradigm which is also the core focus of our work. %
These are broadly known as iterative methods \citep{liu2021iterative} and usually involve training a graphical model via noisy marginals over a number of steps. %
In this work, we focus on %
AIM \citep{mckenna2022aim}, an extension of the classical MWEM algorithm \citep{hardt2012simple}, which replaces the multiplicative weight update with a graphical model inference procedure called Private-PGM \citep{mckenna2019graphical}. 
PGM learns a Markov Random Field (MRF) %
and applies post-processing optimisation to ensure consistency in the generated data. PGM can answer queries without directly generating data from the model, thus avoiding additional sampling errors. 

In outline, given a workload of queries $Q$, AIM proceeds as follows (further details are in the full technical report): 
\begin{enumerate}
    \item At each round $t$, via the exponential mechanism, \textbf{select} a query $q \in Q$ that is worst-approximated by the current synthetic dataset.
    \item Under the Gaussian mechanism \textbf{measure} the chosen marginal and update the graphical model via PGM.
    \item At any point, we can \textbf{generate} synthetic data via PGM that best explains the observed measurements.
\end{enumerate}
   AIM begins round $t$ by computing utility scores for each query $q \in Q$ of the form,
\begin{align*}\textstyle
    \review{u(q; D)} = w_q \cdot (\|M_q(D) - M_q(\hat D^{(t-1)})\|_1 - \sqrt{\frac{2}{\pi}} \cdot \sigma_t \cdot n_q),
\end{align*}
where $\hat D^{(t-1)}$ is the current PGM model. 
The core idea is to select marginals that are high in error (first term) balanced with the expected error from measuring the query under Gaussian noise with variance $\sigma_t^2$ (second term). The utility scores are weighted by $w_q := \sum_{r \in Q} |r \cap q|$, which calculates the overlap of other marginals in the workload with $q$. The sensitivity of the resulting exponential mechanism is $\Delta = \max_q w_q$ since measuring $\|M_q(D) - M_q(\hat D^{(t-1)})\|_1$ has sensitivity $1$ which is weighted by $w_q$. Once a query is selected %
it is measured by the Gaussian mechanism with variance $\sigma_t^2$ and sensitivity $1$. An update to the model via PGM is then applied using all observed measurements so far. See Appendix \ref{appendix:aim} for full details.

\paragraph{Towards Decentralized Synthetic Data.} 
Given a set of $K$ clients with datasets $D_1, \dots \ D_K$ and workload $Q$, the goal is to learn a synthetic dataset $\hat D$ that best approximates $D := \cup_k D_k$ over $Q$ e.g., $M_q(\hat D) \approx M_q(D), \forall q \in Q$. 
However, computing statistics directly from $D$ is not possible as each $D_k$ is private. 
We make an important distinction here between the highly-synchronized \underline{distributed} and loosely-coordinated \underline{federated} settings. 
In the \underline{distributed} setting, all participants are available to collaboratively share $M_q(D_k)$ and some central server(s) compute steps of AIM in a strongly synchronized manner, with high communication overhead. 
This is the original setting of Pereira et al. \cite{pereira2022secure}. 
Instead we are mainly interested in the \underline{federated} setting where we assume 
that participants are more weakly engaged, and may become unavailable or dropout at any moment. 
We model this by assuming that each participant participates in the current round only with probability $p$.
We also assume each $D_k$ exhibits heterogeneity which could manifest as significant feature-skew or a varying number of samples. 
We detail how we model heterogeneity in Section \ref{sec:exp}. %

\section{Distributed AIM}
\label{sec:distaim}
Our first proposal, DistAIM, translates the AIM algorithm directly into the federated setting by having computing servers jointly calculate each step, attempting to mirror what would be computed in the central setting.
To do so, computing servers must collaborate privately and securely, such that no one participant's raw query answers, $M_q(D_k)$, are revealed. The \say{select} and \say{measure} steps require direct access to private local datasets $D_k$, 
and hence we need to implement distributed DP mechanisms for these steps. \review{We present an overview here with full details in Appendix~\ref{appendix:distaim}}.
 
Pereira et al. \cite{pereira2022secure}, describe one such approach for MWEM. They utilize various secure multi-party computation (SMC) primitives based on secret-sharing \citep{araki2016high}. However, a key difference is they assume a \underline{distributed} setting where \underline{all} participants first secret-share their workload answers to computing servers before the protocol begins. These computing servers implement secure exponential and Laplace mechanisms over shares of marginals via standard SMC operations \citep{keller2020mp}. This is a key difference to our federated setting where we assume partial participation of clients over multiple rounds. Their approach also has two drawbacks: first, their cryptographic solution incurs both a computation and communication overhead which may be prohibitive in federated scenarios. %
Secondly, their approach is based on MWEM which results in a significant loss in utility. Furthermore, MWEM is memory-intensive and does not scale to high-dimensional datasets. 

Instead, we apply the framework of Pereira et al. \cite{pereira2022secure} to AIM, and adapt this for our federated setting. Compared to AIM and Pereira et al., our DistAIM approach has important differences:

\noindent \textbf{Client participation:} At each round only a subset of participants are available to join the AIM round. For simplicity, we assume clients are sampled with probability $p$. In expectation, $pK$ clients contribute their secret-shared workload answers e.g., $\{\llbracket M_q(D_k) \rrbracket: q \in Q\}$ to computing servers. This is an immediate difference with the setting of Pereira et al. \cite{pereira2022secure}, where it is assumed all clients are available to secret-share marginals before training. Instead, in DistAIM the secret-shares from participants are aggregated across rounds and the \say{select} and \say{measure} steps are carried out via computing servers over the updated aggregate shares at a particular round. Compared to the central setting, DistAIM incurs additional error due to subsampling. 

\noindent \textbf{Select step:} A key difficulty extending AIM (or MWEM) to a distributed setting is the use of the exponential mechanism. 
In order to apply this, the utility scores \review{$u(q; D)$} must be calculated. Following Protocol 2 of Pereira et al. \cite{pereira2022secure}, sampling from the exponential mechanism can be done over secret-shares of the marginal $\llbracket M_q(D_k) \rrbracket$ since utility scores \review{$u(q; D)$} depend only linearly in $M_q(D_k)$.

\noindent \textbf{Measure step:} Once a marginal has been sampled, it must be measured. Protocol 3 in Pereira et al. \cite{pereira2022secure} proposes one way to securely generate Laplace noise between computing servers. This is then added to the aggregate sum of a secret-shared marginal. To remain consistent with AIM, we use Gaussian noise instead.
    
\noindent \textbf{Estimate step:} 
Under the post-processing properties of DP the computing server(s) are free to use the measured noisy marginals with PGM to update the graphical model, as in the centralized case.

\section{FLAIM: FL analog for AIM}
\begin{algorithm}[t]
\caption{FLAIM}\label{alg:flaim}
\begin{algorithmic}[1]
\Input $K$ participants with data $D_1, \dots, D_k$, sampling rate $p$, global rounds $T$, local rounds $s$, workload $Q$, privacy parameters $(\eps, \delta)$
\For{each global round $t=1 \dots T$}
    \State Form %
    $P_t$ by sampling each client $k$ with probability $p$
    \For{each client $k \in P_t$}
    \For{\highlight{each local step $l=1 \dots s$}} %
            \State Filter workload $Q \leftarrow Q \setminus \{|q| =  1: q \in Q\}$ \flaim{(Private)} \label{alg:filter}
            \State \highlight{Compute a heterogeneity measure for each $q \in Q$}
            \begin{align*}
                \tilde\tau_k(q) &\leftarrow 
                \begin{cases}
                    0 & \text{ \naive{Naive}} \\
                    \tau_k(q) := \|M_q(D_k) - M_q(D)\|_1 & \text{ \nonpriv{Oracle}} \\
                    \frac{1}{|q|} \sum_{j \in q} \|M_{\{j\}}(D_k) - M_{\{j\}}(\hat D^{(t-1)})\|_1 & \text{ \flaim{Private}}
                \end{cases}
            \end{align*}
            \State \textbf{Select} $q_{t+l}\in Q$ using ExpMech with utility score(s) $u(q; D_k) := w_q \left(\|M_q(D_k) - M_q(\hat D^{(t-1)+l})\|_1 - \sqrt{\frac{2}{\pi}} \cdot \sigma_{(t-1)+l} \cdot n_q \highlight{- \tilde\tau_k(q)}\right)$
            \State \textbf{Measure} marginal %
            $ \tilde{M}_{q}(D_k) := M_q(D_k) + \mathcal{N}\left(0, \sigma^2_{(t-1)+l} I\right)$
            \State \textbf{Estimate} new \highlight{local model} via PGM as\newline
            \hspace*{4mm}$
                \hat D_k^{(t-1)+l} \leftarrow \argmin_{p \in \mathcal{S}} \sum_{i=1}^{(t-1)+l} \frac{1}{\sigma_i} \| M_{q_i}(p) - \tilde{M}_{q_{i}}(D_k)\|_2
            $
        \EndFor
        \State Share all $1$-way marginals \review{ under SecAgg}, \newline \hspace*{7mm}$\mathcal{M}_k^1 \leftarrow \{(t, j, \llbracket {M}_{\{j\}}(D_k) \rrbracket \}_{j \in [d]}$  \flaim{(Private only)} \label{alg:measure1}
        \State Compute $\mathcal{M}_k \leftarrow \{(t, q_{t+l}, \review{\llbracket M_{q_{t+l}}(D_k) \rrbracket)\}_{l \in [s]}}$ \flaim{$\cup \mathcal{M}_k^1$} 
        \State Send $\mathcal{M}_k$ to the server via SecAgg \label{alg:measure2} %
    \EndFor
    \State Server updates measurement list $\mathcal{M}^{t} := \cup_{k \in P_t} \mathcal{M}_k$
    \State \textbf{for} each unique $q \in \mathcal{M}^{t}$ \review{aggregate marginals and add noise \newline \hspace*{5mm}$\tilde M^t_q := \sum_{\{k: M_q \in \mathcal{M}_k\}} \llbracket M_q(D_k) \rrbracket + N(0, \sigma^2_t)$}
    \State \label{alg:weight}
    $\text{\textbf{for} each } \review{\tilde M_q^t \text{ compute  }
        \alpha^t_q}  := \begin{cases} %
      1/\review{\sigma_t}, & \text{\naive{Naive}} \\
      N_q^t/\review{\sigma_t}, & \text{\nonpriv{Oracle}} \\
      \tilde N_q^t / \review{\sigma_t}, & \text{\flaim{Private}}
    \end{cases} 
    $
    \State \review{Server updates measurement list $\mathcal{M}$ with each $(t, q, \tilde M_q^t, \alpha_q^t)$ and updates the global model}
    \review{
    \begin{align*} \textstyle
        \hat D_{t} \leftarrow \argmin_{p \in \mathcal{S}} \sum_{ (t, q, \tilde M_q^t, \alpha_q^t) \in \mathcal M} \alpha_q^t \| M_{q}(p) - \tilde M_q^t\|_2
    \end{align*}}
\EndFor
\end{algorithmic}
\end{algorithm}

\review{While DistAIM is one solution, it is not defined within the standard federated paradigm where clients typically perform a number of local steps before sending model updates to a server. Furthermore, the SMC-based approach can have large overheads which is prohibitive for federated clients who have limited bandwidth (we explore this in Section \ref{sec:paramsetting}). This leads us to}
design an AIM approach that is analogous to traditional Federated Learning (FL), where %
only lightweight SMC is needed in the form of secure-aggregation (SecAgg) \citep{bell2020secure}. 
In FL, the paradigm for training models is to do computation on-device, having clients perform multiple local steps before sending a model update.
The server aggregates all client updates and performs an update to the global model \citep{mcmahan2017learning}. 
When combined with DP, model updates are aggregated via SecAgg schemes and noise is added either by a trusted server or in a distributed manner. In the case of AIM, we denote our analogous FL approach as FLAIM. 
In FLAIM, the selection step of AIM is performed locally by clients (across multiple local training steps). Each clients' chosen marginals are then sent to a trusted server via SecAgg and noise is added.

In more detail, FLAIM is outlined in Algorithm \ref{alg:flaim}. We present three variations, with differences highlighted in color.
Shared between all variations are the key differences with DistAIM displayed in \highlight{blue underline}. 
First is \naive{NaiveFLAIM}, a straightforward translation of AIM into the federated setting. 
In Section \ref{sec:naive}, we explain the shortcomings of such an approach which stems from scenarios where clients' local data exhibits strong heterogeneity. Motivated by this, Section \ref{sec:nonpriv} proposes \nonpriv{AugFLAIM (Oracle)} a variant of FLAIM that assumes oracle access to a measure of skew which can be used to augment local utility scores. 
This skew measure is non-private and not obtainable in practice, but provides an idealized baseline.
Lastly, Section \ref{sec:flaim} introduces \flaim{AugFLAIM (Private)}, which again augments local utility scores but with a private proxy of heterogeneity alongside other heuristics to improve utility.

All FLAIM variants proceed by sampling clients %
to participate in round $t$.
Each client performs a number of local steps $s$, which consist of performing a local selection step using the exponential mechanism, measuring the chosen marginal under local noise and updating their local model via PGM. When each client finishes local training, they send back each chosen query $q$ alongside the associated marginal \review{$\llbracket M_q(D_k)\rrbracket$,} which are aggregated via secure-aggregation and noise is added by the central server. Hence, local training is done under local differential privacy (LDP) to not leak any privacy between steps, 
whereas the resulting global update is under a form of distributed DP \review{where noise is added by the central server to the securely-aggregated marginals}. We assume all AIM methods run for $T$ global rounds. AIM can also set $T$ adaptively via budget annealing and we explore this in our experiments (see Appendix \ref{appendix:hyper} for details). 

\subsection{NaiveFLAIM and Heterogeneous Data}
\label{sec:naive}

NaiveFLAIM is our first attempt at a SDG in the federated setting, by directly translating the AIM algorithm. 
However, in federated settings, participants often exhibit strong heterogeneity in their local datasets. That is, clients' local datasets $D_k$ can differ significantly from the global dataset $D$.
Such heterogeneity will affect AIM in both the \say{select} and \say{measure} steps. If $D_k$ and $D$ are significantly different then the local marginal $M_q(D_k)$ will differ from the true marginal $M_q(D)$. 
We quantify heterogeneity for a client $k$ and query $q \in Q$ via the %
$L_1$ distance:
$$\tau_k(q) := \|M_q(D_k) - M_q(D)\|_1.$$
This can be viewed as a measure of query skew. In FLAIM, we proceed by clients perform a number of local steps. The first stage involves carrying out a local \say{select} step based on utility scores of the form $ u(q; D_k) \propto \|M_q(D_k) - M_q(\hat D^{(t-1)})\|_1$.
Suppose for a particular client $k$ there exists a query $q \in Q$ such that $M_q(D_k)$ exhibits strong heterogeneity. 
If at step $t$ the current model $\hat D^{(t-1)}$ is a good approximation of $D$, then it is probable that client $k$ ends up selecting any query that has high heterogeneity since $u(q; D_k) \propto ||M_q(D_k) - M_q(\hat D^{(t-1)})||_1 \approx ||M_q(D_k) - M_q(D)||_1 = \tau_k(q)$. 
This mismatch can harm model performance and is compounded by having many clients select (multiple and possibly differing) skewed marginals and so the model is updated in a way that drifts from $D$.

\subsection{AugFLAIM (Oracle): Tackling Heterogeneity}
\label{sec:nonpriv}
The difficulty above arises as clients choose marginals via local applications of the exponential mechanism with a score that does not account for underlying skew. %
We have $u(q; D_k) \propto$
\begin{align*}
    \|M_q(D_k) - M_q(\hat D)\|_1 &\leq \|M_q(D) - M_q(\hat D) \| + \|M_q(D_k) - M_q(D) \|_1 \\&\propto \review{u(q;D)} + \tau_k(q)
\end{align*}
To circumvent this, we should correct local utility scores by down-weighting marginals based on $\tau_k(q)$ and modify utility scores as: $$u(q; D_k) \propto \|M_q(D_k) - M_q(\hat D) \| - \tau_k(q)$$ where $\tau_k(q)$ is an exact $L_1$ measure of heterogeneity for client $k$ at a marginal $q$. 
Unfortunately, measuring $\tau_k(q)$ under privacy constraints is not feasible. 
That is, $\tau_k(q)$ depends directly on $M_q(D)$, which is exactly what we are trying to learn via AIM.
Still, we introduce 
\nonpriv{AugFLAIM (Oracle)} as 
an idealized baseline to compare with.  
AugFLAIM (Oracle) is a variation of FLAIM that assumes oracle access to $\tau_k(q)$ and augments local utility scores as above.

\subsection{AugFLAIM (Private): Heterogeneity Proxy}
\label{sec:flaim}
Since augmenting utility scores directly via $\tau_k(q)$ is not feasible, we seek a proxy $\tilde \tau_k(q)$ that is reasonably correlated with $\tau_k(q)$ and can be computed under privacy. 
This proxy measure can be used to correct local utility scores, %
penalising queries via $\tilde \tau_k(q)$. This helps ensure clients select queries that are not adversely affected by heterogeneity. We propose the following proxy
\begin{align*}
\textstyle
    \tilde \tau_k(q) := \frac{1}{|q|} \sum_{j \in q} \| M_{\{j\}}(D_k)  - \tilde M_{\{j\}}(D)\|_1
\end{align*}
Instead of computing a measure for each $q \in Q$, we compute one for each feature $j \in [d]$, where $\tilde M_{\{j\}}(D_k)$ is a noisy estimate of the 1-way marginal for feature $j$. For a particular query $q \in Q$, we average the skew of the associated features contained in $q$. 
Such a $\tilde \tau_k(q)$ relies only on estimating the distribution of each feature. 
This estimate can be refined across multiple federated rounds as each participant can measure $M_{\{j\}}(D_k)$ for each $j \in [d]$ and have the server sum and add noise (via SecAgg) to produce a new private estimate $\tilde M_{\{j\}}(D)$ each round. We add two further enhancements:

\noindent
\textbf{1. Filtering and combining 1-way marginals \review{(Line \ref{alg:measure1})}.} As we require clients to estimate all features at every round, we remove 1-way marginals from the workload to prevent clients from measuring the same marginal twice. All 1-way marginals that are estimated for $\tilde \tau_k(q)$ are fed back into PGM to improve the global model. %

\noindent
\textbf{2. Weighting Marginals \review{(Line \ref{alg:weight})}.} In PGM, measurements are weighted by $\alpha = 1/\sigma_t$, so those that are measured with less noise have more importance in the optimisation of model parameters. 
Both AugFLAIM variations adopt
an additional weighting scheme that includes the total sample size \review{that contributed to a particular marginal $q$ at round $t$, $N_q^t := \sum_{\{k: M_q^t \in \mathcal{M}_k\}} |D_k|$ where the weight becomes $\alpha_q^t = N_q^t/\sigma_t$. This relies on knowing the number of samples that are aggregated.}
In some cases, 
the size of datasets may be deemed private. 
\review{In such scenarios, it can be estimated \review{from the noisy marginal $\tilde M_q$ by summing the counts to produce $\tilde N_q$}}. %

The privacy guarantees of all FLAIM variations follow  directly from those of AIM. 
The use of a heterogeneity measure incurs an additional sensitivity cost for the exponential mechanism and AugFLAIM (Private) incurs an additional privacy cost as it measures each of the $d$ features at every round. 
The following lemma captures this. See Appendix \ref{appendix:flaim} for the full proof.
\begin{lemma}
\label{lem:dp}
    For any number of global rounds $T$ and local rounds $s$, FLAIM satisfies $(\eps,\delta)$-DP , under Gaussian budget allocation $r \in (0,1)$ by computing $\rho$ according to Lemma \ref{lemma:cdp}, and setting
    \begin{align*}
        \sigma_t &= \begin{cases}
          \sqrt{\frac{Ts + d}{2\cdot r \cdot \rho}}, & \text{\naive{Naive} or \nonpriv{Oracle}} \\
          \sqrt{\frac{T(s+d)}{2\cdot r \cdot \rho}}, & \text{\flaim{Private}}
        \end{cases}, \; \eps_t = \sqrt{\frac{8 \cdot (1-r) \cdot \rho}{Ts}}
    \end{align*}
    For AugFLAIM methods, the exponential mechanism is applied with sensitivity $\Delta := \max_q 2w_q$.
\end{lemma}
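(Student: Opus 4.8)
The plan is to account for the total zCDP budget consumed by all mechanisms invoked across the $T$ global rounds, then convert to $(\eps,\delta)$-DP via the standard zCDP-to-DP conversion (the ``Lemma~\ref{lemma:cdp}'' referenced in the statement). The privacy budget $\rho$ is split into two parts by the allocation parameter $r\in(0,1)$: a fraction $r\rho$ is reserved for the Gaussian (``measure'') mechanisms and a fraction $(1-r)\rho$ for the exponential (``select'') mechanisms. I would treat these two streams separately and then combine them by composition of zCDP.

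First I would count the Gaussian measurements. In each global round every participating client measures its $s$ locally selected marginals under the Gaussian mechanism, and these are aggregated via SecAgg so that, from the privacy accounting viewpoint, the relevant object is the single aggregated noisy marginal per selected query; with sensitivity $1$ and noise scale $\sigma_t$, each such release is $\tfrac{1}{2\sigma_t^2}$-zCDP. Over $T$ rounds this gives $Ts$ Gaussian releases in the Naive/Oracle case. For AugFLAIM (Private) there are additionally $d$ one-way marginal measurements per round (Line~\ref{alg:measure1}), each again of sensitivity $1$ and noise scale $\sigma_t$, contributing $Td$ further Gaussian releases — hence the $Ts+d$ vs.\ $T(s+d)$ distinction inside the $\sqrt{\cdot}$ (note the stated $Ts+d$ in the Naive case appears to fold in a one-time allocation; I would follow the accounting that yields the displayed $\sigma_t$). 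Setting the total Gaussian zCDP cost equal to $r\rho$, i.e. $(\text{number of Gaussian releases})\cdot \tfrac{1}{2\sigma_t^2} = r\rho$, and solving for $\sigma_t$ reproduces the two cases in the lemma. Next I would count the exponential mechanisms: each client runs $s$ local selection steps per round, so there are $Ts$ applications of the exponential mechanism with per-call parameter $\eps_t$, each $\tfrac{\eps_t^2}{8}$-zCDP; setting $Ts\cdot\tfrac{\eps_t^2}{8} = (1-r)\rho$ and solving gives $\eps_t = \sqrt{8(1-r)\rho/(Ts)}$ as claimed. Finally, by composition the whole algorithm is $\rho$-zCDP, and invoking the conversion lemma with the prescribed way of computing $\rho$ from $(\eps,\delta)$ yields $(\eps,\delta)$-DP.

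The remaining ingredient is the sensitivity claim for the exponential mechanism in the AugFLAIM variants, $\Delta := \max_q 2w_q$. Here I would argue that the utility score $u(q;D_k) = w_q(\|M_q(D_k)-M_q(\hat D)\|_1 - \sqrt{2/\pi}\,\sigma n_q - \tilde\tau_k(q))$ changes by at most $2w_q$ under an example-level neighbouring relation: as in central AIM, the term $w_q\|M_q(D_k)-M_q(\hat D)\|_1$ has sensitivity $w_q$ (adding/removing one row changes one entry of $M_q(D_k)$ by $1$), and the added heterogeneity term $w_q\tilde\tau_k(q) = \tfrac{w_q}{|q|}\sum_{j\in q}\|M_{\{j\}}(D_k)-\tilde M_{\{j\}}(D)\|_1$ contributes an additional sensitivity of at most $w_q\cdot\tfrac{1}{|q|}\cdot|q|\cdot 1 = w_q$ (one row changes exactly one coordinate of each $M_{\{j\}}(D_k)$ by $1$, across the $|q|$ features in $q$, and the $\tfrac{1}{|q|}$ averaging cancels); the deterministic term $\sqrt{2/\pi}\,\sigma n_q$ has zero sensitivity. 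Summing gives $2w_q$, hence $\Delta=\max_q 2w_q$.

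The main obstacle I anticipate is getting the bookkeeping exactly right for the federated/local structure: one must be careful that local selection and measurement are done under local DP per client so that composition within a round is handled correctly, that SecAgg lets us account the aggregated Gaussian release once rather than once per client, and that the $d$ extra one-way marginals in the Private variant are correctly attributed to rounds (and that these reused one-way marginals do not incur a double charge when fed back into PGM, since that step is post-processing). Reconciling the precise form $Ts+d$ in the Naive case with the $Td$-style counting — i.e. identifying which contributions are per-round vs.\ one-time — is the delicate point; everything else is a routine application of zCDP composition and the Gaussian/exponential mechanism guarantees already stated in the preliminaries.
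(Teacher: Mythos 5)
Your proposal is correct and follows essentially the same route as the paper's proof: count the $Ts$ (plus $d$ per-initialisation, or $Td$ per-round for the Private variant) Gaussian releases and the $Ts$ exponential-mechanism invocations, split $\rho$ as $r\rho$ versus $(1-r)\rho$, compose under zCDP, and convert via Lemma~\ref{lemma:cdp}; your sensitivity argument that the heterogeneity term contributes an extra $w_q$ (giving $\Delta = \max_q 2w_q$) matches the paper's reasoning that $M_q(D_k)$ appears twice in the score. Your reading of the $Ts+d$ term as a one-time allocation for the $d$ initialisation marginals is the correct resolution, and is in fact spelled out more explicitly than in the paper's own proof.
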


\begin{table}[t!]
    \caption{Comparison of FLAIM approaches against baselines for negative log-likelihood (NLL), $\eps=5$. Smaller NLL is better.}
    \label{tab:baselines}
    \centering
    \begin{tabular}{llll}
    \toprule
     \textbf{Method / Dataset} & Adult & Credit & Covtype \\
    \midrule
    Fed DP-CTGAN & 37.1 & 83.8 & 62.7 \\
    FedNaiveBayes &  25.33 & 18.02  & 44.9 \\
    FLAIM (Random) & 83.9 & 47.7 & 58.4 \\
    \naive{NaiveFLAIM} & 29.4 & 18 & 45.4 \\
    \flaim{AugFLAIM (Private)} &  \textbf{20.87} & \textbf{16.2} & \textbf{41.6} \\
    \hline
    DP-CTGAN & 28.6 & 27.6 & 45.9 \\
    AIM &  \textbf{19.2} &  \textbf{15.57} & \textbf{40.92} \\
    \hline
    \end{tabular}
\end{table}

\begin{figure*}[t]
\centering
  \subfloat[Adult]{%
       \includegraphics[width=0.33\linewidth]{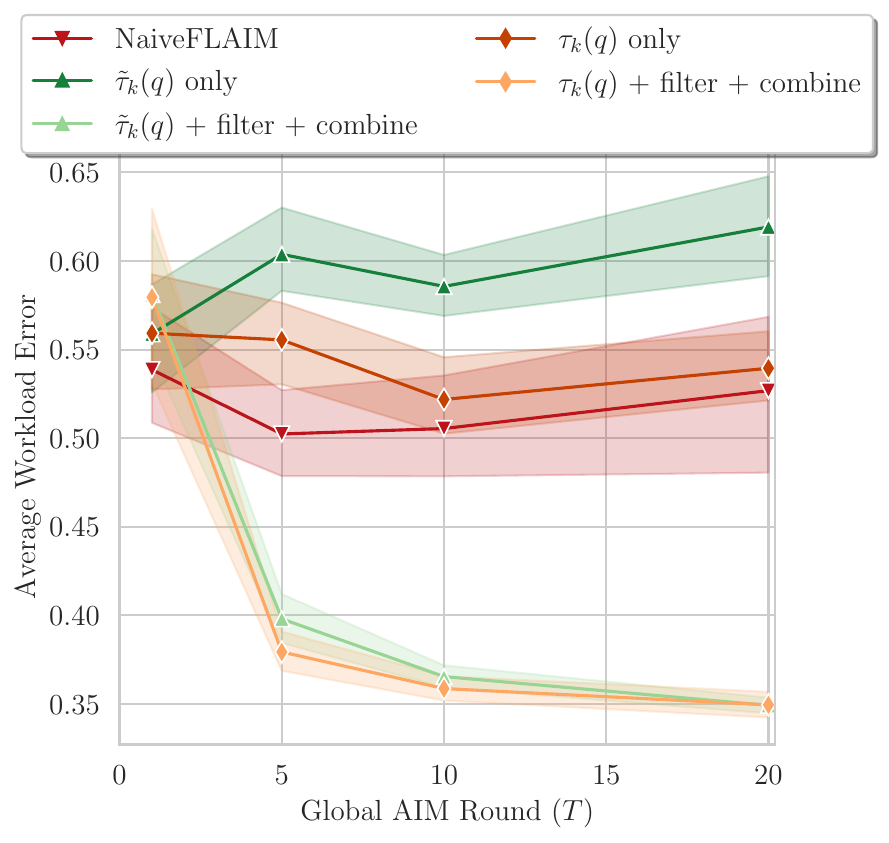}}
  \subfloat[Credit]{%
        \includegraphics[width=0.33\linewidth]{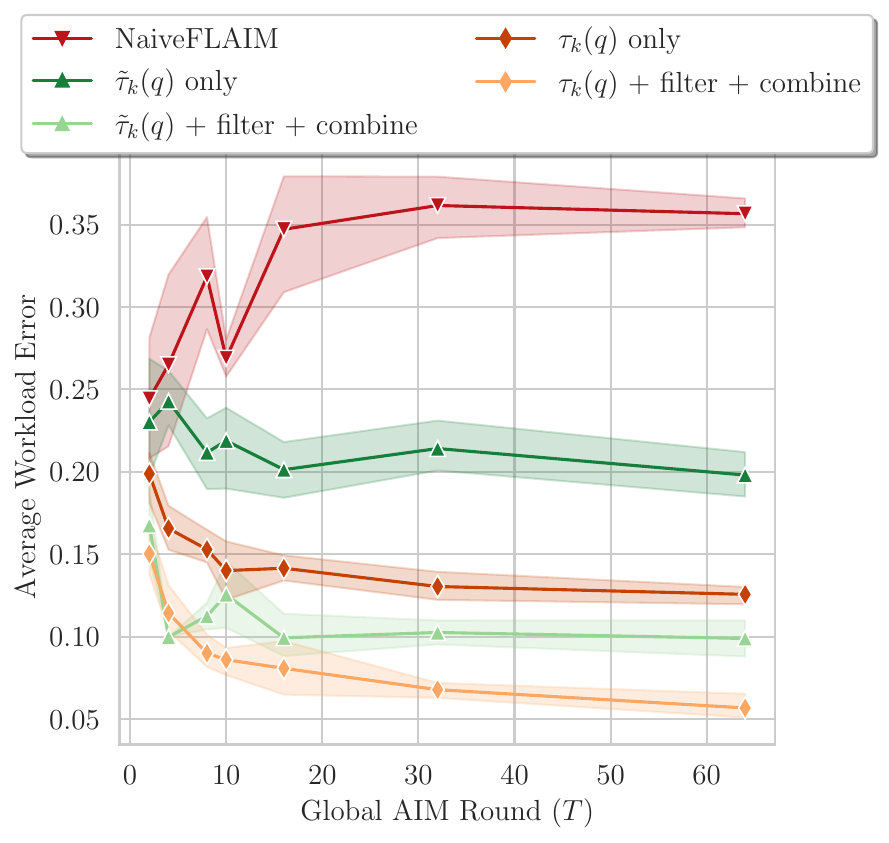}}
  \subfloat[Covtype]{%
       \includegraphics[width=0.33\linewidth]{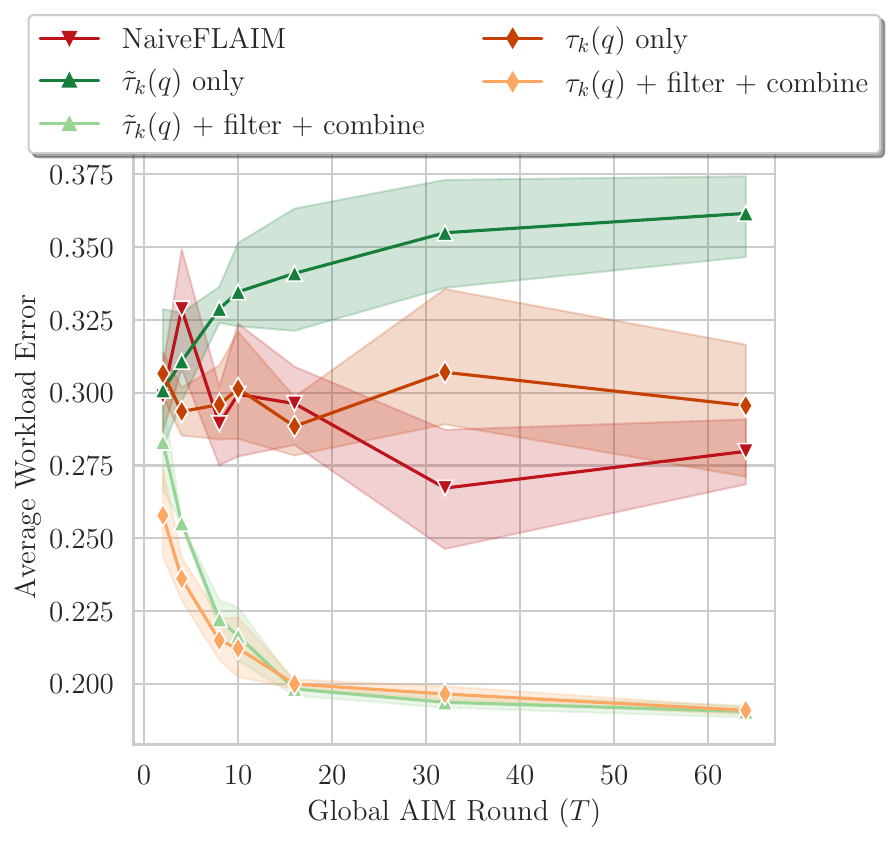}}
  \caption{Ablation study, comparing utility for FLAIM variations that augment local utility scores, $\eps=5, T=10, s=1, p=0.1$\label{fig:ablation}}
  \Description{Ablation study over the Adult, Credit and Covtype datasets.}
\end{figure*}

\section{Experimental Evaluation}
\label{sec:exp}

For our experiments, we utilize realistic benchmark tabular datasets from the UCI repository \citep{Dua:2019}: Adult, Magic, Marketing and Covtype. We further use datasets common for benchmarking synthetic data: Census and Intrusion from the Synthetic Data Vault (SDV) \citep{SDV}. We also construct a toy dataset with feature-skew denoted SynthFS. Full details on all datasets are contained in Appendix \ref{appendix:datasets}.

We evaluate our methods in three ways: average workload error (as defined in Section \ref{sec:prelim}), average negative log-likelihood (evaluated on test data) and the area under the curve (AUC) of a decision tree model trained on synthetic data and evaluated on test data. 

For all datasets, we simulate heterogeneity by forming non-IID splits in one of two ways: The first is by performing dimensionality reduction and clustering nearby points to form client partitions that have strong feature-skew. We call this the \say{clustering} approach. For experiments that require varying heterogeneity, we form splits via an alternative label-skew method popularized by Li et al. \cite{li2022federated}. This samples a label distribution $p_c \in [0,1]^K$ for each class $c$ from a Dirichlet$(\beta)$ where larger $\beta$ results in less heterogeneity. See Appendix \ref{appendix:hetero} for full details. In the following sections, all experiments have $K=100$ clients with partitions formed from the clustering approach unless stated otherwise. We train (FL)AIM models on a fixed workload of $3$-way marginal queries chosen at random with $|Q| = 64$ and average results over $10$ independent runs. Further experiments on datasets besides Adult are contained in Appendix \ref{appendix:experiments}.

\begin{figure*}[t]
\centering
  \subfloat[Varying privacy budget $(\eps)$\label{fig:vary_eps}]{%
       \includegraphics[width=0.32\linewidth]{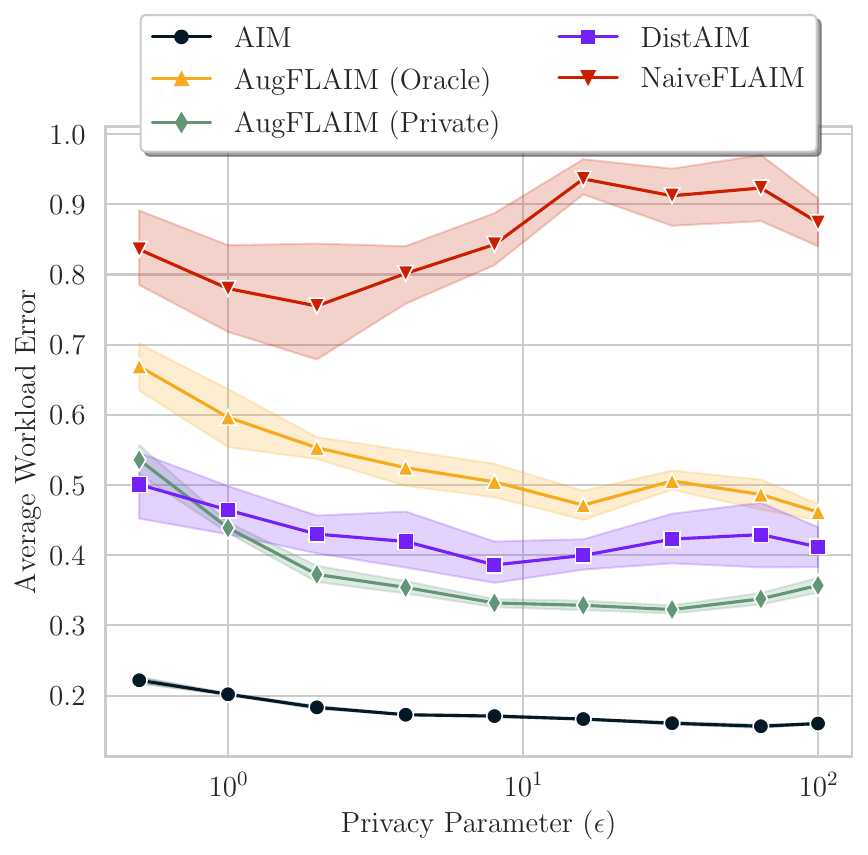}}
  \subfloat[Varying global rounds $(T)$\label{fig:vary_t}]{%
        \includegraphics[width=0.32\linewidth]{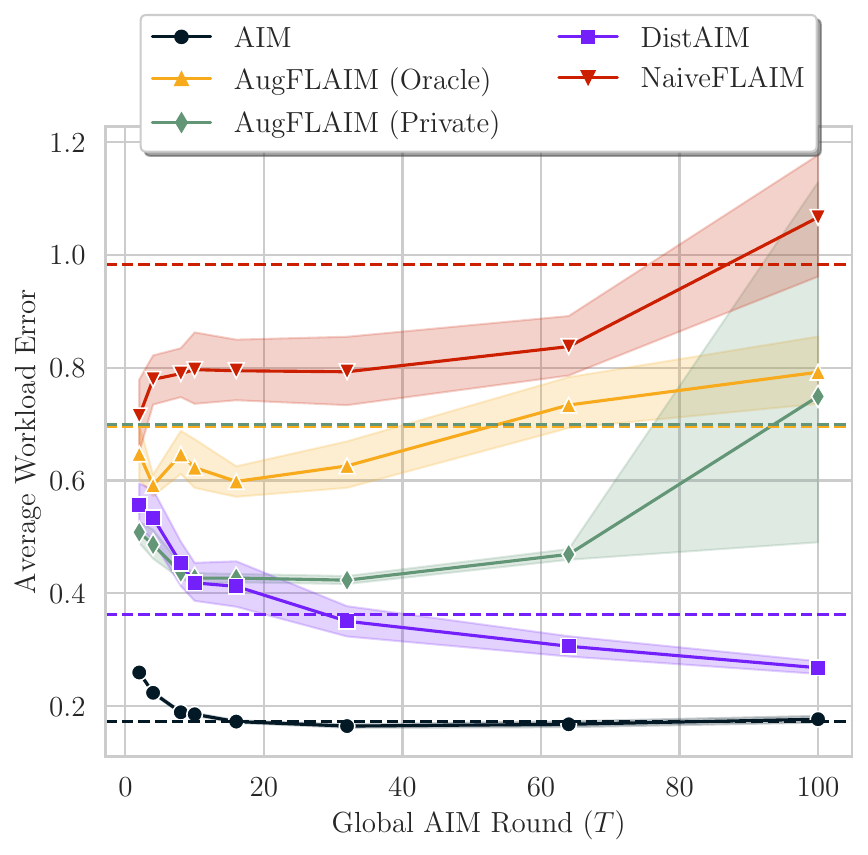}}
  \subfloat[Varying client sample-rate $(p)$ \label{fig:vary_p}]{%
       \includegraphics[width=0.32\linewidth]{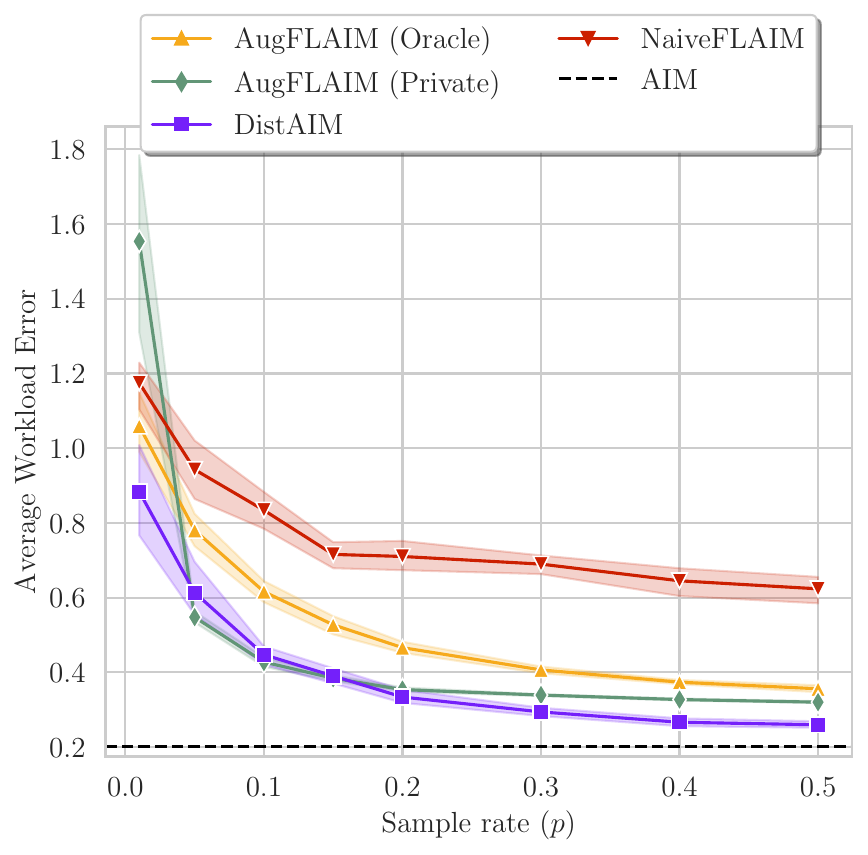}} \\
  \subfloat[Varying $\beta$ in label-skew \label{fig:vary_beta}]{%
       \includegraphics[width=0.32\linewidth]{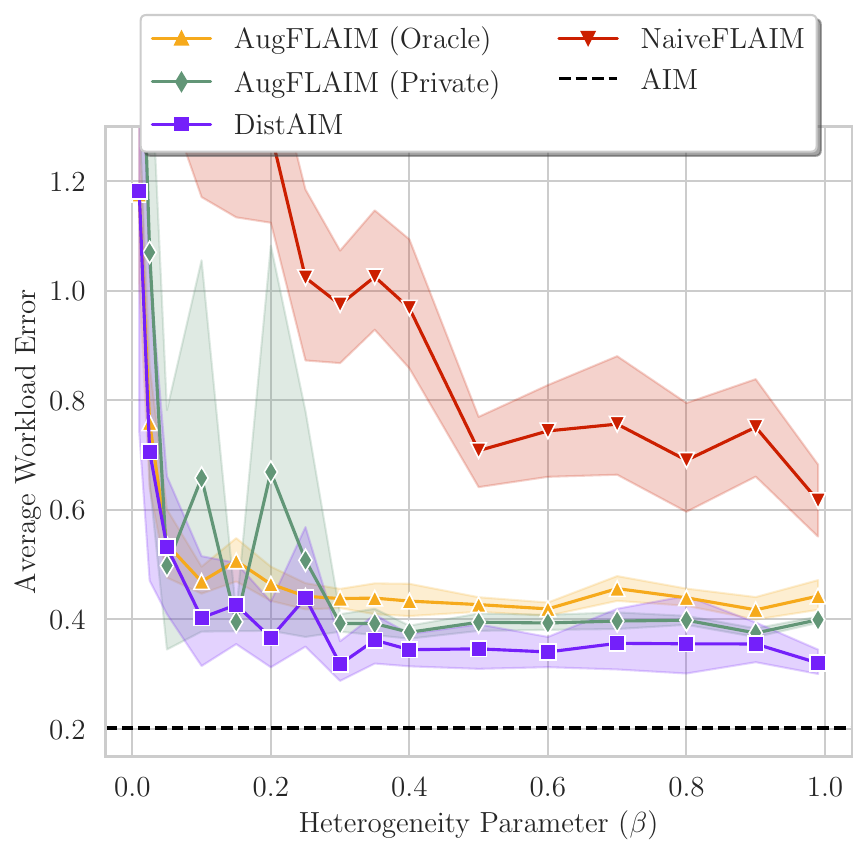}}
  \subfloat[Varying $T,s$ - Workload Error\label{fig:vary_s_eps1}]{%
        \includegraphics[width=0.32\linewidth]{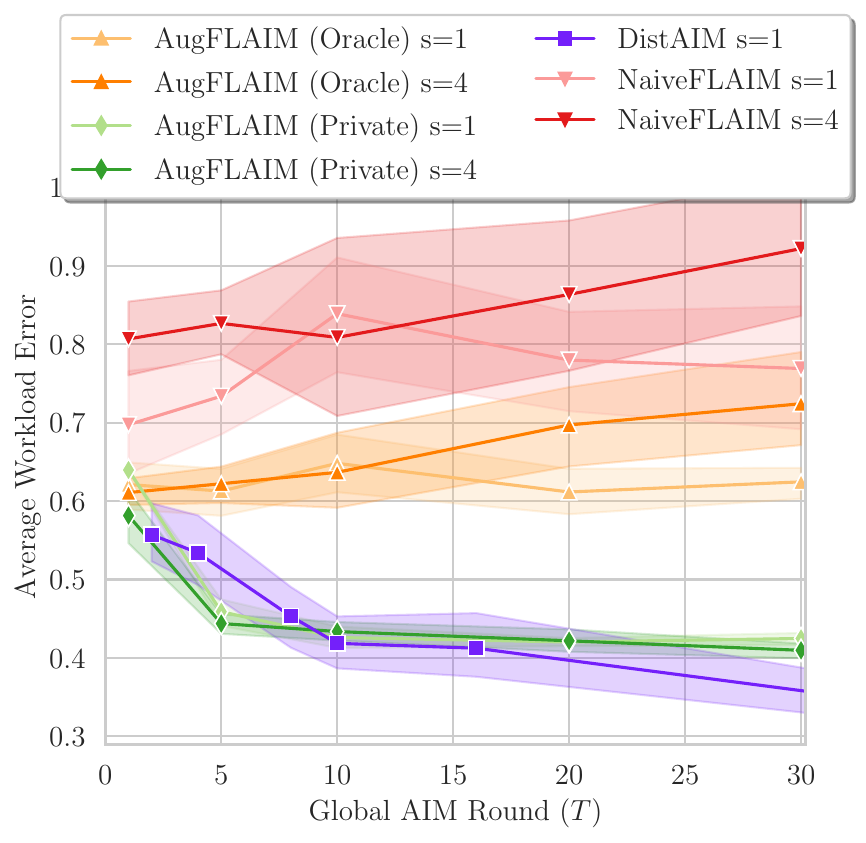}}
  \subfloat[\review{Varying $T,s$ - Test AUC \label{fig:vary_s_eps10}}]{%
        \includegraphics[width=0.325\linewidth]{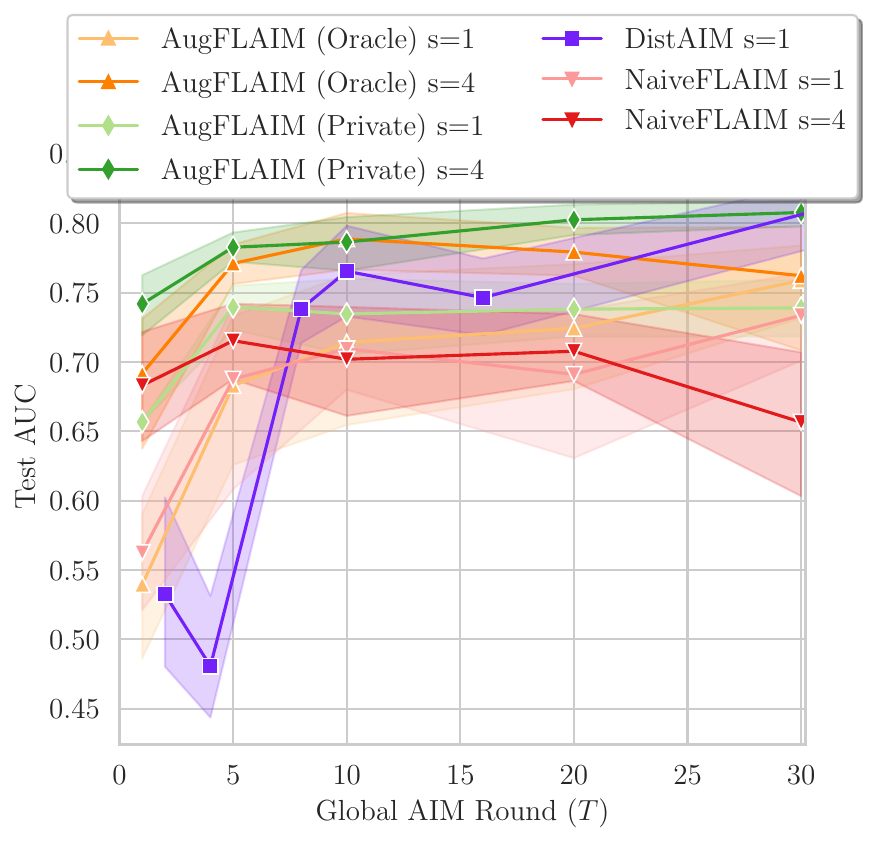}}
  \caption{Varying (FL)AIM Parameters on Adult; Unless otherwise stated $T=10, s=1, p=0.1, K=100$, $\eps=1$}
  \Description{Varying FLAIM hyper-parameters on Adult for $T, p, \beta$ and $s$.}
\end{figure*}

\subsection{Comparison with Existing Baselines}\label{sec:baselines}

We begin with an experiment comparing AugFLAIM (Private) to other federated baselines. One such SOTA approach is CTGAN \citep{xu2019modeling}. We utilise the DP-CTGAN implementation within OpenDP's smartnoise-sdk \citep{opendp} to compare to AIM. For the federated setting we train CTGAN using DP-FedSGD via FLSim \citep{flsim}. For details on hyperparameters see Appendix \ref{appendix:hyper}. We further compare AugFLAIM (Private) against two AIM baselines. \underline{FLAIM (Random)} which takes FLAIM and randomly chooses a query $q \in Q$ without utilising the exponential mechanism. 
Instead, all privacy budget is spent on the \say{Measure} step. 
The other is \underline{FedNaiveBayes} which restricts the workload $Q$ to only 1-way marginals and is equivalent to training a NaiveBayes model. In Table \ref{tab:baselines} we present the negative log-likelihood (NLL) for models trained to an $\eps=5$ across three datasets. Methods achieving lowest NLL for a particular dataset are in bold. 

For the central setting, AIM achieves better performance than DP-CTGAN across each dataset. This confirms prior studies such as \cite{liu2022utility} that show graphical model approaches achieve better utility than deep learning methods for tabular data. For the federated setting, we note FedNaiveBayes and FLAIM (Random) both perform poorly in comparison to AugFLAIM (Private). This illustrates two main points: utilising the exponential mechanism does result in a substantial increase in utility (i.e., randomly choosing $q \in Q$ is poor) and that utilising a workload of $k$-way marginals with $k > 1$ gives best utility (i.e., NaiveBayes is poor). Further note, AugFLAIM (Private) has better utility than NaiveFLAIM which shows augmenting utility scores in the Exponential mechanism does improve utility. We explore this further in Section \ref{sec:paramsetting}. Finally, Fed DP-CTGAN performs very poorly compared to AugFLAIM. Even NaiveFLAIM and occasionally FedNaiveBayes outperform it. There are further issues for practitioners: first, Fed DP-CTGAN requires a large number of hyperparameters to be tuned for best utility such as client and server learning rates and the clipping norm. This is in contrast to (FL)AIM methods that only have a single hyperparameter - the total number of global rounds $T$. Secondly, CTGAN requires a large number of training epochs. In this experiment we train for 50 epochs which is equivalent to $T=500$ rounds whereas the FLAIM methods achieve better utility in only $T=10$ rounds. For these reasons, in further experiments, we do not compare to federated CTGAN.

\subsection{Ablation Study: Utility of AugFLAIM} \label{sec:ablation}
To understand what determines the utility of AugFLAIM (Private) we present an ablation study in Figure \ref{fig:ablation}. Here we train FLAIM models with $\eps=5$ on Adult, Credit and Covtype whilst varying the global rounds $T$. We present NaiveFLAIM compared with variations that augment the utility scores of the Exponential mechanism. These are: using the true heterogeneity measure $\tau_k(q)$ only (otherwise denoted \nonpriv{AugFLAIM (Oracle)}); using $\tau_k(q)$ with the filter-and-combine heuristic; using the private heterogeneity proxy $\tilde \tau_k(q)$ only and using $\tilde \tau_k(q)$ with the filter and combine heuristic (otherwise denoted \flaim{AugFLAIM (Private)}). On the Credit dataset, using $\tau_k(q)$ or $\tilde \tau_k(q)$ only results in a clear improvement over NaiveFLAIM, and when combined with the heuristics the lowest error is obtained. On Adult and Covtype, using only $\tau_k(q)$ or the private proxy $\tilde \tau_k(q)$ does not immediately result in lower workload error than NaiveFLAIM. Instead, utilising the filter and combine heuristics results in the best workload error overall. In further experiments, we denote \flaim{AugFLAIM (Private)} as the method which augments utility scores with $\tilde \tau_k(q)$ and uses the filter and combine heuristic whereas we denote \nonpriv{AugFLAIM (Oracle)} as the method that has oracle access to $\tau_k(q)$ only (without further heuristics).

\subsection{Parameter Settings} \label{sec:paramsetting}

Having concluded that AugFLAIM (Private) achieves the best performance against other federated baselines, we now present a detailed set of experiments comparing FLAIM methods with DistAIM across a variety of federated settings. We compare AIM and ~DistAIM against NaiveFLAIM and our two variants that augment local utility scores: AugFLAIM (Oracle) using $\tau_k(q)$ \review{only} and AugFLAIM (Private) using proxy $\tilde \tau_k(q)$ \review{with filtering and combining 1-ways}. 

\paragraph{Varying the privacy budget $(\eps)$.}  In Figure \ref{fig:vary_eps}, we plot the workload error whilst varying $\eps$ on Adult, sampling $10\%$ of clients per round and setting $T=10$. First, we observe a clear gap in performance between DistAIM and central AIM due to the error from subsampling a small number of clients per round. We observe that naively federating AIM gives the worst performance even as $\eps$ becomes large. Furthermore, augmenting utility scores makes a clear improvement in workload error, particularly for $\eps > 1$. By estimating feature distributions at each round, AugFlaim (Private) can obtain performance that matches or sometimes improves upon DistAIM for larger values of $\eps$. \review{We further note that AugFLAIM (Private) has lower error than AugFLAIM (Oracle) which may seem counter-intuitive. However, AugFLAIM (Oracle) is still trained under DP, only oracle access to $\tau_k(q)$ is assumed which is non-private and trained without heuristics as explored in Section \ref{sec:ablation}.}

\begin{table*}[t]
    \caption{Performance on datasets $K=100, p=0.1, \eps=1, T=10$. Results show workload error and negative log-likelihood. Metrics are bold if a federated method achieves lowest on a specific dataset.}
    \label{tab:1}
    \centering
\begin{tabular}{llllllll}
\toprule
\textbf{Method / Dataset} &  Adult &   Magic  & Census & Covtype & Credit & Intrusion & Marketing \\
\midrule
\naive{NaiveFLAIM} &   0.8 / 29.28 &  1.64 / 2587.5 & 0.72 / 322.29	& 0.3 / 47.44	& 0.42 / 20.22	& 0.7 / 27.92	& 1.05 / 186.44\\
\nonpriv{AugFLAIM (Oracle)} &  0.62 / 23.91 &   1.18 / 62.84  & 0.61 / 44.5	& 0.26 / 45.11	& 0.21 / 16.84 & 0.48 / 17.92 & 0.87 / 38.27\\
\flaim{AugFLAIM (Private)} &  0.43 / 21.74 &    \textbf{1.07} / 
\textbf{28.9}  & \textbf{0.48} / \textbf{41.87} & \textbf{0.17} / \textbf{42.61} & \textbf{0.14} / \textbf{16.33}	& \textbf{0.32} / 15.05 & \textbf{0.64} / \textbf{30.17}
 \\ 
DistAIM             &  \textbf{0.42} / \textbf{21.41} &   1.08 / 35.04   & 0.54 / 55.19 & 0.2 / 45.94	& 0.18 / 16.9 & \textbf{0.32} / \textbf{13.19} & 0.66 / 40.57\\
\midrule
AIM  &    0.2 / 19.3 &    0.85 / 23.7 & 0.28 / 34.26 & 0.06 / 41.06 &	0.07 / 15.62 &	0.12 / 9.96	& 0.24 / 20.87\\
\bottomrule
\end{tabular}
\end{table*}

\begin{table}[t]
    \caption{DistAIM vs. FLAIM at optimal $T$, metrics with $\uparrow$ show overhead of DistAIM over FLAIM and $\downarrow$ show $\%$ improvement of DistAIM over FLAIM. Client throughput is additionally stated in megabytes (MB) for DistAIM vs. FLAIM.}
    \label{tab:overhead}
    \centering
    \begin{tabular}{lllll}
    \toprule
     \textbf{Dataset} & $T (\uparrow)$ & \textbf{Throughput} ($\uparrow$) &  \textbf{Err} $(\downarrow)$ & \textbf{NLL} ($\downarrow$) \\
    \midrule
    Adult & 2$\times$& 1300$\times$ ($80$ / 0.06) & 58$\%$ & 11$\%$ \\
    Magic & 3.2$\times$ & $1643\times$ (80 / 0.04) & $20\%$ & $14\%$ \\
    Census &          1.5x &                 64x  (29.6 / 0.46) &       $79\%$ &               $33\%$ \\
    Intrusion &          2.5x &                366x (101 / 0.28)&      $82\%$ &               $52\%$ \\
    Marketing &          2.0x &                 97x (18 / 0.19) &      $77\%$ &               $35\%$ \\
    Credit &          1.0x &                167x (93 / 0.55)&      $45\%$ &               $6\%$ \\
    Covtype &         1.25x &                 10x (7.6 / 0.76) &      $64\%$ &               $3\%$ \\
    \bottomrule
    \end{tabular}
\end{table}

\paragraph{Varying the number of global AIM rounds $(T)$.} In Figure \ref{fig:vary_t}, we vary the number of global AIM rounds and fix $\eps=1$. Additionally, we plot the setting where $T$ is chosen adaptively by budget annealing. This is shown in dashed lines for each method. First observe with DistAIM, the workload error decreases as $T$ increases. Since computing servers aggregate secret-shares across rounds, then as $T$ grows large, most clients will have been sampled and the server(s) have workload answers over most of the (central) dataset. For all FLAIM variations, the workload error usually increases when $T$ is large, since they are more sensitive to the increased amount of noise that is added. For NaiveFLAIM, this is worsened by client heterogeneity. Further, we observe that for AugFLAIM (Private), the utility matches that of DistAIM unless the choice of $T$ is very large. At $T=100$, the variance in utility is high, sometimes even worse than that of NaiveFLAIM. This is since the privacy cost scales in both the number of rounds and features, resulting in too much noise. In the case of annealing, $T$ is chosen adaptively by an early stopping condition \review{(see Appendix \ref{appendix:hyper})}. %
While annealing has good performance in central AIM, it obtains poor utility across all federated methods. 
For annealing on Adult, AugFLAIM (Private) matches AugFLAIM (Oracle) and both perform better than NaiveFLAIM. Overall, we found choosing $T$ to be small $(\leq 30)$ gives best performance for AugFLAIM and should avoid using budget annealing. 

\paragraph{Client-participation $(p)$.} In Figure \ref{fig:vary_p}, we plot the average workload error whilst varying the per-round participation rate ($p$) with $T=10, \eps=1$.
We observe clearly the gap in performance between central AIM and DistAIM is caused by the error introduced by subsampling and when $p \geq 0.5$ performance is almost matched. 
For NaiveFLAIM, we observe the performance improvement as $p$ increases is slower than other methods. 
When $p$ is large, NaiveFLAIM receives many measurements, each likely to be highly heterogeneous %
and thus the model struggles to learn consistently. For both AugFLAIM variations, we observe the utility improves with client participation but does eventually plateau. AugFLAIM (Private) consistently matches the error of DistAIM except when $p$ is large, but we note this is not a practical regime in FL.

\paragraph{Varying heterogeneity ($\beta$).} In Figure \ref{fig:vary_beta}, we plot the average workload error on the Adult dataset over client splits formed by varying the heterogeneity parameter $(\beta)$ to produce label-skew. Here, a larger $\beta$ corresponds to a more uniform partition and therefore less heterogeneity. %
In the label-skew setting, data is both skewed according to the class attribute of Adult and the number of samples, with only a few clients holding the majority of the dataset. We observe that when the skew is large ($\beta < 0.1$), all methods struggle. As $\beta$ increases and skew decreases, NaiveFLAIM performs the worst and AugFLAIM (Private) has stable error, close to that of DistAIM.

\paragraph{Varying local rounds ($s$).} 
A benefit of the federated setting is that clients can perform a number of local steps before sending all measured marginals to the server. 
However, for FLAIM methods, this incurs an extra privacy cost in the number of local rounds $(s)$. 
In Figure \ref{fig:vary_s_eps1}, we vary $s \in \{1,4\}$ and plot the workload error. %
Although there is an associated privacy cost with increasing $s$, the errors are not significantly different for small $T$. 
As we vary $T$, the associated privacy cost becomes larger and the workload error increases for methods that perform $s=4$ local updates. 
Although increasing the number of local rounds $(s)$ does not result in lower workload error, and in cases where $T$ is misspecified can give far worse performance, it is instructive to instead study the test AUC of a classification model trained on the synthetic data. %
In Figure~\ref{fig:vary_s_eps10} we see that performing more local updates can give better test AUC after fewer global rounds. 
For AugFLAIM (Private), this allows us to match the test AUC performance of DistAIM on Adult. %

\paragraph{Comparison across datasets.} Table \ref{tab:1} presents results across all datasets with client data partitioned via the clustering approach. We set $\eps=1, p = 0.1$ and $T=10$. 
For each method we present both the average workload error and the negative log-likelihood over a holdout set. 
The first is a form of training error and the second a measure of generalisation. We observe that on $5$ of the 
$7$ datasets AugFLAIM (Private) achieves the lowest negative log-likelihood and workload error. On the other datasets, AugFLAIM (private) closely matches DistAIM in utility but with lower overheads. 

\paragraph{Distributed vs. Federated AIM.} Table \ref{tab:overhead} presents the overhead of DistAIM compared to AugFLAIM (Private) including average client throughput (sent and received communication) across protocols. We set $T \in [1,200]$ that achieves lowest workload error. Observe on Adult, DistAIM requires twice as many rounds to achieve optimal error and results in a large (1300$\times$) increase in client throughput compared to AugFLAIM. However, this results in $2\times$ lower workload error and an $11\%$ improvement in NLL. \review{This highlights one of the chief advantages of FLAIM, where, for a small loss in utility, we can obtain much lower overheads. 
Furthermore, while a $2\times$ gap in workload error seems significant, we refer back to Figure~\ref{fig:vary_s_eps10}, which shows the resulting classifier has AUC that is practical for downstream tasks.} We note the overhead of DistAIM is significantly larger than FLAIM when queries in the workload have large cardinality (e.g., on Adult and Magic). \review{Datasets with much smaller feature domains still have communication overhead but it is not as significant (e.g., Covtype which has many binary features)}.

\section{Related Work}

Synthetic data has gained substantial traction due to its potential to mitigate privacy concerns and address limitations for sharing real-world data. Many generative deep learning approaches exist including GANs \cite{goodfellow2020generative}, VAEs \cite{wan2017variational} and diffusion models \cite{ho2020denoising}. Recent work has extended these synthetic data generators (SDGs) to satisfy central differential privacy (DP) \cite{torkzadehmahani2019dp, weggenmann2022dp, lyu2023differentially} and whilst results are promising for image data, performance on tabular data remains limited. Only a few generative tabular approaches exist including that of CTGAN \cite{xu2019modeling, fang2022dp}.
However, recent work has shown that private tabular approaches like DP-CTGAN often fail to provide good utility when compared to simpler models \cite{tao2021benchmarking, ganev2023understanding}. Indeed, in the central setting of DP many successful methods are based on graphical models such as PrivBayes \cite{zhang2017privbayes}, PrivSyn \cite{zhang2021privsyn}, PGM \cite{mckenna2019graphical} and AIM \cite{mckenna2022aim}. Recent work has shown the class of iterative methods \cite{liu2021iterative} are SOTA on tabular data and we choose to focus on one of these methods, AIM, in our work.
Meanwhile, research into SDGs in the federated setting remains limited. Recent federated SDGs are focused on image data such as MD-GAN \cite{hardy2019md}, FedGAN \cite{rasouli2020fedgan} and FedVAE \cite{yang2023fedvae}. %
We do not compare with these in our work as they do not support tabular data or DP.
The closest work to ours is that of Pereira et al. \cite{pereira2022secure} who propose a distributed DP version of MWEM \cite{hardt2012simple} using secure multiparty computation (SMC) to distribute noise generation across computing servers. This approach has two main drawbacks: it assumes all clients are available to secret-share their workload answers and as it is based on MWEM, obtains subpar utility. Our work is motivated to extend their approach to AIM and to study an alternative and more natural federation of these methods. We also note the concurrent work of Pentyala et al. \cite{pentyala2024caps} which extends \cite{pereira2022secure} to work with AIM via SMC.

\section{Conclusion}
\label{sec:conc}
Overall, we have shown that naively federating AIM under the challenges of FL causes a large decrease in utility when compared to the SMC-based DistAIM. 
\review{To counteract this,} we propose AugFLAIM (Private), which augments local decisions with a proxy for heterogeneity and obtains utility close to DistAIM while lowering overheads. In the future, we plan to extend our approaches to support user-level DP where clients hold multiple data items related to the same individual.

\begin{acks}
Work performed at Warwick University is supported by the UKRI Engineering and Physical Sciences Research Council (EPSRC) under grant EP/W523793/1; the UKRI Prosperity Partnership Scheme (FAIR) under EPSRC grant EP/V056883/1; and the UK NCSC Academic Centre of Excellence in Cybersecurity Research (ACE-CSR).
\end{acks}

\bibliographystyle{ACM-Reference-Format}
\bibliography{refs}

\appendix

\newpage
\section{Algorithm Details}
\subsection{AIM}
\label{appendix:aim}
\begin{algorithm}[t]
\caption{AIM \citep{mckenna2022aim}}\label{alg:aim}
\begin{algorithmic}[1]
\Input Dataset $D \in \R^{N \times d}$, workload $Q$, privacy parameters $\eps, \delta$, Maximum model size $S$
\Output Synthetic dataset $\hat D$
\State Initialise the zCDP budget $\rho_\text{total} \label{alg:aim:1} \leftarrow \rho(\eps, \delta)$ via Lemma \ref{lemma:cdp}
\State Set $\sigma^2_0 \leftarrow \frac{16d}{0.9 \cdot \rho_\text{total}}, \eps_0 \leftarrow \sqrt{8 \cdot 0.1 \cdot \rho_\text{total}/16d}$ \label{alg:aim:budget}
\State Set $t \leftarrow 0$ and $Q \leftarrow \operatorname{Completion}(Q)$
\State For each marginal $q \in Q$ initialise weights \label{alg:aim:4} %
$w_q := \sum_{r \in Q} | q \cap r |$  \label{alg:aim:weight}
\While{$\rho_\text{used} < \rho_\text{total}$}
    \State $t \leftarrow t+1$
    \If{$t = 0$}\Comment{Initialise with one-way marginals} \label{alg:aim:init}
        \State Filter current rounds workload as \label{alg:aim:init_filter}
        \begin{align*}
            Q_0 \leftarrow \{ q \in Q : |q| = 1 \}
        \end{align*} 
        \State Measure $\tilde{M}_{q}(D) \leftarrow M_{q}(D) + N(0, \sigma_0^2I), \forall q \in Q_0$ and use PGM to estimate $\hat D_0$
    \State $\rho_\text{used} \leftarrow \rho_\text{used} + \frac{d}{2\sigma_t^2}$ \label{alg:aim:init_end}
    \Else
    \State Filter the workload \label{alg:aim:filter}
    \begin{align*}
        Q_t \leftarrow \{q \in Q: \operatorname{ModelSize}(\hat D_{t-1}, q) \leq \frac{\rho_\text{used}}{\rho_\text{total}} \cdot S \}
    \end{align*}
    \State \textbf{Select} $q_t \in Q_t$ using the exponential mechanism with parameter $\eps_t$ and utility function \label{alg:aim:select}
    \begin{align*}
        \review{u(q;D)} \leftarrow w_q \cdot \left(\|M_q(D) - M_q(\hat D_{t-1})\|_1 - \sqrt{\frac{2}{\pi}} \cdot \sigma_t \cdot n_q\right)
    \end{align*}
    \State \textbf{Measure} the chosen marginal $q_t$ with the Gaussian mechanism i.e., \label{alg:aim:measure}
    \begin{align*}
        \tilde{M}_{q_t}(D) \leftarrow M_{q_t}(D) + \mathcal{N}(0, \sigma^2_t I)
    \end{align*}
    \State \textbf{Estimate} the new model via PGM \citep{mckenna2019graphical} \label{alg:aim:estimate}
    \begin{align*}
        \hat D_t \leftarrow \argmin_{p \in \mathcal{S}} \sum_{i=1}^t \frac{1}{\sigma_i} \| M_{q_i}(p) - \tilde{M}_{q_i}(D)\|_2
    \end{align*}
    \State $\rho_\text{used} \leftarrow \rho_\text{used} + \frac{\eps_t^2}{8} + \frac{1}{2\sigma_t^2}$
    \EndIf
    \If{$\|M_{q_t}(\hat D_t) - M_{q_t}(\hat D_{t-1})\|_1 \leq \sqrt{2/\pi} \cdot \sigma_t \cdot n_{q_t}$} \Comment{Budget Annealing} \label{alg:aim:anneal}
        \State Set $\sigma_{t+1} \leftarrow \sigma_{t}/2, \eps_{t+1} \leftarrow 2 \cdot \eps_{t}$
    \EndIf
    \If{$(\rho_\text{total} - \rho_\text{used}) \leq 2(1/2\sigma^2_{t+1} + \frac{1}{8}\eps^2_{t+1})$} \Comment{Final round}
        \State Set $\sigma_{t+1}^2 \leftarrow {1/(2 \cdot 0.9 \cdot (\rho_\text{total} - \rho_\text{used})}, \eps_{t+1} \leftarrow \sqrt{8 \cdot 0.1 \cdot (\rho_\text{total}-\rho_\text{used})}$ \label{alg:aim:final}
    \EndIf
\EndWhile
\State \Return $\hat D_t$
\end{algorithmic}
\end{algorithm}

The current SOTA method, and the core of our federated algorithms is AIM, introduced by \citet{mckenna2022aim}. AIM extends the main ideas of MWEM \citep{hardt2012simple} but augments the algorithm with an improved utility score function, a graphical model-based inference approach (via Private-PGM) and more efficient privacy accounting with zero-Concentrated Differential Privacy (zCDP). The full details of AIM are outlined in Algorithm \ref{alg:aim}. We refer to this algorithm as `Central AIM', to distinguish it from the distributed and federated versions we consider in the main body of the paper. 
It is important to highlight the following details:
\begin{itemize}
    \item \textbf{zCDP Budget Initialisation:} In central AIM, the number of global rounds $T$ is set adaptively via budget annealing. To begin, $T := 16 \cdot d$ where $d$ is the number of features. This is the maximum number of rounds that will occur in the case where the annealing condition is never triggered. This initialisation occurs in Line \ref{alg:aim:budget}.
    \item \textbf{Workload Filtering:} The provided workload of queries, $Q$, is extended by forming the completion of $Q$. That is to say, all lower order marginals contained within any $q \in Q$ are also added to the workload. Furthermore, for the first round the workload is filtered to contain only 1-way marginals to initialise the model. This occurs in Line \ref{alg:aim:init_filter}. In subsequent rounds, the workload is filtered to remove any queries that would force the model to grow beyond a predetermined maximum size $S$. This occurs at Line \ref{alg:aim:filter}.
    \item \textbf{Weighted Workload:} Each marginal $q \in Q$ is assigned a weight via $w_q= \sum_{r \in Q} |q \cap r|$. Thus, marginals that have high overlap with other queries in the workload are more likely to be chosen. This is computed in Line \ref{alg:aim:weight}.
    \item \textbf{Model Initialisation}: Instead of initialising the synthetic distribution uniformly over the dataset domain, the synthetic model is initialised by measuring each 1-way marginal in the workload $W$ and using PGM to estimate the initial model. This corresponds to measuring each feature's distribution once before AIM begins and occurs in Lines \ref{alg:aim:init}-\ref{alg:aim:init_end}.
    \item \textbf{Query Selection:} A marginal query is selected via the exponential mechanism with utility scores that compare the trade-off between the current error and the expected error when measured under Gaussian noise. The utility scores and selection step occur at Line \ref{alg:aim:select}.
    \item \textbf{Query Measurement:} Once a query has been chosen, it is measured under the Gaussian mechanism. This occurs at Line \ref{alg:aim:measure}. 
    \item \textbf{PGM model estimation:} The current PGM model is updated by adding the newly measured query to the set of previous measurements. The PGM model parameters are then updated by a form of mirror descent for a number of iterations. The precise details of PGM can be found in \cite{mckenna2019graphical}. This occurs at Line \ref{alg:aim:estimate}.
    \item \textbf{Budget Annealing:} At the end of every round, the difference between the measured query of the new model and that of the previous model is taken. If this change is smaller than the expected error under Gaussian noise, the noise parameters are annealed by halving the amount of noise. This occurs at Line \ref{alg:aim:anneal}. If after this annealing there is only a small amount of remaining privacy budget left, the noise parameters can instead be calibrated to perform one final round before finishing. This occurs at Line~\ref{alg:aim:final}.
\end{itemize}

\subsection{DistAIM}
\label{appendix:distaim}

\begin{algorithm}
\caption{DistAIM\label{alg:daim}}
\begin{algorithmic}[1]
\Input Participants $P_1, \dots P_k$ with local datasets $D_1, \dots, D_k$, privacy parameters $(\eps,\delta)$
\State Initialise AIM parameters as in Lines \ref{alg:aim:1}-\ref{alg:aim:4} of Algorithm \ref{alg:aim}
\For{each round $t$}
    \State Sample participants $P_t \subseteq [k]$ with probability $p$ and remove those who have already participated
    \State For each $k \in P_t$ who have not participated before, secret-share the workload answers $\{\llbracket M_q(D_k) \rrbracket : q \in W\}$ to the compute servers \citep{araki2016high}
    \State \textbf{Aggregate:} The compute servers aggregate shares of the received answers and combine with previously received shares $\llbracket M_q(\tilde{D}_t) \rrbracket := \sum_{i=1}^{t-1}\sum_{k \in P_i}\llbracket M_q(D_k) \rrbracket + \sum_{k \in P_t} \llbracket M_q(D_k) \rrbracket$
    \State \textbf{Select:} Compute servers select $q_t \in Q$ using the exponential mechanism over secret shares $\llbracket M_q(\tilde{D}_t) \rrbracket$ via Protocol 2 in \cite{pereira2022secure} with AIM utility scores
    \begin{align*}
        u(q;D_t) := w_q \cdot (\|M_q(\tilde D_t) - M_q(\hat D_{t-1})\|_1 - \sqrt{\frac{2}{\pi}} \cdot \sigma_t \cdot n_q)
    \end{align*}
    \State \textbf{Measure:} $q_t$ is measured using $\llbracket M_{q_t}(\tilde{D}_t) \rrbracket$ under a variation of Protocol 3 in \cite{pereira2022secure}, replacing Laplace noise with Gaussian to produce $\tilde M_{q_t}(\tilde D_t)$
    \State \textbf{Estimate} the new model via PGM using the received noisy measurements e.g.
    \begin{align*}
        \hat D_t \leftarrow \argmin_{p \in \mathcal{S}} \sum_{i=1}^t \frac{1}{\sigma_i} \| M_{q_i}(p) - \tilde M_{q_i}(\tilde D_i)\|_2
    \end{align*}
\EndFor
\end{algorithmic}
\end{algorithm}

We describe in full detail the DistAIM algorithm introduced in Section \ref{sec:distaim} and outlined in Algorithm \ref{alg:daim}. The algorithm can be seen as an adaptation of \cite{pereira2022secure} who propose a secure multi-party computation (SMC) approach for distributing MWEM. The key differences are that we replace MWEM with AIM and consider a distributed setting where not all participants are available at any particular round. The approach relies on participants secret-sharing their query answers to compute servers who then perform a number of SMC operations over these shares to train the model. The resulting algorithm is identical to AIM in outline but has a few subtle differences:
\begin{itemize}
    \item \textbf{Secret Sharing:} Participants must secret-share the required quantities to train AIM. In \cite{pereira2022secure}, it is assumed that the full workload answers $\{\llbracket M_q(D) \rrbracket : q \in Q\}$ have already been secret-shared between a number of compute servers. In DistAIM, we assume that clients sampled to participate at a particular round contribute their secret-shared workload answers $\{\llbracket M_q(D_k) \rrbracket : q \in Q\}$ which are aggregated with the shares of current and past participants from previous rounds. Thus, as the number of global rounds $T$ increases, the secret-shared answers approach that of the central dataset. We assume the same SMC framework as \cite{pereira2022secure} which is a 3-party scheme based on \cite{araki2016high}.
     
    \item \textbf{Client participation:} At each round only a subset of the participants are available to join the AIM round. In expectation $pK$ clients will contribute their local marginals $\llbracket M_q(D_k) \rrbracket$ in the form of secret-shares. Compared to the central setting, DistAIM incurs additional error due to this subsampling. 

    \item \textbf{Select step:} One key obstacle in extending AIM to a distributed setting is the exponential mechanism. Since each client holds a local dataset $D_k$, they cannot share their data with the central server. Instead the quality functions $u(q;D)$ must be computed in a distributed manner between the compute servers who hold shares of the workload answers.

    \item \textbf{Measure step:} Once the marginal $q_t$ has been selected by a secure exponential mechanism, it must be measured. As \cite{pereira2022secure} utilise MWEM, they measure queries under Laplace noise which can be easily generated in an SMC setting. AIM instead uses Gaussian noise and this is also what we use in DistAIM. In practice, one can also implement this under SMC e.g., using the Box-Muller method.
\end{itemize}

\subsection{FLAIM: Privacy Guarantees}
\label{appendix:flaim}
In this section, we present and prove the privacy guarantees of the FLAIM approach. 
For completeness, we provide additional definitions and results, starting with the definition of $(\eps,\delta)$-Differential Privacy.
\begin{definition}[Differential Privacy~\citep{dwork2014foundations}]
A randomised algorithm $\mathcal{M}\colon \mathcal{D} \rightarrow \mathcal{R}$ satisfies $(\varepsilon, \delta)$-differential privacy if for any two adjacent datasets $D, D^\prime \in \mathcal{D}$ and any subset of outputs $S \subseteq \mathcal{R}$, $$\prob(\mathcal{M}(D) \in S) \leq e^\eps \prob(\mathcal{M}(D^\prime) \in S) + \delta.$$
\end{definition}
While we work using the more convenient formulation of $\rho$-zCDP (Definition \ref{def:zcdp}), it is common to translate this guarantee to the more interpretable $(\eps,\delta)$-DP setting via the following lemma.

\begin{lemma}[zCDP to DP \citep{canonne2020discrete}]
\label{lemma:cdp}
If a mechanism $\mathcal{M}$ satisfies $\rho$-zCDP then it satisfies $(\eps,\delta)$-DP for all $\eps > 0$ with $$\delta = \min_{\alpha > 1} \frac{\exp((\alpha-1)(\alpha\rho - \eps))}{\alpha - 1} \left(1-\frac{1}{\alpha}\right)^\alpha$$
\end{lemma}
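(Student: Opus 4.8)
The plan is to prove the conversion by the standard reduction of $(\eps,\delta)$-DP to a moment bound on the \emph{privacy-loss random variable}, and then to extract the precise constant $\tfrac{1}{\alpha-1}(1-1/\alpha)^{\alpha}$ from a sharp pointwise inequality. Fix neighbouring $D,D'$ and write $P := \mathcal{M}(D)$ and $Q := \mathcal{M}(D')$ for the output distributions, with privacy loss $Z := \log\frac{dP}{dQ}$ viewed as a random variable under $P$. The first step is to recall that $(\eps,\delta)$-DP is equivalent to the hockey-stick bound $\sup_{S}\bigl(P(S) - e^{\eps}Q(S)\bigr) \le \delta$ holding for both orderings of $D,D'$. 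The maximising event is $S^\star = \{dP/dQ > e^{\eps}\} = \{Z > \eps\}$, so on that event $1 - e^{\eps}\tfrac{dQ}{dP} = 1 - e^{\eps - Z} > 0$, and the condition reduces to showing $\mathbb{E}_{P}\bigl[(1 - e^{\eps - Z})_+\bigr] \le \delta$. Because the adjacency relation is symmetric, the $\rho$-zCDP hypothesis applies equally to the pair $(D',D)$, so bounding this single functional suffices to cover both directions.

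The second step links the Rényi hypothesis to the moment generating function of $Z$. Expanding the order-$\alpha$ Rényi divergence gives the identity $\mathbb{E}_{P}\bigl[e^{(\alpha-1)Z}\bigr] = \exp\!\bigl((\alpha-1)\,D_\alpha(P\,\|\,Q)\bigr)$, and the assumption $D_\alpha(P\,\|\,Q) \le \rho\alpha$ therefore yields $\mathbb{E}_{P}\bigl[e^{(\alpha-1)Z}\bigr] \le \exp\!\bigl((\alpha-1)\alpha\rho\bigr)$ for every $\alpha>1$. This is the only place the zCDP hypothesis enters.

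The crux is a tight pointwise bound on the integrand: for every $z$ and every $\alpha>1$, $(1-e^{\eps-z})_+ \le \tfrac{1}{\alpha-1}\bigl(1-\tfrac1\alpha\bigr)^{\alpha}\,e^{(\alpha-1)(z-\eps)}$. For $z \le \eps$ the left side vanishes; for $z>\eps$, substituting $u := z-\eps>0$ reduces the claim to $\sup_{u>0}\bigl(e^{-(\alpha-1)u} - e^{-\alpha u}\bigr) \le \tfrac{1}{\alpha-1}(1-\tfrac1\alpha)^{\alpha}$. Differentiating $h(u) = e^{-(\alpha-1)u}-e^{-\alpha u}$ gives the stationary point $e^{-u^\star} = 1-\tfrac1\alpha$, at which $h(u^\star) = \tfrac1\alpha\bigl(1-\tfrac1\alpha\bigr)^{\alpha-1} = \tfrac{1}{\alpha-1}\bigl(1-\tfrac1\alpha\bigr)^{\alpha}$, matching the claimed constant exactly. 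Taking $\mathbb{E}_P[\cdot]$ of the pointwise inequality and inserting the moment bound from the second step gives $\mathbb{E}_P[(1-e^{\eps-Z})_+] \le \tfrac{1}{\alpha-1}(1-\tfrac1\alpha)^{\alpha}\,e^{(\alpha-1)(\alpha\rho-\eps)}$; minimising the right-hand side over $\alpha>1$ yields the stated $\delta$.

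The main obstacle is not the calculus but getting the reduction in the first step exactly right: one must justify that the optimal distinguishing event is precisely $\{Z>\eps\}$ (handling the part of the support where $dP/dQ$ is infinite, and the absolute-continuity technicalities that make $Z$ well defined $P$-almost surely), and confirm that bounding the single functional $\mathbb{E}_P[(1-e^{\eps-Z})_+]$ by $\delta$ genuinely recovers the two-sided $(\eps,\delta)$-DP guarantee. The sharp pointwise inequality is what upgrades the naive Chernoff estimate $\delta \le \tfrac1\alpha\,e^{(\alpha-1)(\alpha\rho-\eps)}$ to the tight constant, so verifying that $h$ attains its maximum at an interior point $u^\star>0$ (rather than at the boundary) is the one calculation I would carry out with care.
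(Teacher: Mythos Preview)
Your proof is correct and complete; the hockey-stick reduction, the moment identity for R\'enyi divergence, and the sharp pointwise bound via the interior maximiser $e^{-u^\star}=1-\tfrac1\alpha$ all check out and yield exactly the stated constant. Note, however, that the paper does not supply its own proof of this lemma: it is quoted directly from \cite{canonne2020discrete} as a known conversion, so there is no in-paper argument to compare against. Your derivation is in fact the argument given in that reference.
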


We restate the privacy guarantees of FLAIM and its variations.

\begin{lemma}[Lemma~\ref{lem:dp} restated]
    For any number of global rounds $T$ and local rounds $s$, FLAIM satisfies $(\eps,\delta)$-DP , under Gaussian budget allocation $r \in (0,1)$ by computing $\rho$ according to Lemma \ref{lemma:cdp}, and setting
    \begin{align*}
        \sigma_t &= \begin{cases}
          \sqrt{\frac{Ts + d}{2\cdot r \cdot \rho}}, & \text{\naive{Naive} or \nonpriv{Oracle}} \\
          \sqrt{\frac{T(s+d)}{2\cdot r \cdot \rho}}, & \text{\flaim{Private}}
        \end{cases}, \; \eps_t = \sqrt{\frac{8 \cdot (1-r) \cdot \rho}{Ts}}
    \end{align*}
    For AugFLAIM methods, the exponential mechanism is applied with sensitivity $\Delta := \max_q 2w_q$.
\end{lemma}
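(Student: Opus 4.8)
The plan is to carry the entire accounting in $\rho$-zCDP, decompose FLAIM into its constituent private primitives, bound each by a zCDP guarantee, compose over all rounds and local steps, and only at the end invoke Lemma~\ref{lemma:cdp} to land on $(\eps,\delta)$-DP. The first step is to observe that within each of the $T$ global rounds and each of the $s$ local steps, a client's raw data $D_k$ is touched in exactly two privacy-relevant places: the \textbf{Select} step, which is one call to the exponential mechanism on the utility scores $u(q;D_k)$, and the \textbf{Measure} step, whose noisy marginal is an instance of the Gaussian mechanism with $L_2$-sensitivity $1$ --- and, since the raw marginals are combined under SecAgg with server-added noise, the per-record guarantee is the same whether one views the noise as local or as applied to the aggregate. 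For \naive{Naive} and \nonpriv{Oracle} this yields $Ts$ exponential-mechanism calls and $Ts+d$ Gaussian calls, the extra $d$ accounting for the one-time model initialization over all $1$-way marginals inherited from central AIM (Algorithm~\ref{alg:aim}). For \flaim{Private}, every round additionally releases all $d$ one-way marginals $\tilde M_{\{j\}}(D)$ used to build the heterogeneity proxy, so there are $Ts$ exponential calls and $T(s+d)$ Gaussian calls.

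Next I would pin down the sensitivities. Each Gaussian release --- a measured $k$-way marginal or an aggregated $1$-way histogram --- has $L_2$-sensitivity $1$ (one record moves one cell of one histogram by $1$), hence is $\tfrac{1}{2\sigma_t^2}$-zCDP. For the selection step: in \naive{NaiveFLAIM} the score $w_q(\|M_q(D_k)-M_q(\hat D)\|_1-\sqrt{2/\pi}\,\sigma_t n_q)$ has $L_1$-sensitivity $w_q$, exactly as in central AIM, since only $\|M_q(\cdot)-M_q(\hat D)\|_1$ is data-dependent and it moves by at most $1$. For the AugFLAIM scores one subtracts $\tilde\tau_k(q)$; in \flaim{Private}, $\tfrac{1}{|q|}\sum_{j\in q}\|M_{\{j\}}(D_k)-M_{\{j\}}(\hat D^{(t-1)})\|_1$ has $L_1$-sensitivity at most $1$ in $D_k$ --- a single record moves one cell of each $M_{\{j\}}(D_k)$, changing each summand by at most $1$, and the average of $|q|$ such terms is again at most $1$ --- and since this term and the error term both depend on $D_k$ and can move in opposite directions under the same record change, the triangle inequality gives a total $L_1$-sensitivity of at most $2w_q$; the same bound holds (conservatively) for \nonpriv{Oracle}. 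So $\Delta=\max_q 2w_q$ for AugFLAIM, and each exponential call is $\tfrac{\eps_t^2}{8}$-zCDP.

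With the primitives bounded, composition finishes the argument. Because no amplification is claimed from the per-round participation probability $p$, in the worst case a single record influences every round it could appear in, so I compose over all $Ts$ local steps (exponential mechanism) and all $Ts+d$ (resp.\ $T(s+d)$) Gaussian releases. By zCDP sequential composition the total cost is $\tfrac{Ts\,\eps_t^2}{8}+(Ts+d)\tfrac{1}{2\sigma_t^2}$ for \naive{Naive}/\nonpriv{Oracle} and $\tfrac{Ts\,\eps_t^2}{8}+T(s+d)\tfrac{1}{2\sigma_t^2}$ for \flaim{Private}. Substituting the stated $\sigma_t$ makes the Gaussian part exactly $r\rho$ in either case, and substituting $\eps_t=\sqrt{8(1-r)\rho/(Ts)}$ makes the exponential part exactly $(1-r)\rho$, for a total of $\rho$-zCDP; Lemma~\ref{lemma:cdp} then converts this to the target $(\eps,\delta)$-DP.

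The main obstacle is not the arithmetic but justifying that the $s$ on-device local steps compose as cleanly as $s$ independent \textbf{Select}/\textbf{Measure} rounds: each local PGM model $\hat D_k^{(t-1)+l}$ is a post-processing of the client's own earlier noisy measurements and feeds back into the next local selection, so one must argue that this adaptive chain is fully captured by composition over the primitives that genuinely read $D_k$, and that combining raw marginals under SecAgg with server-side noise is privacy-equivalent (per record) to central Gaussian noise on the aggregate. A secondary subtlety is the proxy sensitivity: a single record perturbs all $d$ one-way histograms of $D_k$ simultaneously, yet one must show the average over the $|q|$ features appearing in a query still has $L_1$-sensitivity at most $1$ so that the $2w_q$ bound is correct.
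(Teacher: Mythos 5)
Your proof is correct and follows essentially the same route as the paper's: count $Ts$ exponential-mechanism invocations and $Ts+d$ (resp.\ $T(s+d)$) Gaussian releases, note the sensitivity doubles to $2w_q$ because the utility score reads $D_k$ twice, compose additively in zCDP to get $r\rho + (1-r)\rho = \rho$, and convert via Lemma~\ref{lemma:cdp}. The paper's own proof is a terser sketch that defers to the central AIM accounting; your version simply spells out the per-primitive sensitivities and the composition explicitly.
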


\begin{proof}
    For NaiveAIM, the result follows almost directly from AIM, since $T$ rounds in the latter correspond to $T \cdot s$ in the former. We then apply the existing privacy bounds for AIM. Similarly, for AugFLAIM (Private), the 1-way marginals of every feature are included in the computation, thus increasing the number of measured marginals under Gaussian noise to $T \cdot (s + d)$. In all variations, the exponential mechanism is only applied once for each local round and thus $Ts$ times in total. For AugFLAIM, the augmented utility scores $u(q;D_k)$ lead to a doubling of the sensitivity compared to AIM, since $M_q(D_k)$ is used twice in the utility score and thus $\Delta := 2\cdot \max_q w_q$.
\end{proof}

\begin{table*}[t]
  \caption{Datasets\label{tab:datasets} - Those marked * have been subsampled for computational reasons.}
    \centering
    \begin{tabular}{cccc}
      \toprule
      \textbf{Dataset} & \textbf{\# of training samples} & \textbf{\# of features} & \textbf{\# of classes}\\
      \midrule
      Adult \citep{adult} & 43,598 & 14 & 2 \\
      Credit \citep{credit} & 284,807 & 30 & 2 \\
      Covtype* \citep{misc_covertype_31} & 116,203 & 55 & 7 \\
      Census* \citep{SDV} & 89,786  & 41 & 2 \\
      Intrusion*\cite{intrusion} & 197,608 & 40 & 5 \\
      Marketing \citep{misc_bank_marketing_222} & 41,188 & 21 & 2 \\
      Magic \citep{magic} & 17,118 & 11 & 2 \\
      SynthFS (see Appendix \ref{appendix:synthfs}) & 45,000 & 10 & N/A \\
      \bottomrule
  \end{tabular}
  \end{table*}
\section{Experimental Setup}

\subsection{Datasets}
\label{appendix:datasets}
In our experiments we use a range of tabular datasets from the UCI repository \citep{Dua:2019} and others available directly from the Synthetic Data Vault (SDV) package \citep{SDV}. Additionally, we use one synthetic dataset that we construct ourselves. A summary of all datasets in terms of the number of training samples, features and classes is detailed in Table \ref{tab:datasets}. All datasets are split into a train and test set with $90\%$ forming the train set. From this, we form clients' local datasets via a partitioning method (see Appendix \ref{appendix:hetero}). In more detail:
\begin{itemize}
    \item \textbf{Adult} --- A census dataset that contains information about adults and their occupations. The goal of the dataset is to predict the binary feature of whether their income is greater than \$50,000. The training set we use contains 43,598 training samples and 14 features.
    \item \textbf{Credit} --- A credit card fraud detection dataset available from Kaggle. The goal is to predict whether a transaction is fraudulent. 
    \item \textbf{Covtype} --- A forest cover type prediction dataset available from the UCI repository. We subsampled the dataset for computational reasons. Our train and test sets were formed from $20\%$ of the original dataset.
    \item \textbf{Census} --- US census dataset available through the synthetic data vault (SDV) package. This dataset was subsampled for computational reasons. Our training and test sets were formed from $30\%$ of the original dataset.
    \item \textbf{Intrusion} --- The DARPA network intrusion detection dataset containing network logs, available through the synthetic data vault (SDV) package. This was subsampled for computational reasons. Our training and test sets were formed from $40\%$ of the original dataset.
    \item \textbf{Marketing} --- A bank marketing dataset available from the UCI repository. The goal is to predict whether a client will subscribe to a term deposit.
    \item \textbf{Magic} --- A dataset on imaging measurements from a telescope. The classification task is to predict whether or not the measurements are signal or background noise. The training set we use contains 17,118 samples and 11 features.
    \item \textbf{SynthFS} --- A synthetic dataset formed from sampling features from a Gaussian distribution with different means. The precise construction is detailed in Appendix \ref{appendix:synthfs}.  In our experiments, the training set contains 45,000 samples with 10 features.
\end{itemize}
All continuous features are binned uniformly between the minimum and maximum which we assume to be public knowledge. We discretize our features with $32$ bins, although experiments varying this size presented no significant change in utility. This follows the pre-processing steps taken by prior work \citep{mckenna2022aim, aydore2021differentially}.

\subsubsection{SynthFS}
\label{appendix:synthfs}
\begin{figure*}[t!]
	\centering
	\subfloat[$\beta = 5$]{%
		\includegraphics[width=0.25\linewidth]{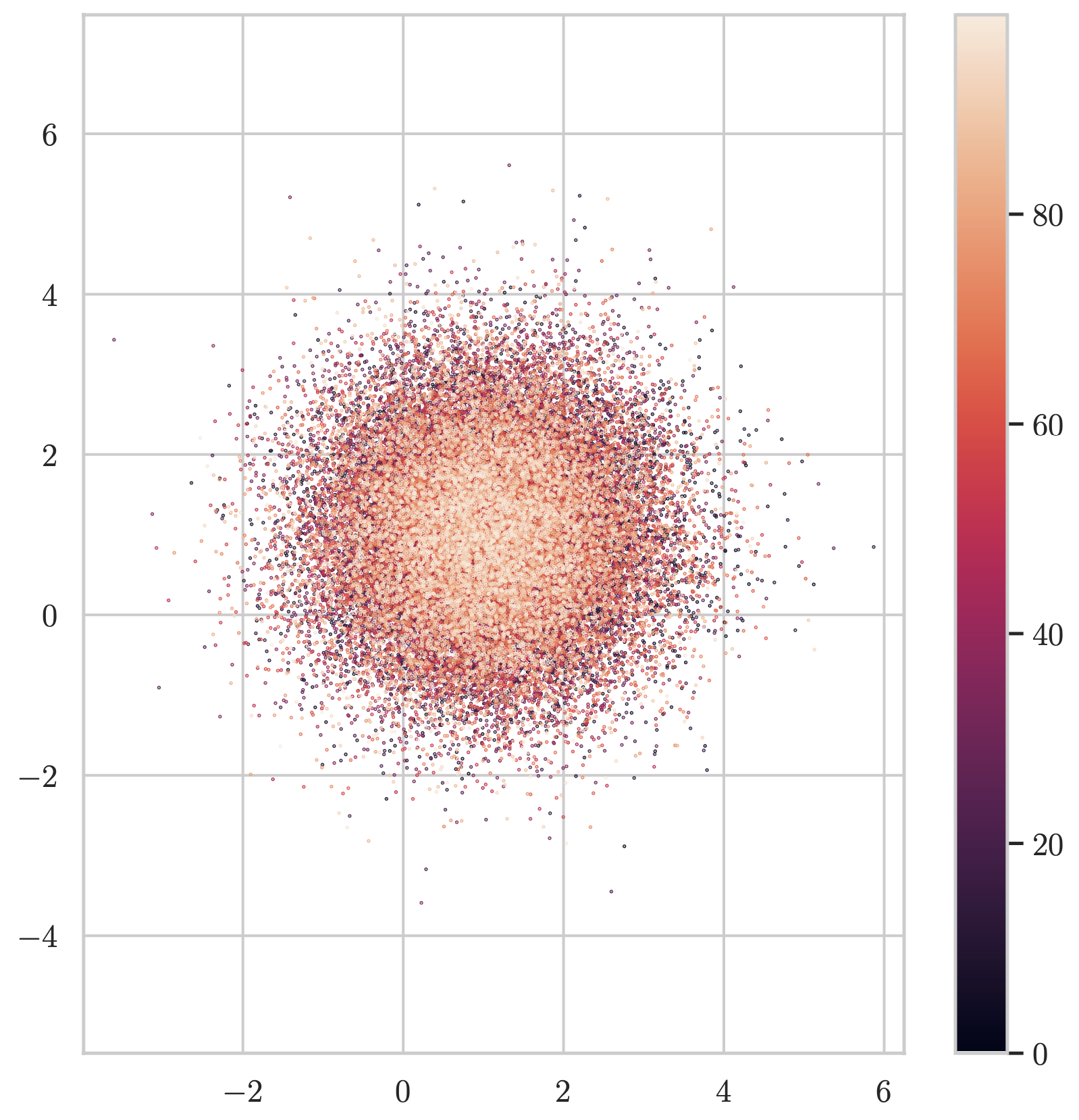}}
	\subfloat[$\beta = 3$]{%
		\includegraphics[width=0.25\linewidth]{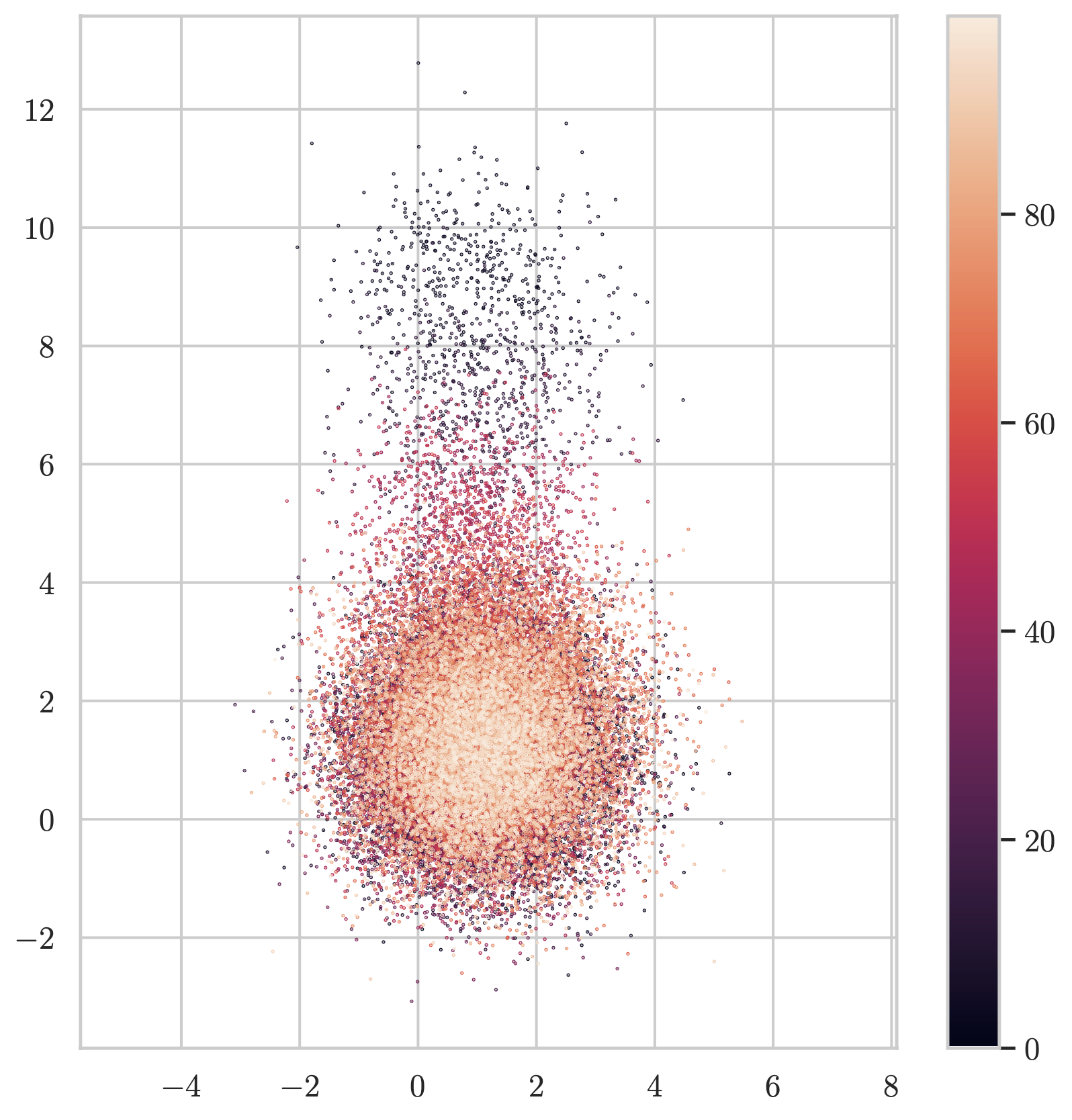}}
	\subfloat[$\beta = 2$]{%
		\includegraphics[width=0.25\linewidth]{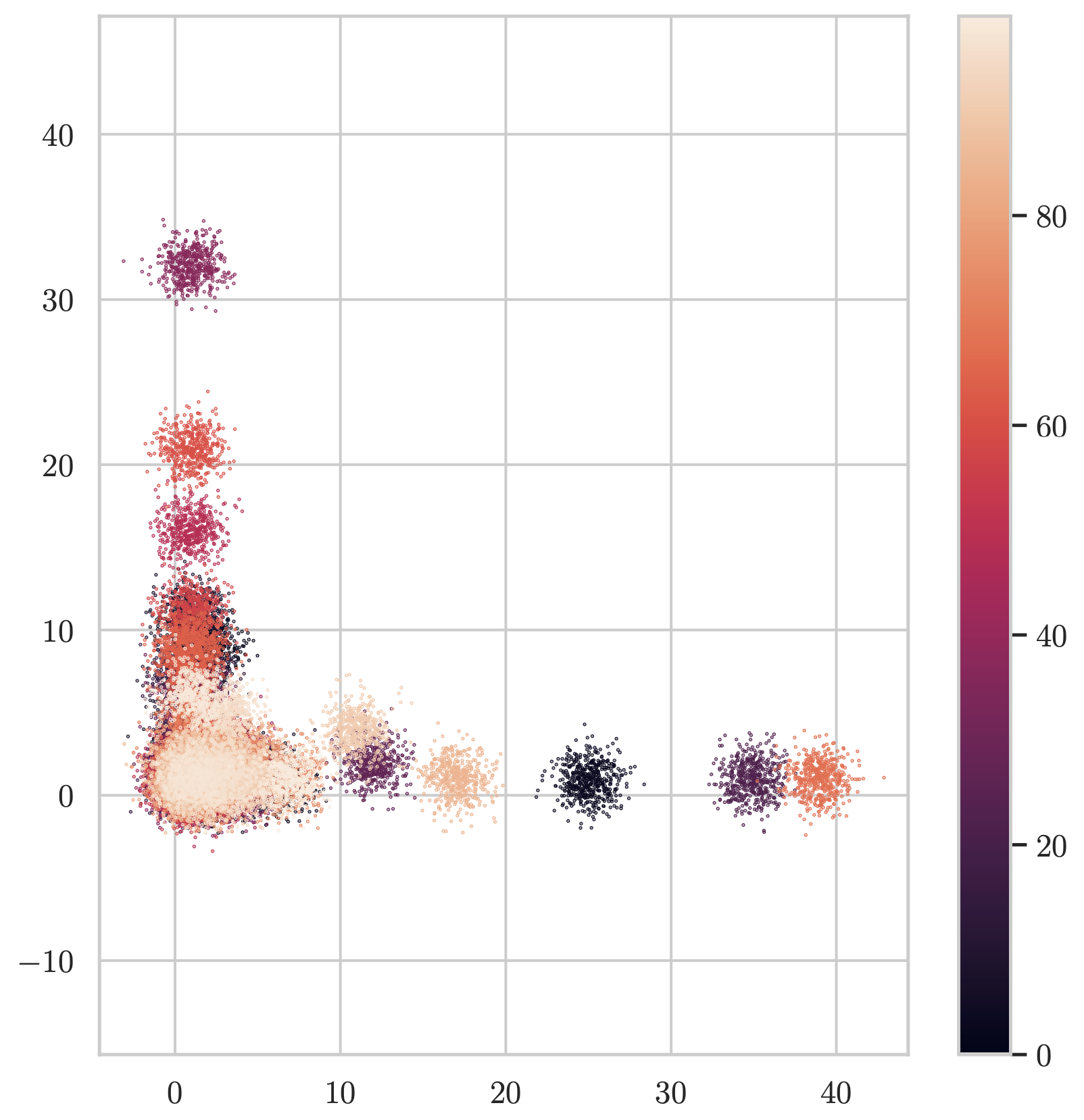}}
	\subfloat[$\beta =1 $]{%
		\includegraphics[width=0.25\linewidth]{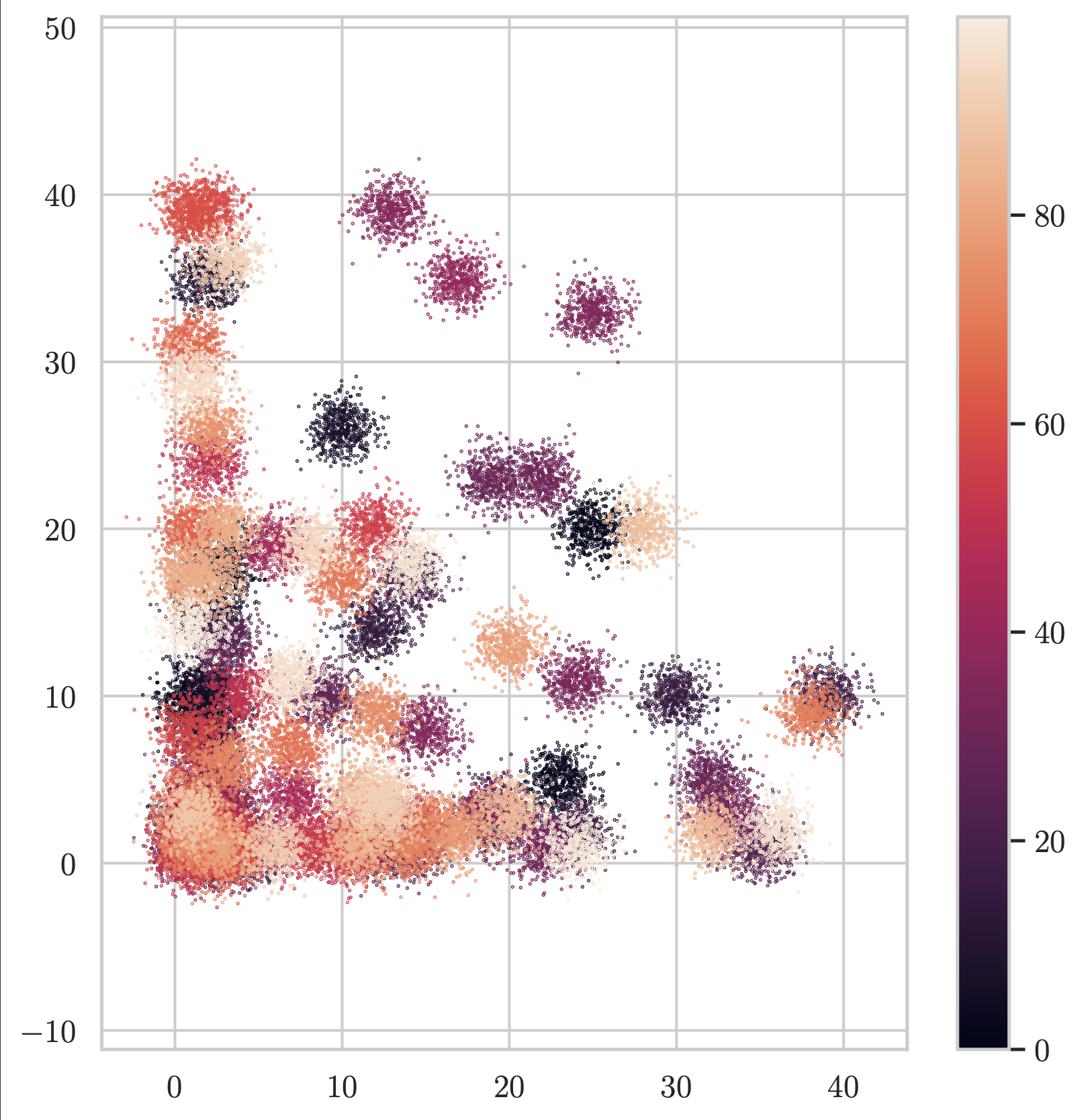}}
	\caption{SynthFS: Synthetic dataset constructed with feature skew, varying $\beta \in \{1,2,3,5\}$}
	\label{fig:synthfs} 
\end{figure*}
In order to simulate feature-skew in an ideal setting for FLAIM, we construct a synthetic dataset that we denote SynthFS.
To create SynthFS, we draw independent features from a Gaussian distribution where the mean is chosen randomly from a Zipfian distribution whose parameter $\beta$ controls the skew.
This is done in the following manner:
\begin{itemize}
    \item For each client $k \in [K]$ and feature $m \in [d]$ sample mean $\mu_m^k \sim \operatorname{Zipf}(\beta, n_\text{zipf})$
    \item For each feature $m \in [d]$, sample $n/K$ examples for client $k$ from $N(\mu_m^k, 1)$
\end{itemize}
In our experiments we set $n = 50,000$ such that for $K=100$ each client is assigned 500 samples. In order to form a test set we sample 10\% from the dataset and assign the rest to clients. We fix $d=10$ and $n_\text{zipf} = 40$ in all constructions. We highlight this process for $\beta \in \{1, 2, 3, 5\}$ in Figure \ref{fig:synthfs}, with $d=2$ features for visualization purposes only. By increasing $\beta$, we decrease the skew of the means being sampled from the Zipf distribution. Hence, for larger $\beta$ values, each client's features are likely to be drawn from the same Gaussian and there is no heterogeneity. Decreasing $\beta$ increases the skew of client means and each feature is likely to be drawn from very different Gaussian distributions, as shown when $\beta = 1$.

\subsection{Heterogeneity: Non-IID Client Partitions}
\label{appendix:hetero}
In order to simulate heterogeneity on our benchmark datasets, we take one of the tabular datasets outlined in Appendix~\ref{appendix:datasets} and form partitions for each client. The aim is to create client datasets that exhibit strong data heterogeneity by varying the number of samples and inducing feature-skew. We do this in two ways:
\begin{itemize}
    \item \textbf{Clustering Approach} --- In the majority of our experiments, we form client partitions via dimensionality reduction using UMAP \citep{mcinnes2018umap}. An example of this process is shown in Figure \ref{fig:cluster} for the Adult dataset. Figure \ref{fig:umap} shows a UMAP embedding of the training dataset in two-dimensions where each client partition (cluster) is highlighted a different color. To form these clusters we simply use $K$-means where $K=100$ is the total number of clients we require. In Figures~\ref{fig:age}-\ref{fig:income}, we display the same embedding but colored based on different feature values for age, hours worked per-week and income $>$ 50k. We observe, for instance, the examples that are largest in age are concentrated around $x=10$ while those who work more hours are concentrated around $y=-7$. Thus clients that have datasets formed from clusters in the area of $(10,-7)$ will have significant feature-skew with a bias towards older adults who work more hours. These features have been picked at random and other features in the dataset have similar skew properties. The embedding is used only to map the original data to clients, and the raw data is used when training AIM models.
    \item \textbf{Label-skew Approach} --- While the clustering approach works well to form non-IID client partitions, there is no simple parameter to vary the heterogeneity of the partitions. In experiments where we wish to vary heterogeneity, we follow the approach outlined by \cite{li2022federated}. For each value the class variable can take, we sample the distribution $p_C \sim \operatorname{Dirichlet}(\beta) \subseteq [0,1]^K$ and assign examples with class value $C$ to the clients using this distribution. This produces client partitions that are skewed via the class variable of the dataset, where a larger $\beta$ decreases the skew and reduces heterogeneity.
\end{itemize}

Table \ref{tab:non_iid} presents the average heterogeneity for a fixed workload of queries across the Adult and Magic dataset with different partition methods for $K=100$ clients. We look at the following methods: IID sampling, clustering approach, label-skew with $\beta=0.1$ (large-skew) and label-skew with $\beta=0.8$ (small-skew). Observe in all cases that our non-IID methods have higher heterogeneity than IID sampling. Specifically, the clustering approach works well to induce heterogeneity and can result in twice as much skew across the workload. Note also that increasing $\beta$ from $0.1$ to $0.8$ decreases average heterogeneity and at $\beta=0.8$, the skew is close to IID sampling. This confirms that simulating client partitions in this way is useful for experiments where we wish to vary heterogeneity, since we can vary $\beta$ accordingly and $\beta \in (0, 1]$ in experiments is well-chosen.

\begin{figure*}[t!]
	\centering
	\subfloat[Client partitions\label{fig:umap}]{%
		\includegraphics[width=0.25\linewidth]{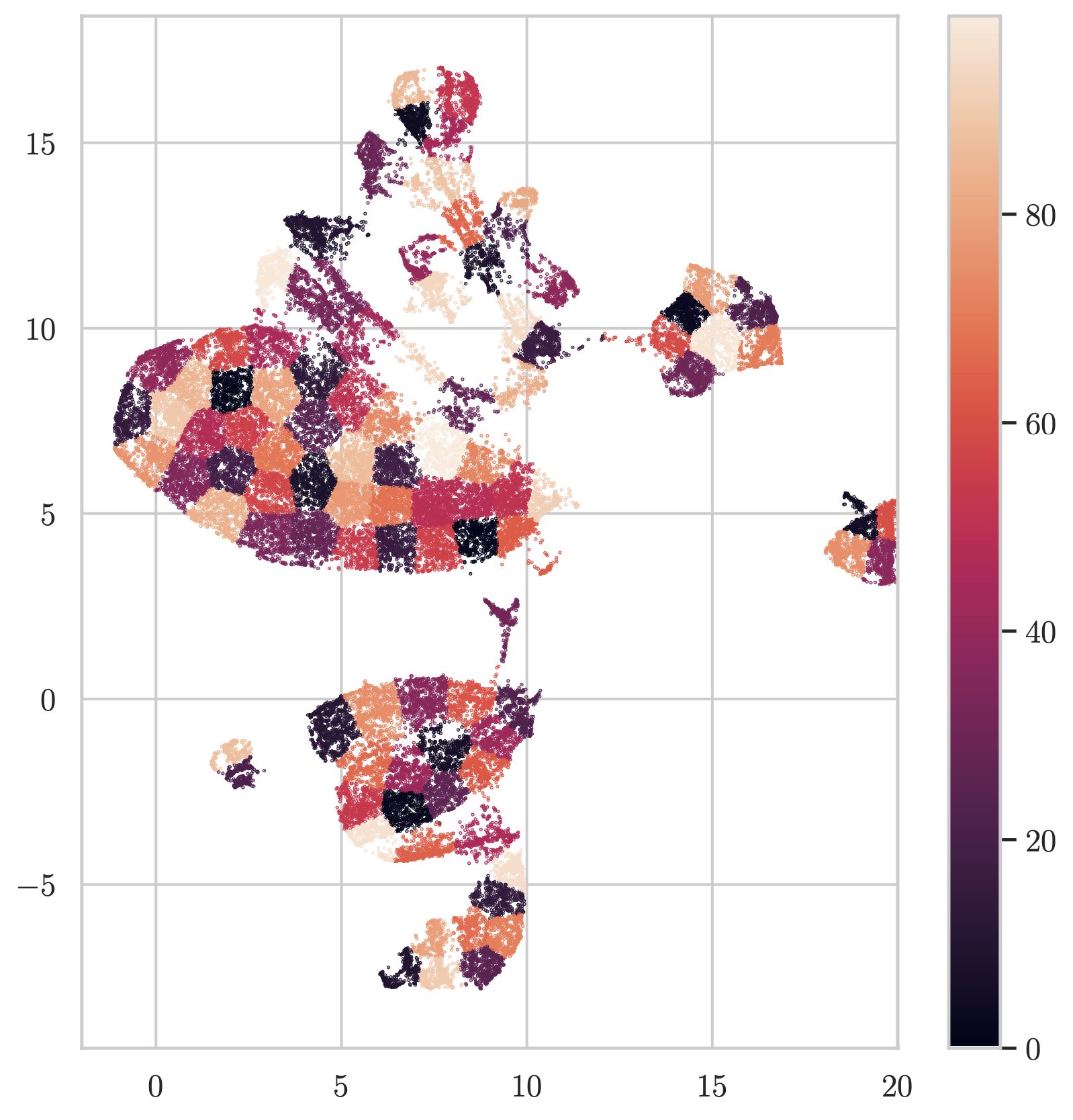}}
	\subfloat[Age\label{fig:age}]{%
		\includegraphics[width=0.25\linewidth]{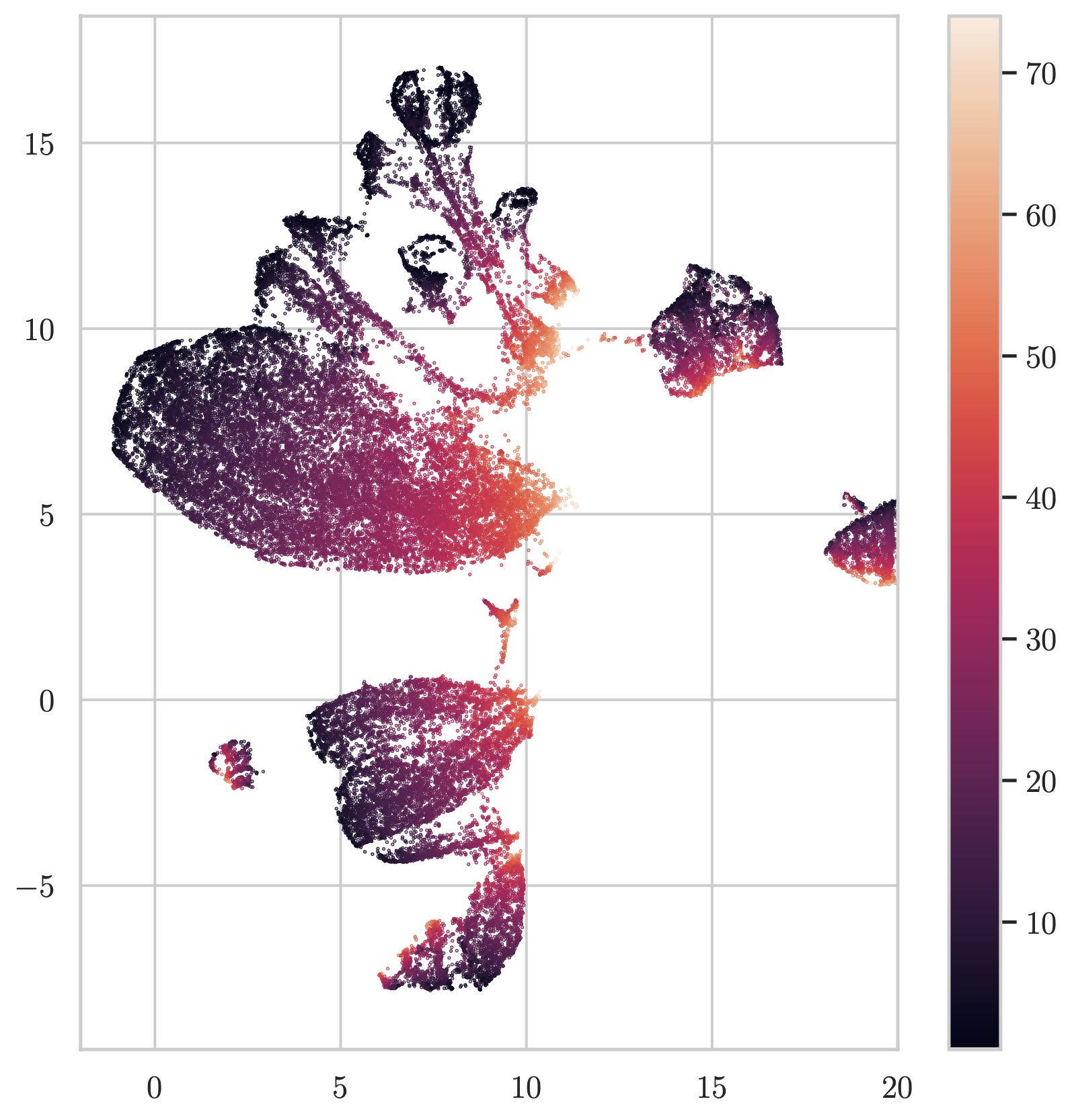}}
	\subfloat[Hours per-week]{%
		\includegraphics[width=0.25\linewidth]{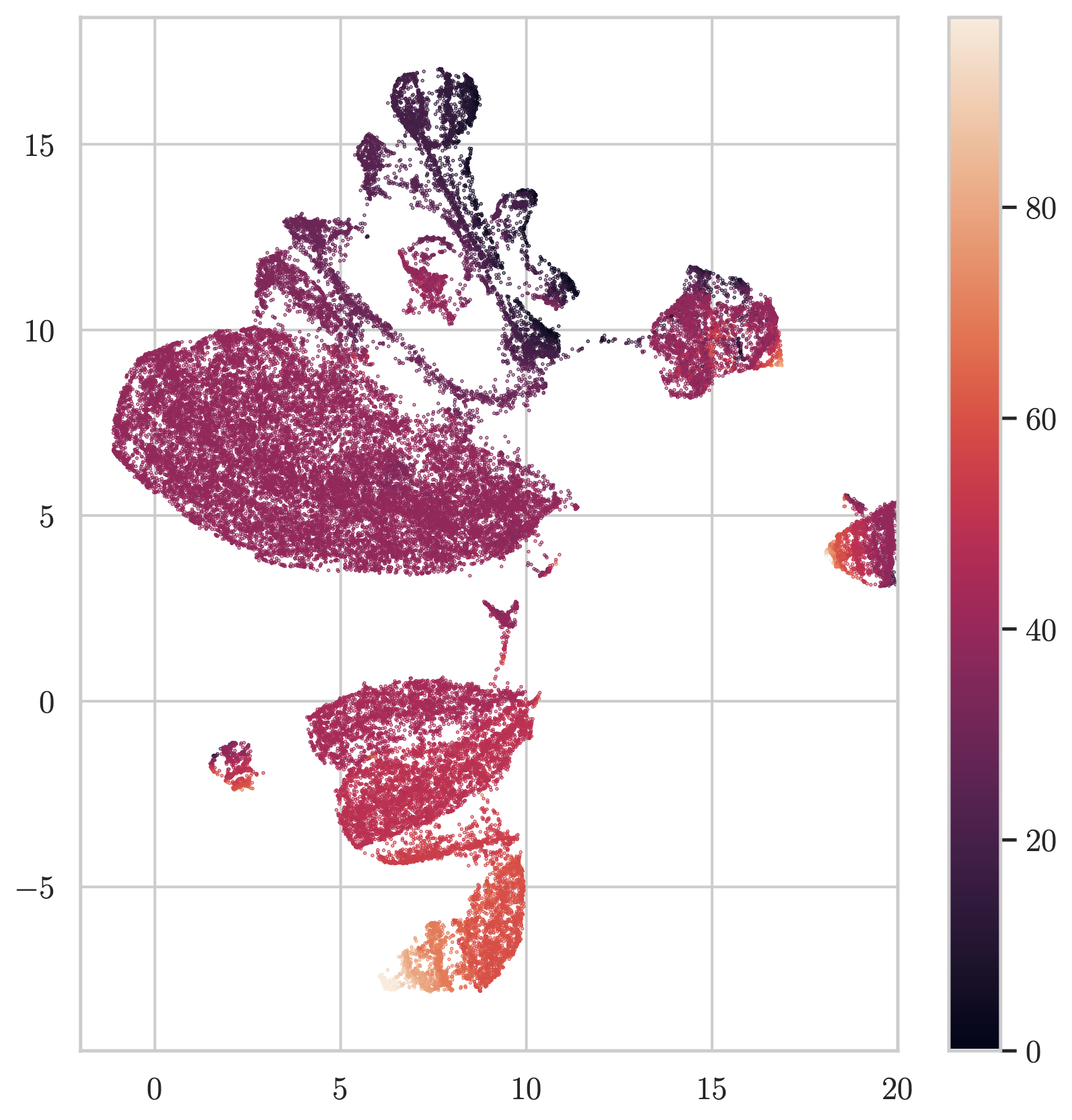}}
	\subfloat[Income > 50K\label{fig:income}]{%
	\includegraphics[width=0.25\linewidth]{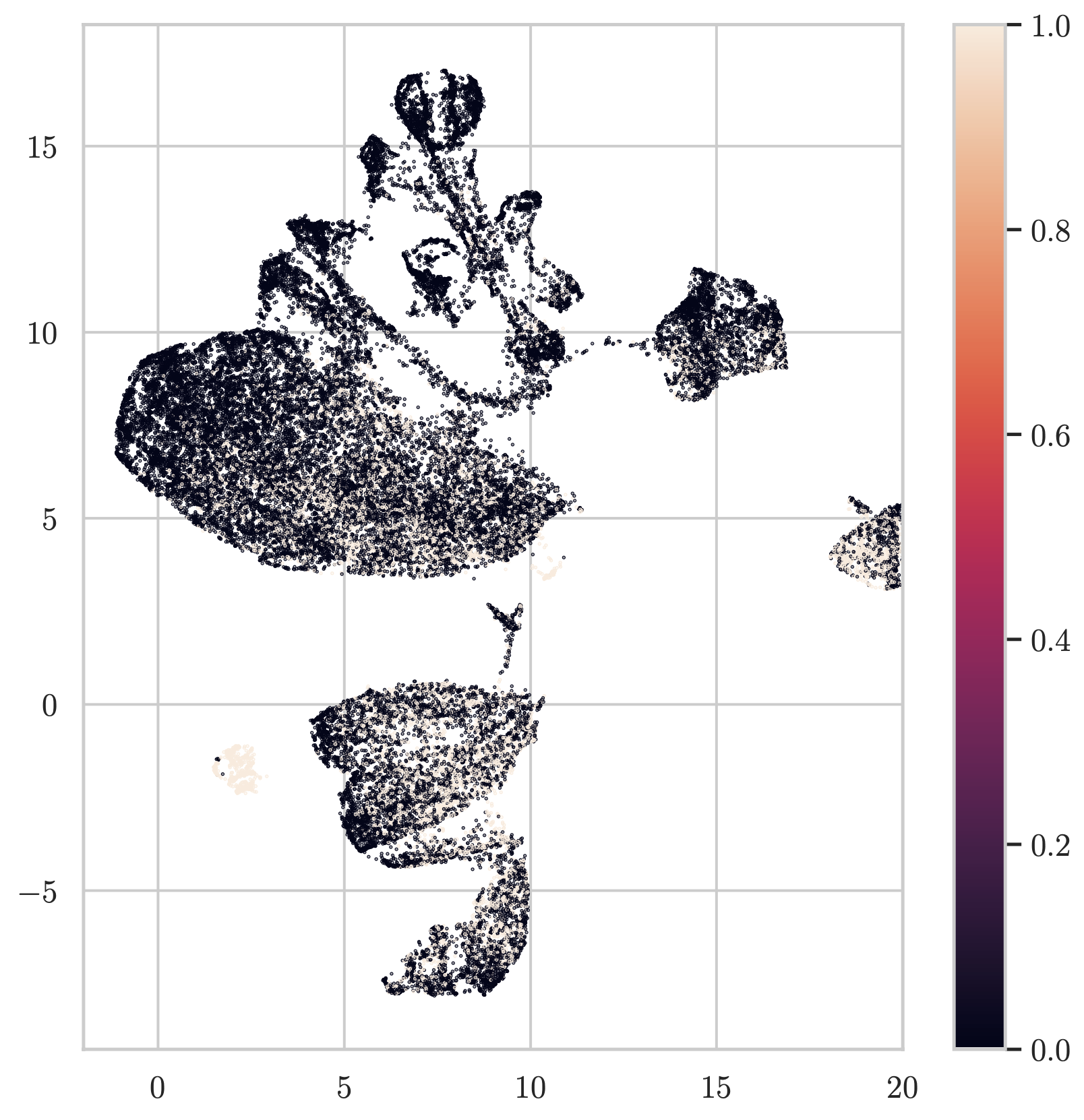}}
	\caption{Clustering approach to form non-IID splits on Adult dataset, $K=100$ clients. All plots show the same embedding formed from UMAP, with Figure \ref{fig:umap} showing each client's local dataset formed by clustering in the embedding space. Figures \ref{fig:age}-\ref{fig:income} show the same embedding but colored based on three features: age, hours worked per-week and income > 50k. The embedding is used only to map examples to clients, and AIM models are trained on the raw data.}
	\label{fig:cluster} 
\end{figure*}
\begin{table}[t!]
	\centering
	\caption{Average heterogeneity over a workload of uniform queries computed as $\frac{1}{K}\sum_k \sum_{q \in Q} \tau_k(q)$ whilst varying different client partition methods with $K=100$ total clients.}
    \small
	\begin{tabular}{ llllll }
		\hline
		\textbf{Dataset / Partition} & IID & Clustering & Label-skew & Label-skew \\ 
        & &  & ($\beta=0.1$) & ($\beta=0.8$)\\
		\hline
		Adult  & 0.241 & 0.525 & 0.531 & 0.332  \\
		Magic  & 0.538 & 0.792 & 0.767 & 0.603 \\
		\hline
	\end{tabular}
	\label{tab:non_iid}
\end{table}

\subsection{Evaluation}
In our experiments we evaluate our methods with three different metrics:

\textbf{1. Average Workload Error.} We mainly evaluate (FL)AIM methods via the average workload error. For a fixed workload of marginal queries $Q$, we measure $\operatorname{Err}(D, \hat D; Q) := \frac{1}{|Q|} \sum_{q \in Q} \|M_{q}(D) - M_{q}(\hat D)\|_1$ where $D := \cup_k D_k$. This can be seen as a type of training error since the models are trained to answer the queries in $Q$.

\textbf{2. Negative Log-likelihood.} An alternative is the (mean) negative log-likelihood of the synthetic dataset sampled from our (FL)AIM models when compared to a heldout test set. This metric can be viewed as a measure of generalisation, since the metric is agnostic to the specific workload chosen.

\textbf{3. Test ROC-AUC.} In some cases we evaluate our models by training a gradient boosted decision tree (GBDT) on the synthetic data it produces. We test the performance of the classifier on a test set and evaluate the ROC-AUC.

\subsection{Experiment Hyperparameters} \label{appendix:hyper}

\subsubsection{CTGAN}
In our baseline comparisons in Section \ref{sec:baselines} we use the DP-CTGAN implementation contained in the synthetic data vault (SDV) package \citep{SDV}. We performed a hyperparameter search over epochs, learning rates and gradient clipping norm. We found training for $20$ epochs, with a gradient norm of $1$, batch size of $128$, discriminator LR of $1e{-3}$ and generator LR of $1e{-5}$ gave best performance. For the federated setting we train the DP-CTGAN using DP-FedSGD implemented via the FLSim framework \cite{flsim}. We found training for $50$ epochs with a local batch size of $128$, clipping norm of $0.5$, server LR of $0.5$ and discriminator/generator LRs of $1e{-4}$ performed best.

\subsubsection{(FL)AIM}
\textbf{PGM Iterations:} The number of PGM iterations determines how many optimisation steps are performed to update the parameters of the graphical model during training. AIM has two parameters, one for the number of training iterations between global rounds of AIM and one for the final number of iterations performed at the end of training. We set this to 100 training iterations and 1000 final iterations. This is notably smaller than the default parameters used in central AIM, but we verified that there is no significant impact on utility.

\noindent
\textbf{Model Initialisation:} We follow the same procedure as in central AIM, where every 1-way marginal is estimated to initialise the model. Instead in our federated settings, we take a random sample of clients and have them estimate the 1-way marginals and initialise the model from these measurements.
    
\noindent
\textbf{Budget Annealing Initialisation:} When using budget annealing, the initial noise is calibrated under a high number of global rounds. In central AIM, initially $T = 16 \cdot d$ results in a large amount of noise until the budget is annealed. We instead set this as $T = 8 \cdot d$ since empirically we have verified that a smaller number of global rounds is better for performance in the federated setting.

\noindent
\textbf{Budget Annealing Condition:} In central AIM, the budget annealing condition compares the previous model estimate with the new model estimate of the current marginal. If the annealing condition is met, the noise parameters are decreased. In the federated setting, it is possible that PGM receives multiple new marginals at a particular round. We employ the same annealing condition, except we anneal the budget if at least one of the marginals received from the last round passes the check.

\section{Further Experiments}
\label{appendix:experiments}

\textbf{Varying $\eps$:} In Figure \ref{fig:appendix:vary_eps}, we vary $\eps$ across our datasets under a clustering partition with $K=100$ clients and $\eps = 1$. These plots replicate Figure \ref{fig:vary_eps} across the other datasets. We observe similar patterns to that of Figure \ref{fig:vary_eps} with NaiveFLAIM performing worst across all settings, and our AugFLAIM methods helping correct this to closely match the performance of DistAIM and in some cases even exceed it with lower workload error. There are however some consistent differences when compared to the Adult datasets. For example, on the Magic dataset, AugFLAIM (Private) performance comes very close to DistAIM but there is a consistent gap in workload error. This is in contrast to the Adult dataset where AugFLAIM (Private) shows a more marked improvement.

\begin{table*}[t]
    \caption{\textbf{Budget annealing ranking} across workload error and negative log-likelihood. Ranks are averaged across each dataset, with each method repeated 10 times. $T$ is set adaptively via annealing.}
    \label{tab:annealing_ranks}
    \centering
    \begin{tabular}{lllll}
    \toprule
    \textbf{Method / $\eps =$} &            1 &              2 &              3 &             5 \\
    \midrule
    \naive{NaiveFLAIM}               &  4.65 / 4.75 &    4.875 / 4.9 &   4.975 / 4.95 &     5.0 / 5.0 \\
    \nonpriv{AugFLAIM (Oracle)} &    3.6 / 3.3 &   3.85 / 3.525 &    3.9 / 3.775 &   3.925 / 3.6 \\
    \flaim{AugFLAIM (Private)}     &  3.75 / 3.45 &    3.25 / 3.15 &  3.125 / 3.125 &   3.05 / 3.25 \\
    DistAIM             &    2.0 / 2.5 &  2.025 / 2.425 &     2.0 / 2.15 &  2.025 / 2.15 \\
    \midrule 
    AIM                 &    1.0 / 1.0 &      1.0 / 1.0 &      1.0 / 1.0 &     1.0 / 1.0 \\
    \bottomrule
    \end{tabular}
\end{table*}

\begin{table}[t]
    \caption{\textbf{Total overhead of DistAIM vs AugFLAIM (Private) measure via the average client throughput (total received and sent communication) for $\eps=1$ and $T=4,32,96$.}}
    \label{tab:appendix:overhead}
    \centering
    \begin{tabular}{lllll}
    \toprule
    \textbf{Method} & $T=4$ & $T=32$ & $T=96$ \\
    \midrule
    Adult & $3990$x & $1223$x & $410$x \\ 
    Magic & $2467$x & $746$x & $267$x \\
    Intrusion & $2313$x & $630$x & $233$x \\ 
    Marketing & $603$x & $174$x & $66$x\\ 
    Covtype & $199$x & $57$x & $15$x \\ 
    Credit & $2363$x & $714$x & $220$x \\ 
    Census & $832$x & $221$x & $77$x \\ 
    \bottomrule
    \end{tabular}
\end{table}

\noindent
\textbf{Varying $T$:} In Figure \ref{fig:appendix:vary_t}, we vary the global rounds $T$ while fixing $\eps=1$ and $K=100$ clients under a clustering partition. This replicates Figure \ref{fig:vary_t} but over the other datasets. Across all figures we plot dashed lines to show the mean workload error under the setting where $T$ is chosen adaptively via budget annealing. On datasets other than Adult, we observe more clearly the choice of $T$ is very significant to the performance of AugFLAIM (Private) and choosing $T > 30$ can result in a large increase in workload error for some datasets (marketing, covtype, intrusion, census). In contrast, increasing $T$ for DistAIM often gives an improvement to the workload error. Recall, DistAIM has participants secret-share their workload answers and these are aggregated over a number of rounds. Hence, as $T$ increases the workload answers DistAIM receives approaches that of the central dataset. 
For budget annealing, on 3 of the 6 datasets, AugFLAIM (Private) has improved error over NaiveFLAIM but does not always result in performance that matches DistAIM. Instead, it is recommended to choose $T \in [5,30]$ which has consistently good performance across all of the datasets.

\noindent
\textbf{Varying $p$:} In Figure \ref{fig:appendix:vary_p} we vary the %
participation rate $p$ while fixing $\eps=1, T=10$ and $K=100$ clients under a clustering partition. This replicates Figure \ref{fig:vary_p} but across the other datasets. We observe similar patterns as we did on Adult. DistAIM approaches the utility of central AIM as $p$ increases. We note that for NaiveFLAIM, often the worklaod error does not increase as $p$ increases. Again, as in Figure \ref{fig:vary_p} the likely cause for this is local skew. For AugFLAIM the workload error decreases as $p$ increases and often matches that of DistAIM, except on Magic and Marketing where it stabilises for $p > 0.3$. Generally, when $p$ is large, DistAIM is preferable but we note this does not correspond to a practical federated setting where sampling rates are typically much smaller ($p < 0.1$) and in this regime DistAIM and AugFLAIM performance is matched.

\noindent
\textbf{Varying $\beta$:} In Figure \ref{fig:appendix:vary_beta}, we vary the label-skew partition across datasets via the parameter $\beta$. A larger $\beta$ results in less label-skew and so less heterogeneity. These experiments replicate that of Figure~\ref{fig:vary_beta}. As before, we clearly observe that NaiveFLAIM is subject to poor performance and that this is particularly the case when there is high skew (small $\beta$) in participants' datasets. We can see that the AugFLAIM methods help to stabilise performance and when skew is large $(\beta < 0.1)$ can help match DistAIM across the datasets.

\noindent
\textbf{Local updates:} In Figures \ref{fig:appendix_vary_s1} and \ref{fig:appendix_vary_s2} we vary the local updates $s \in \{1,4\}$ while fixing $\eps=1, T=10$ and $K=100$.  This replicates Figure \ref{fig:vary_s_eps1} and \ref{fig:vary_s_eps10} but across the other datasets. When using $s=4$ local rounds, the workload error across methods often increases for NaiveFLAIM and AugFLAIM methods. However, when looking at the test AUC performance, taking $s=4$ local updates often gives better AUC performance than $s=1$ on the Census, Magic and Credit datasets. This results in AUC that is closer to that of DistAIM than the other FLAIM methods.

\noindent
\textbf{Budget Annealing:} In Table \ref{tab:annealing_ranks} we present the average rank of methods across all datasets. We rank based on two metrics: workload error and negative log-likelihood. The number of rounds $T$ is set adaptively via budget annealing. We vary $\eps \in \{1,2,3,4,5\}$ with the goal of understanding how annealing affects utility across methods. DistAIM achieves the best rank across all settings when using budget annealing, only beaten by central AIM. When $\eps$ is small, AugFLAIM (Oracle) achieves a better average ranking across both metrics when compared to AugFLAIM (Private). However, as $\eps$ increases, AugFLAIM (Private) achieves better rank, only beaten by DistAIM. AugFLAIM (Private) can achieve better performance by choosing $T$ reasonably small ($T < 30$) as previously mentioned.

\noindent
\textbf{DistAIM vs. FLAIM Communication:} In Table \ref{tab:appendix:overhead}, we present the overhead of DistAIM vs. AugFLAIM (Private) in terms of the average client throughput (total sent and received communication) for $T=4,32,96$. In DistAIM, the amount of communication a client sends is constant no matter the value of $T$, since they only send secret-shared answers once (when they participate in a round). In the case where the total dimension of a workload is large, the gap in client throughput between AugFLAIM and DistAIM is also large. For example on Adult, clients must send $140$Mb in shares whereas AugFLAIM is an order of magnitude smaller. Sending 140Mb of shares may not seem prohibitive but this size quickly scales in the dimensions of features and in practice could be large e.g., on datasets with many continuous features discretized to a reasonable number of bins. Note that if $T$ increases to be very large, eventually AugFLAIM would meet or exceed the communication of DistAIM. However, this would not occur in practice since the best utility is obtained when $T$ is small (e.g., $T < 100$), as observed in Figure \ref{fig:vary_t}.

AugFLAIM (Private) communication is mostly consistent across each dataset for a particular value of $T$ e.g., at $T=4$ average client throughput is $0.035$MB up to $0.5$MB at $T=96$. This is in contrast to DistAIM which varies between $7$MB of communication (on Covtype) up to 140MB (on Adult) with the dominating factor for DistAIM being the total dimension of the workload e.g., datasets that have many high cardinality marginals will have large communication overheads under DistAIM.

\begin{figure*}[t]
	\centering
	\subfloat[Magic]{%
		\includegraphics[width=0.24\linewidth]{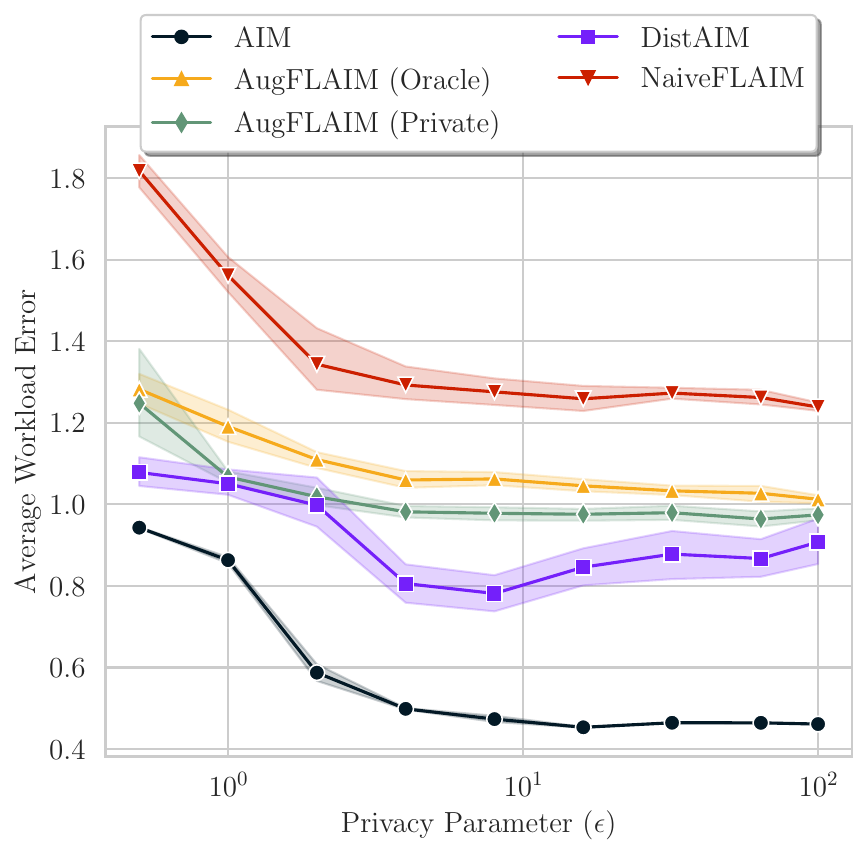}}
	\subfloat[Credit]{%
		\includegraphics[width=0.24\linewidth]{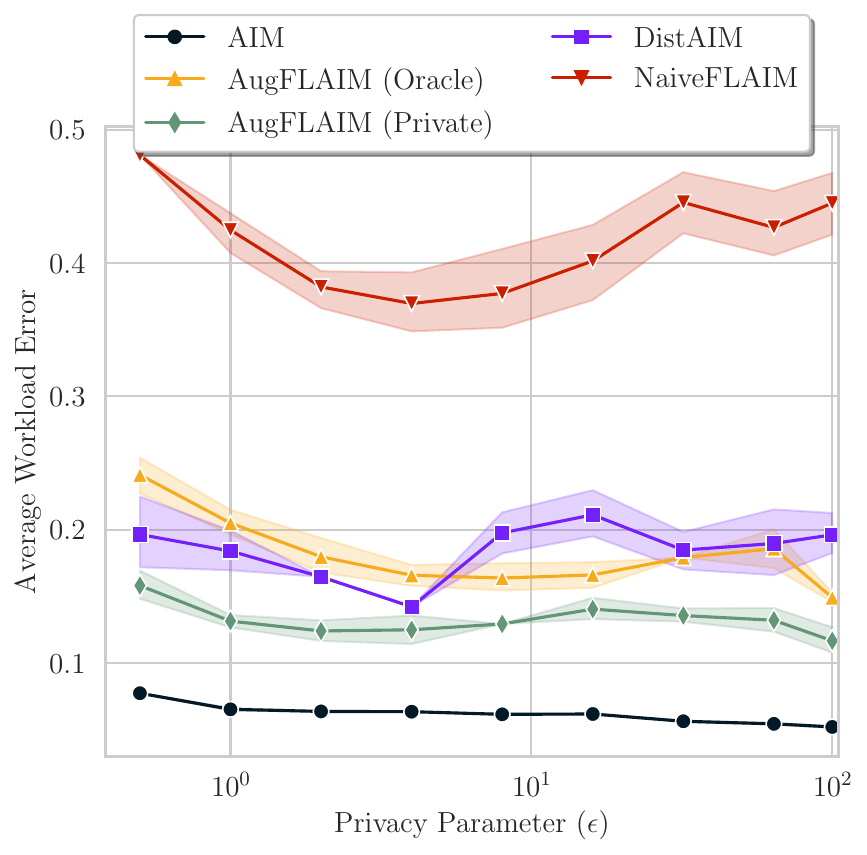}}
	\subfloat[Census]{%
		\includegraphics[width=0.24\linewidth]{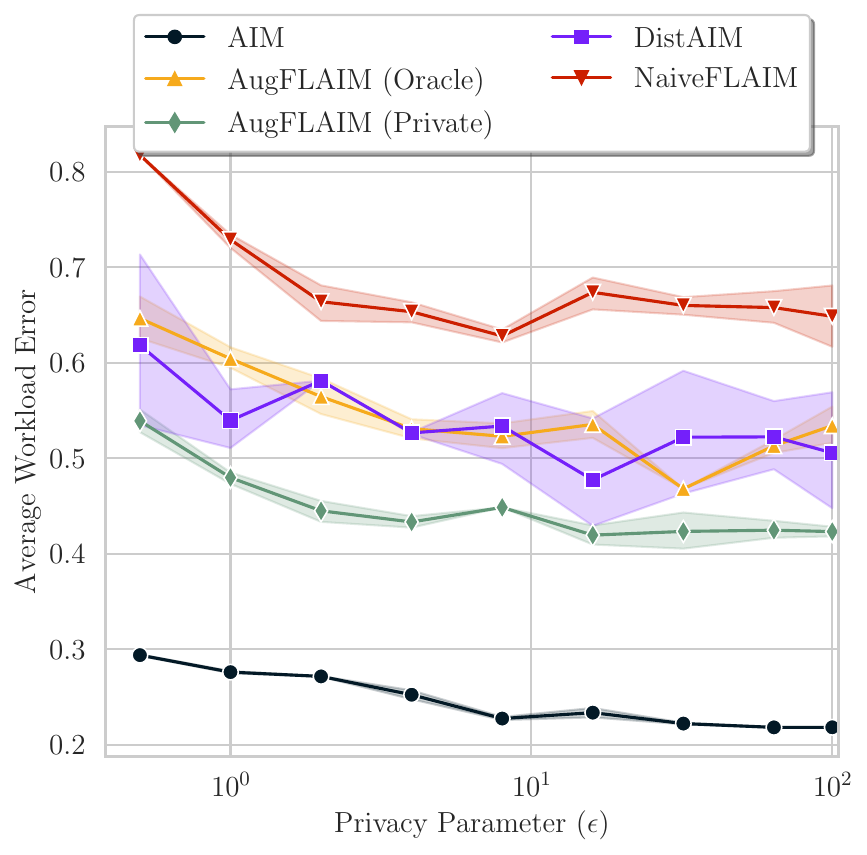}} \\
  \subfloat[Marketing]{%
		\includegraphics[width=0.24\linewidth]{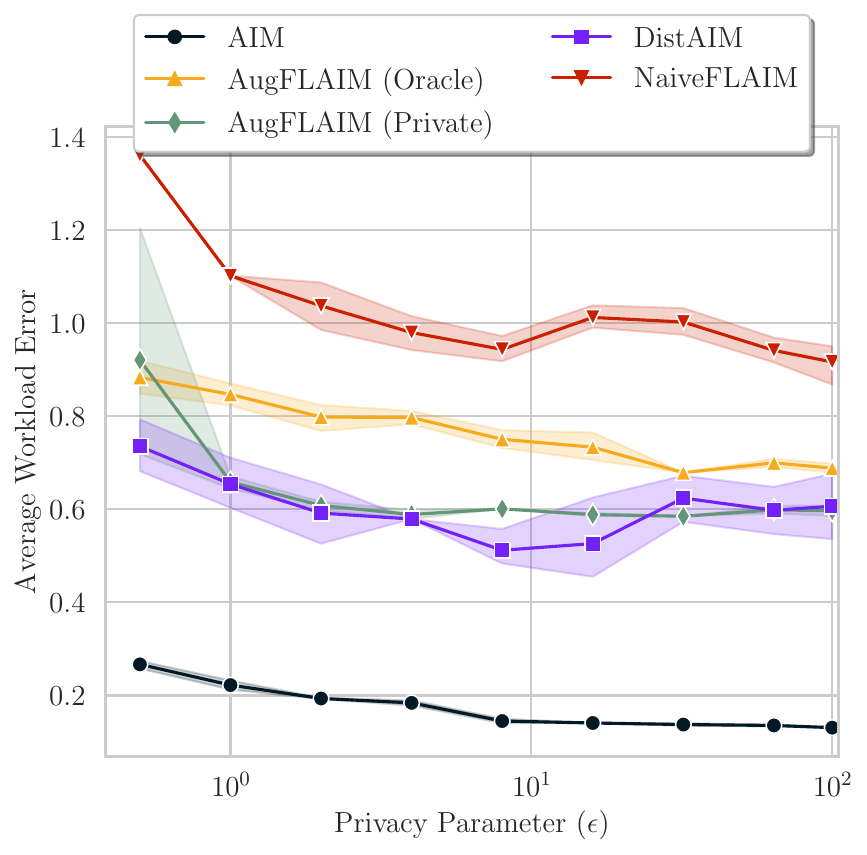}}
	\subfloat[Covtype]{%
		\includegraphics[width=0.24\linewidth]{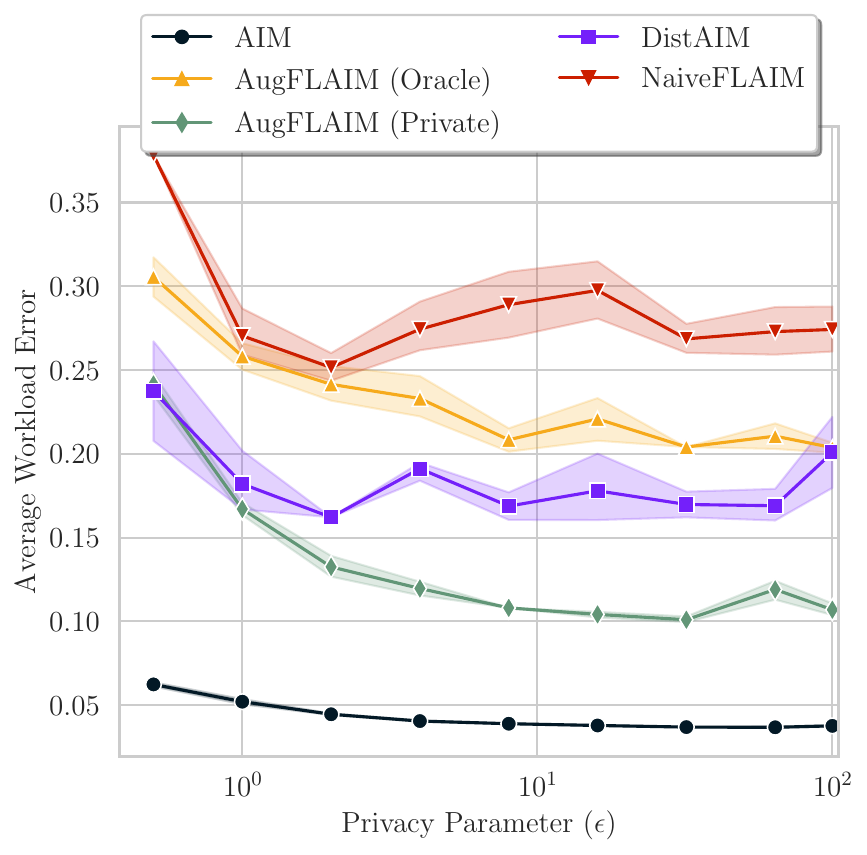}}
	\subfloat[Intrusion]{%
		\includegraphics[width=0.24\linewidth]{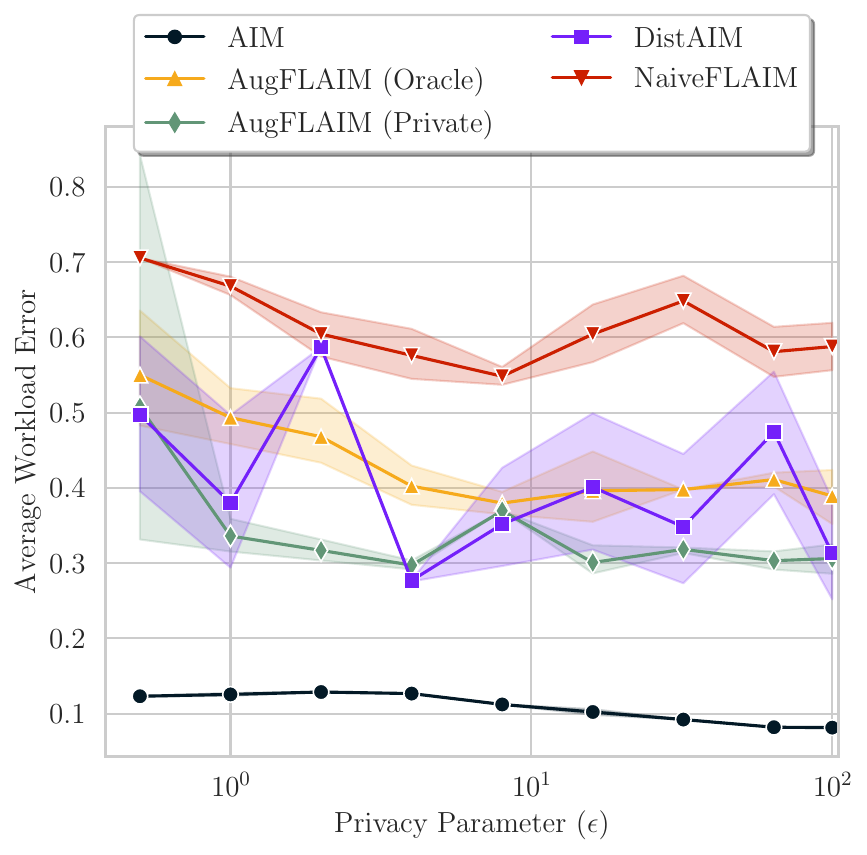}} 
	\caption{Varying $\eps$\label{fig:appendix:vary_eps}}
\end{figure*}

\begin{figure*}[t]
	\centering
	\subfloat[Magic]{%
		\includegraphics[width=0.24\linewidth]{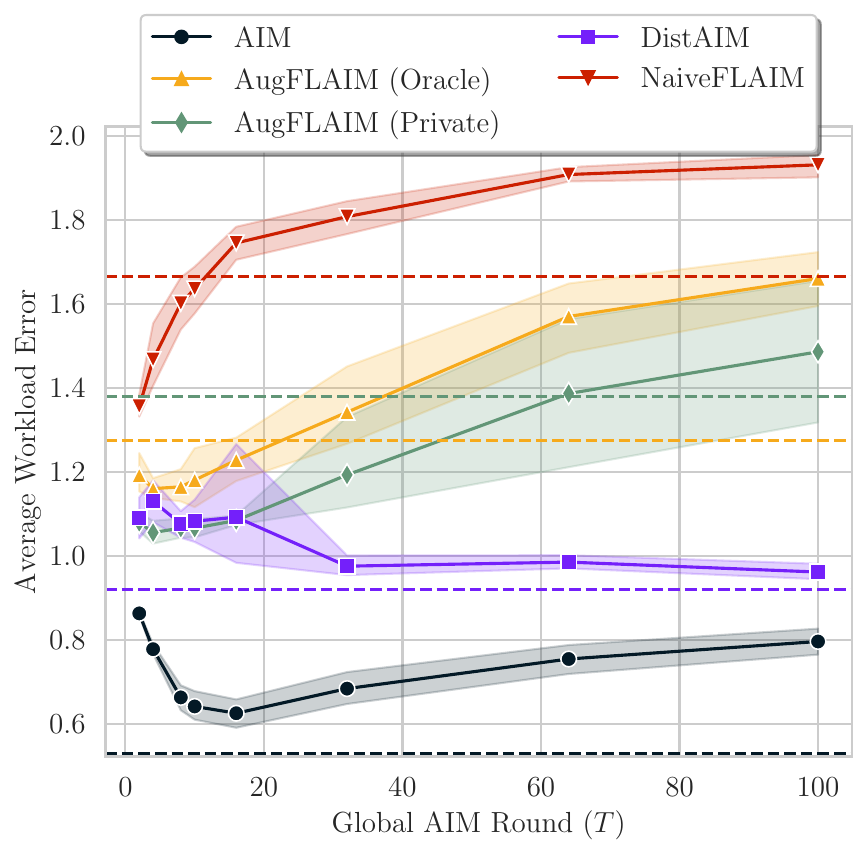}}
	\subfloat[Credit]{%
		\includegraphics[width=0.24\linewidth]{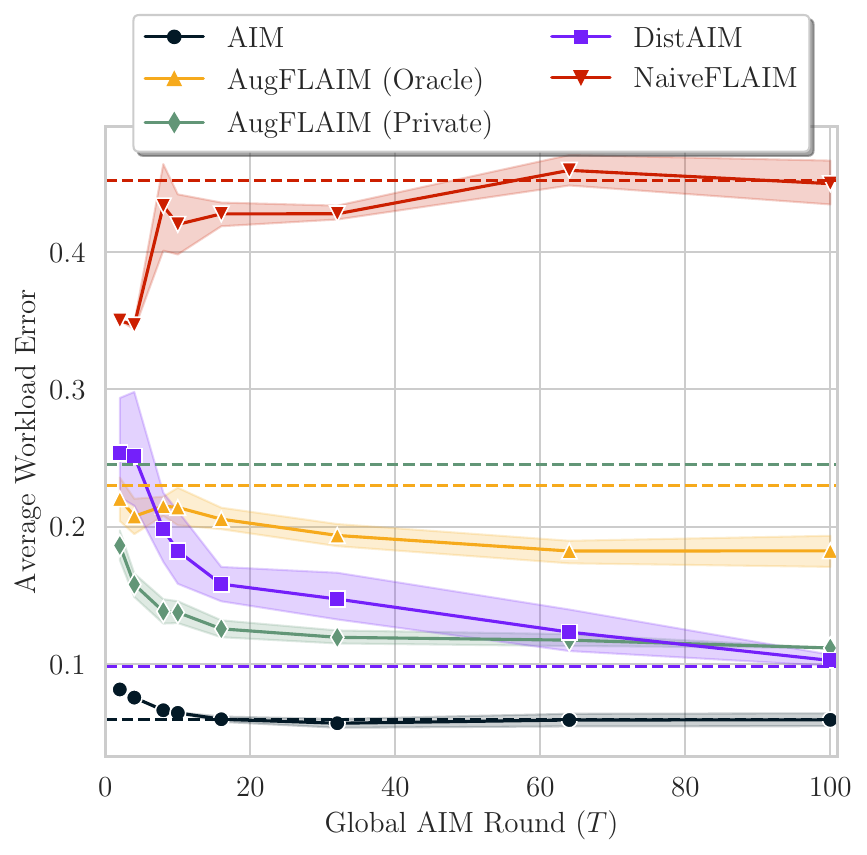}}
	\subfloat[Census]{%
		\includegraphics[width=0.24\linewidth]{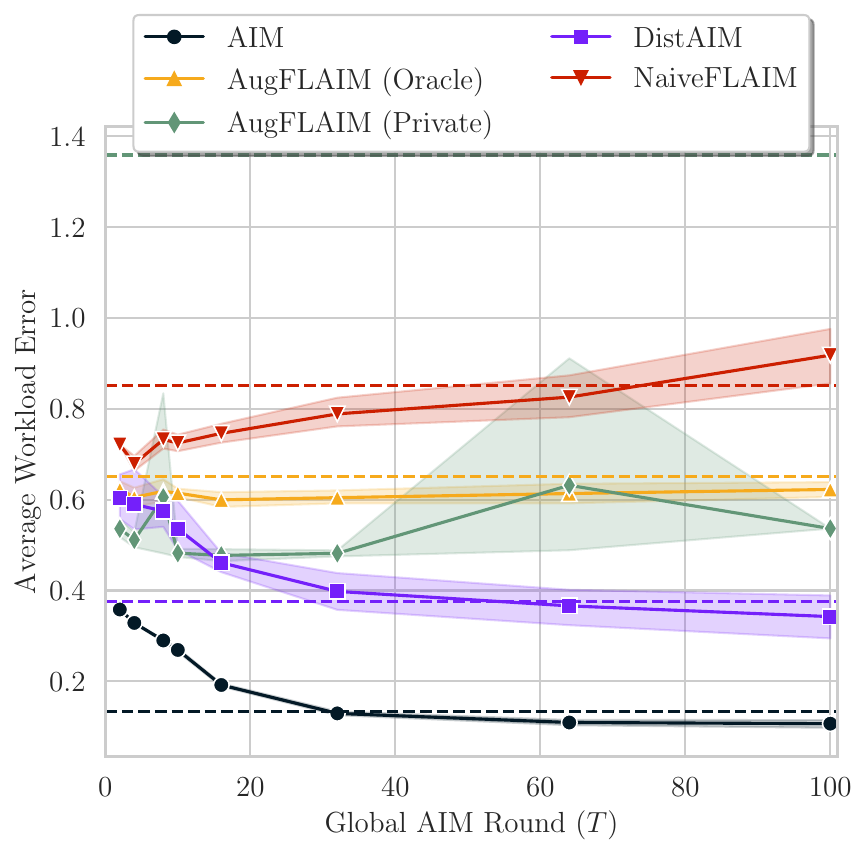}} \\
  \subfloat[Marketing]{%
		\includegraphics[width=0.24\linewidth]{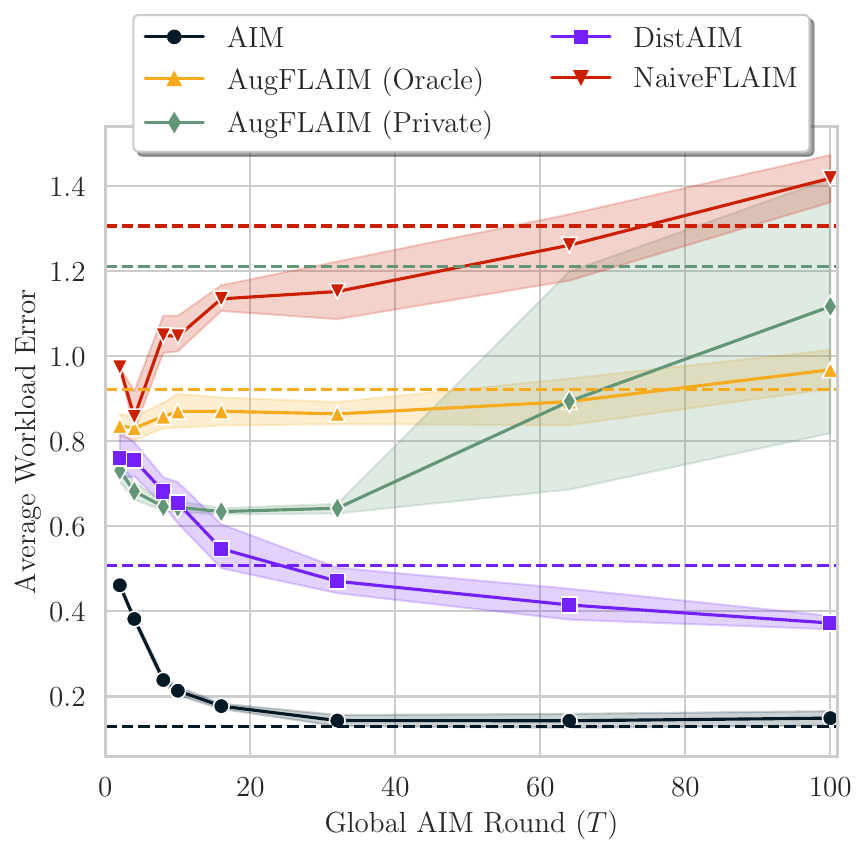}}
	\subfloat[Covtype]{%
		\includegraphics[width=0.24\linewidth]{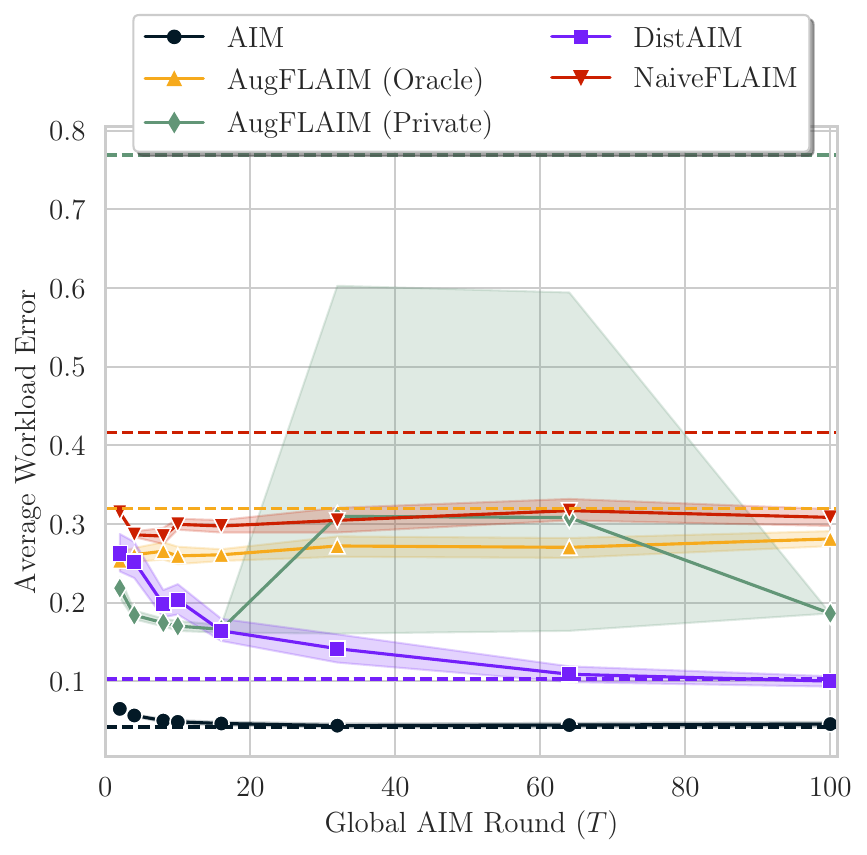}}
	\subfloat[Intrusion]{%
		\includegraphics[width=0.24\linewidth]{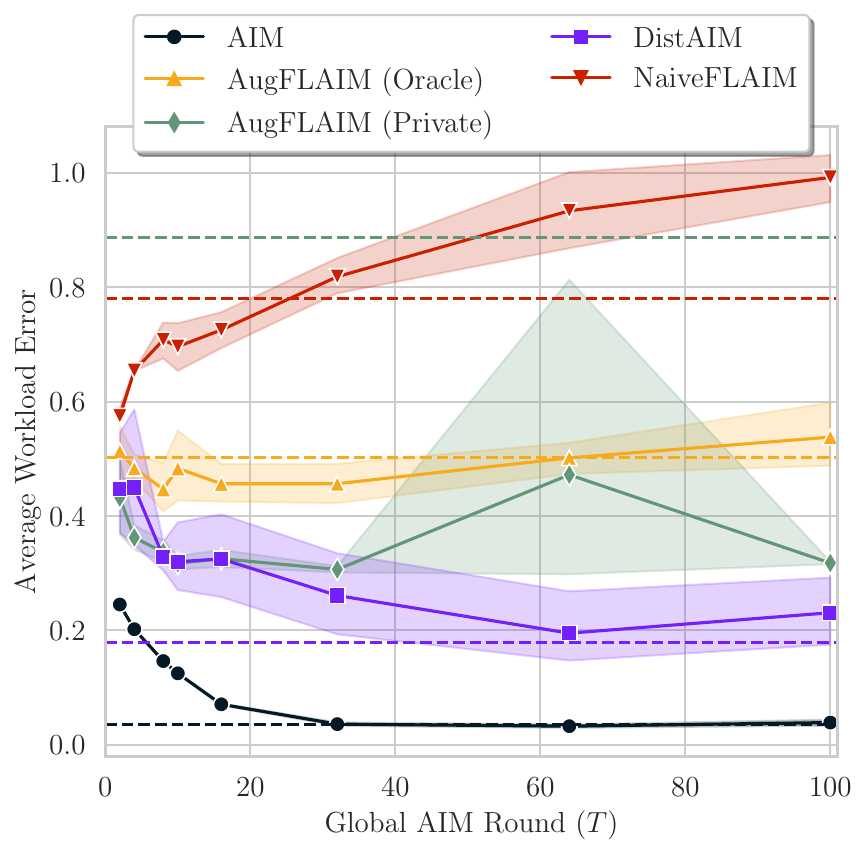}} 
	\caption{Varying $T$\label{fig:appendix:vary_t}}
\end{figure*}

\begin{figure*}[t]
	\centering
	\subfloat[Magic]{%
		\includegraphics[width=0.24\linewidth]{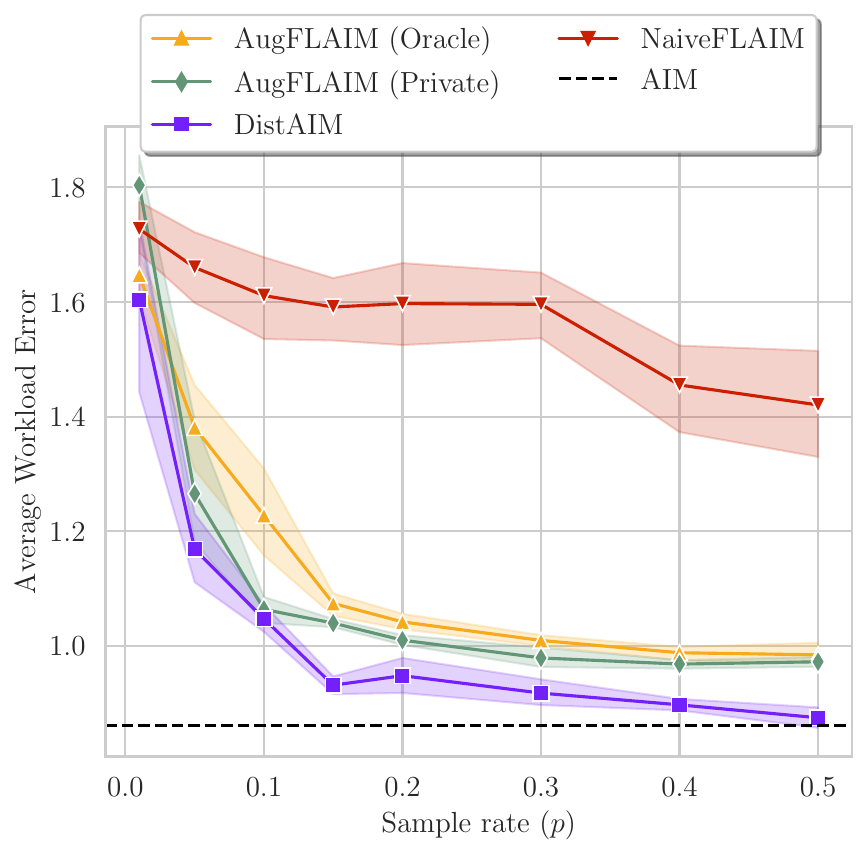}}
	\subfloat[Credit]{%
		\includegraphics[width=0.24\linewidth]{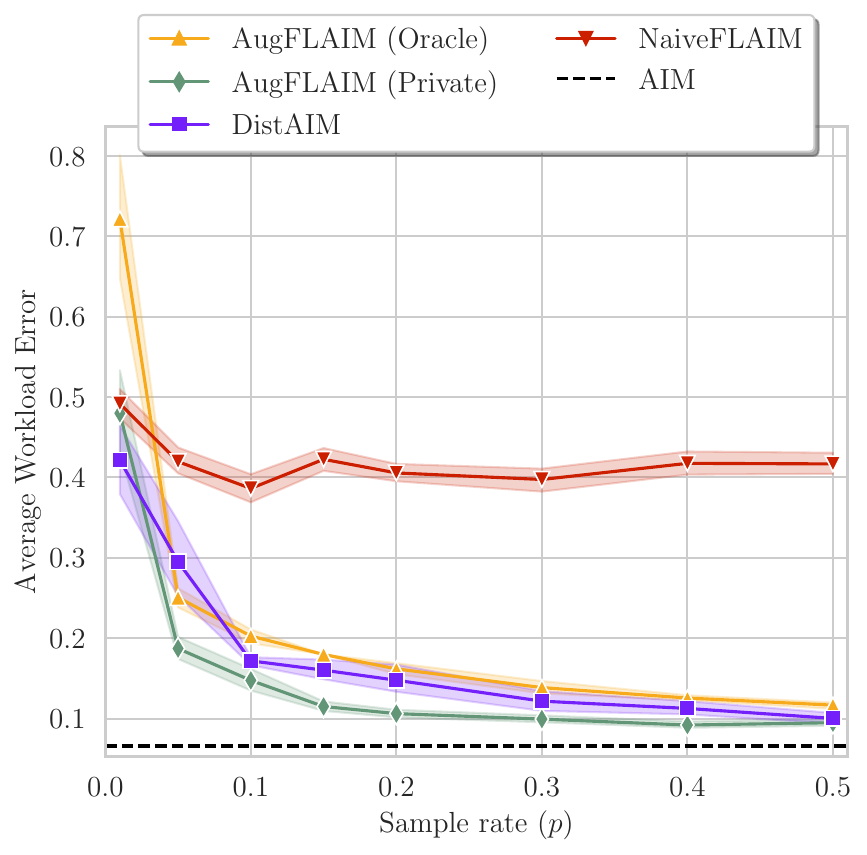}}
	\subfloat[Census]{%
		\includegraphics[width=0.24\linewidth]{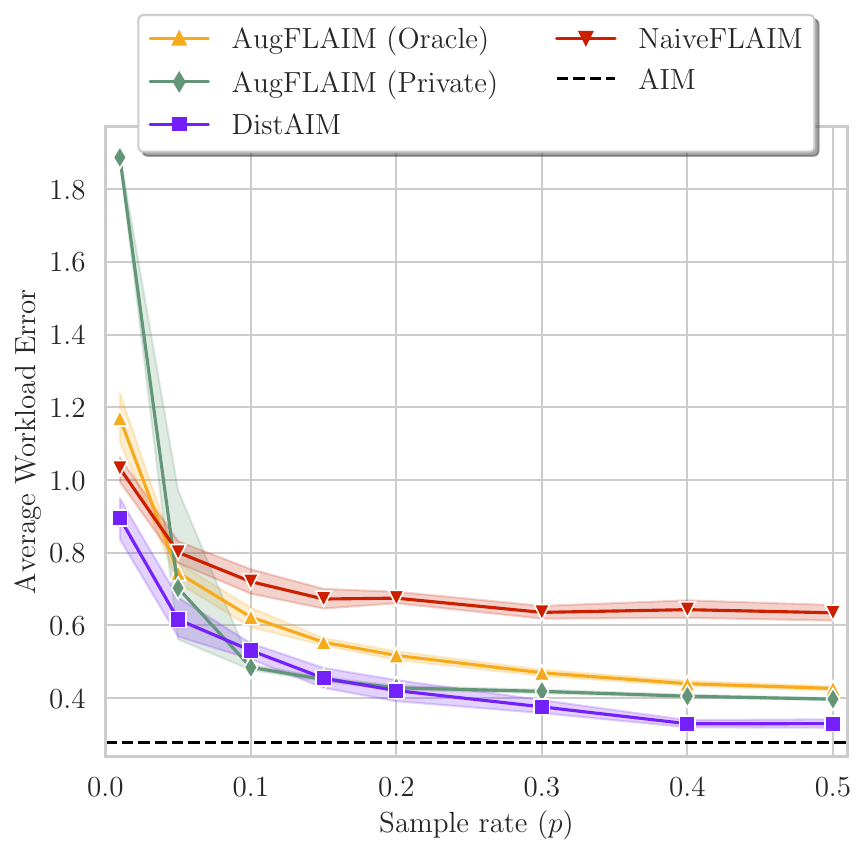}} \\
  \subfloat[Marketing]{%
		\includegraphics[width=0.24\linewidth]{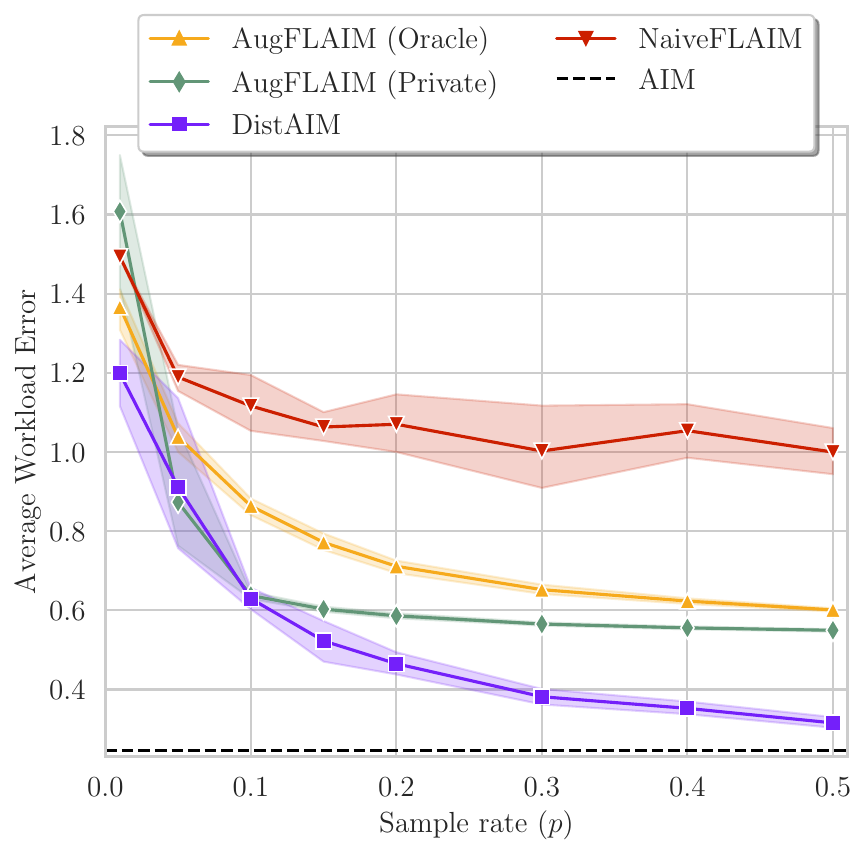}}
	\subfloat[Covtype]{%
		\includegraphics[width=0.24\linewidth]{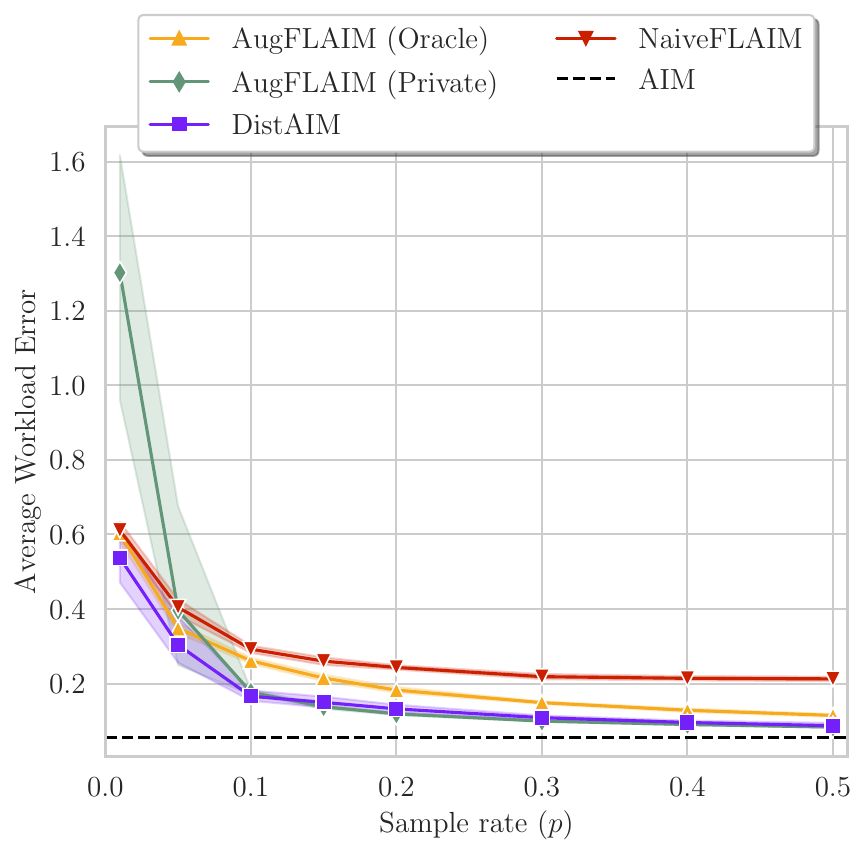}}
	\subfloat[Intrusion]{%
		\includegraphics[width=0.24\linewidth]{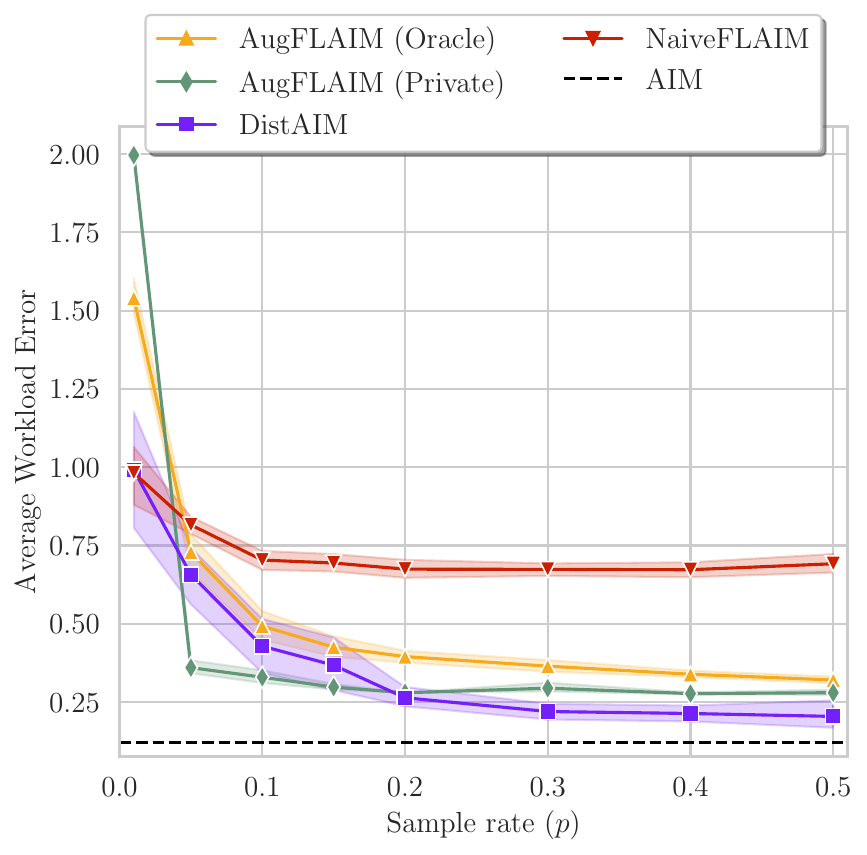}} 
	\caption{Varying $p$\label{fig:appendix:vary_p}}
\end{figure*}

\begin{figure*}[t]
	\centering
	\subfloat[Magic]{%
		\includegraphics[width=0.24\linewidth]{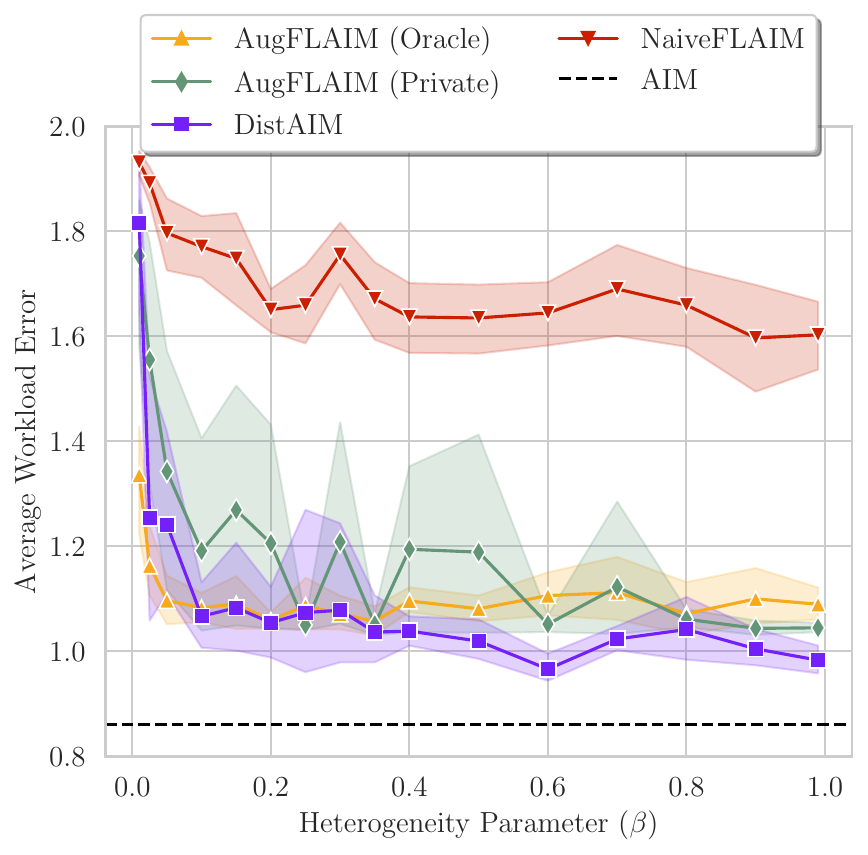}}
	\subfloat[Credit]{%
		\includegraphics[width=0.24\linewidth]{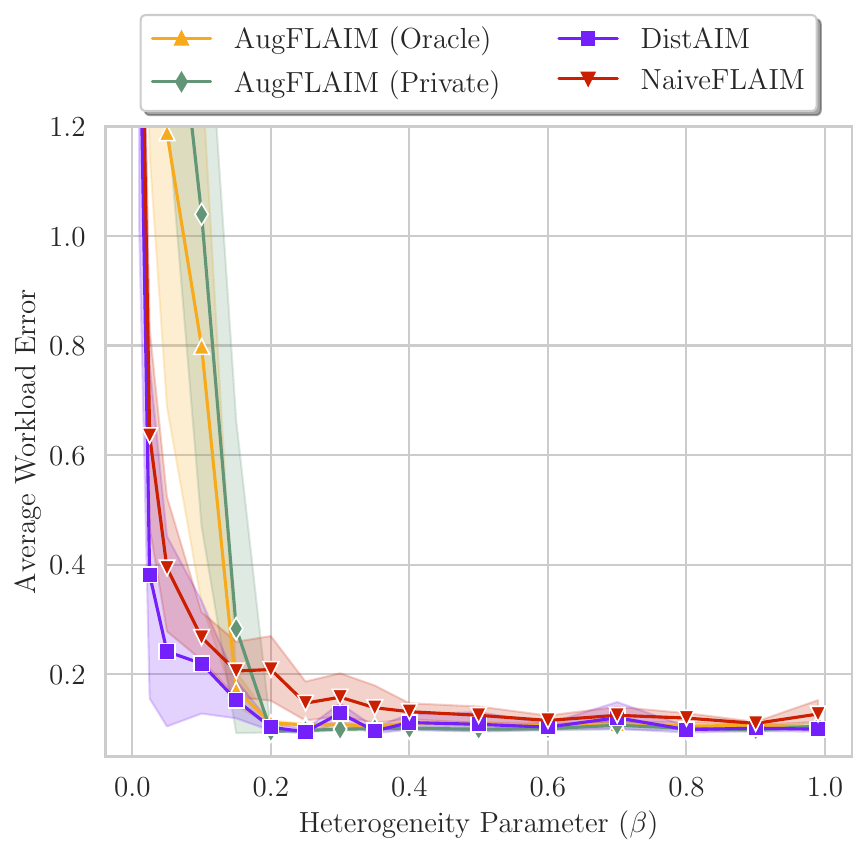}}
	\subfloat[Census]{%
		\includegraphics[width=0.24\linewidth]{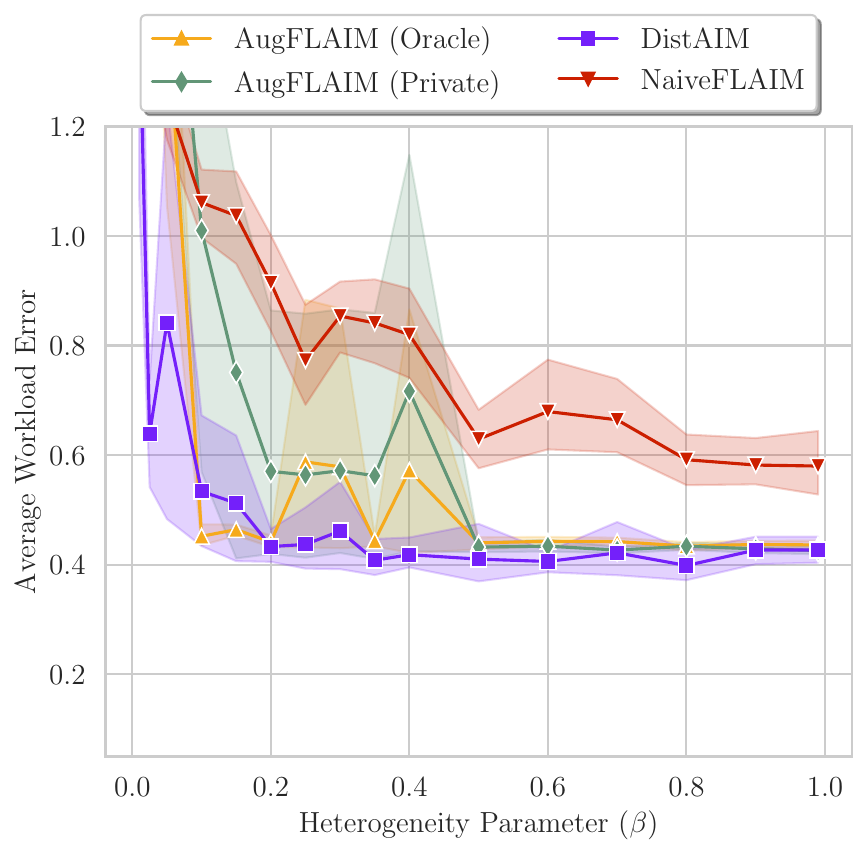}} \\
  \subfloat[Marketing]{%
		\includegraphics[width=0.24\linewidth]{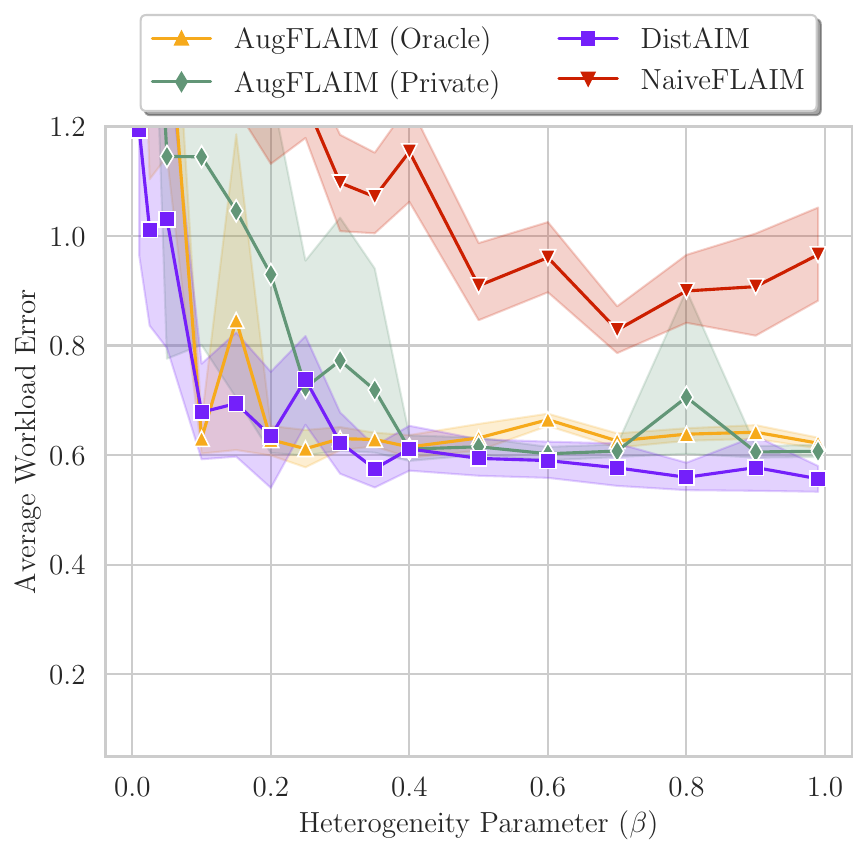}}
	\subfloat[Covtype]{%
		\includegraphics[width=0.24\linewidth]{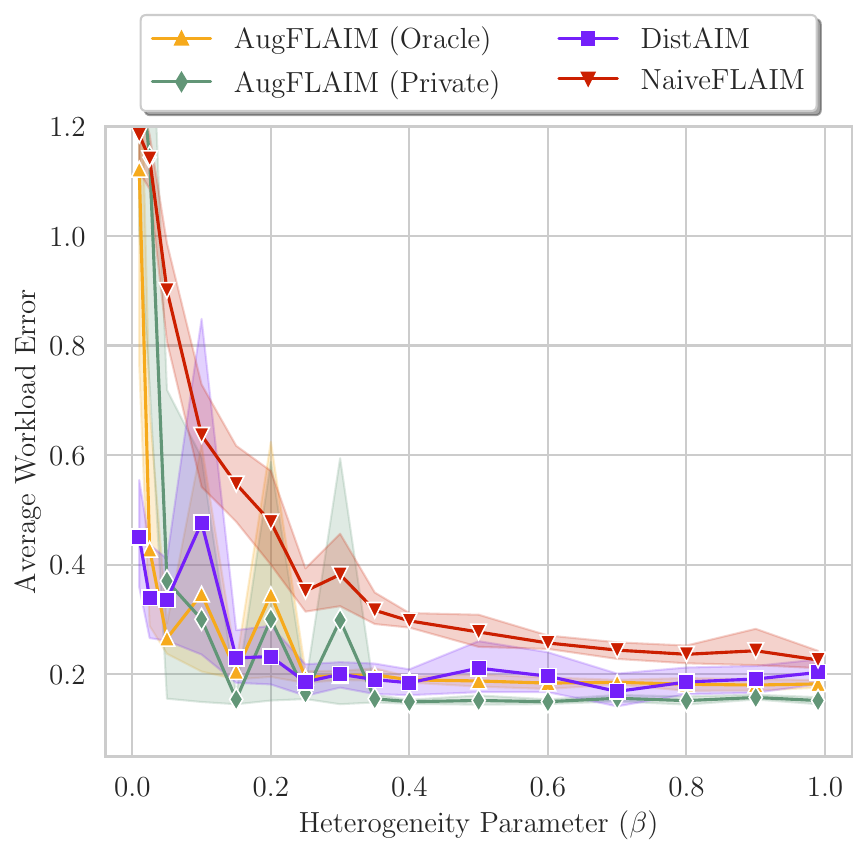}}
	\subfloat[Intrusion]{%
		\includegraphics[width=0.24\linewidth]{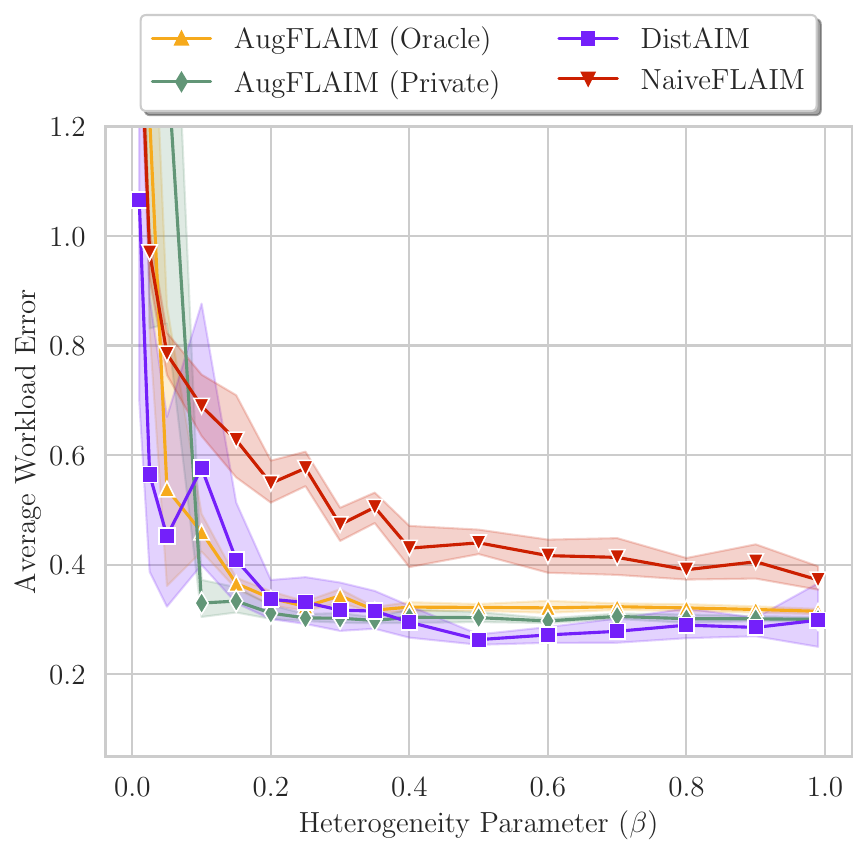}} 
	\caption{Varying $\beta$\label{fig:appendix:vary_beta}}
\end{figure*}

\begin{figure*}[t]
	\centering
	
	\subfloat[Magic (Workload Error)]{%
	\includegraphics[width=0.35\linewidth]{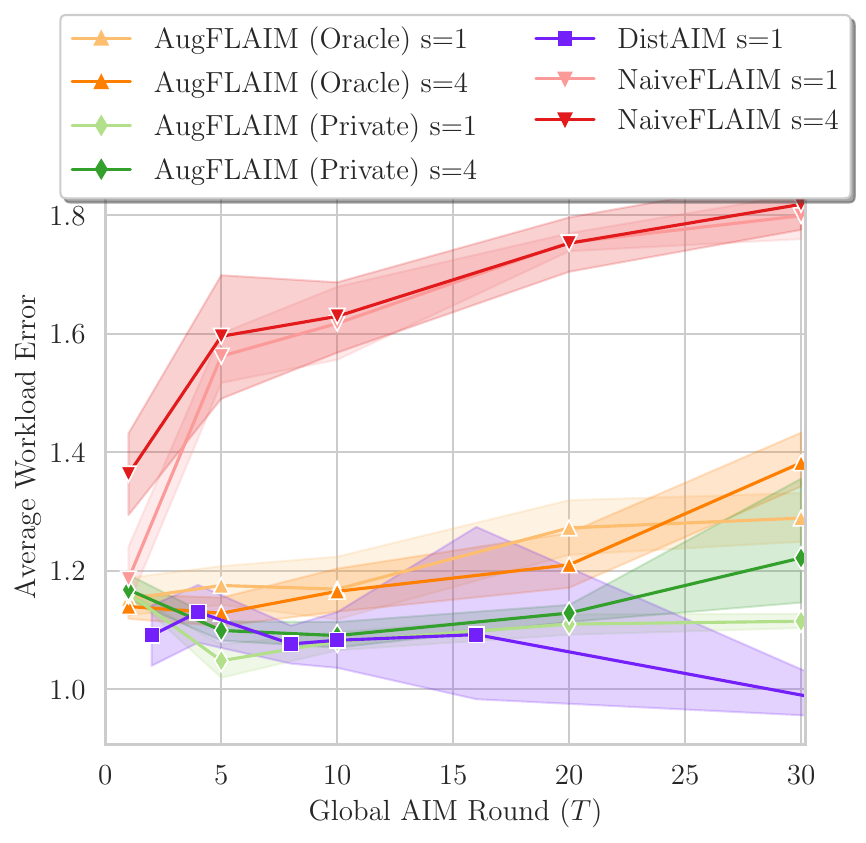}}
	\subfloat[\review{Magic (Test AUC)}]{%
		\includegraphics[width=0.35\linewidth]{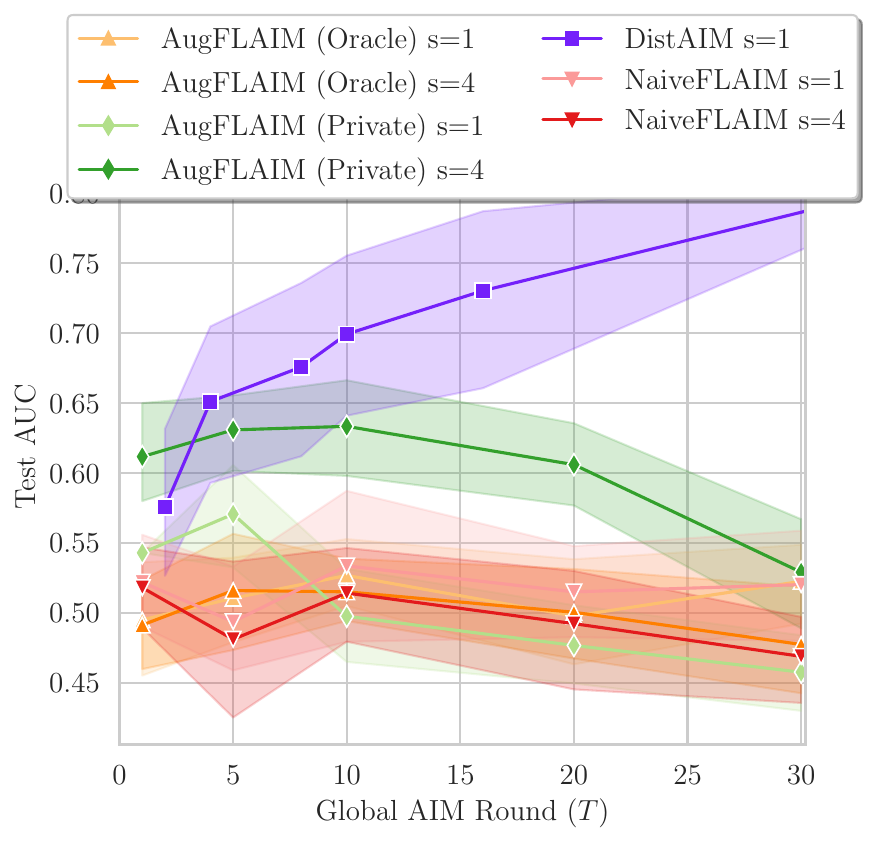}} \\

	\subfloat[Credit (Workload Error)]{%
	\includegraphics[width=0.35\linewidth]{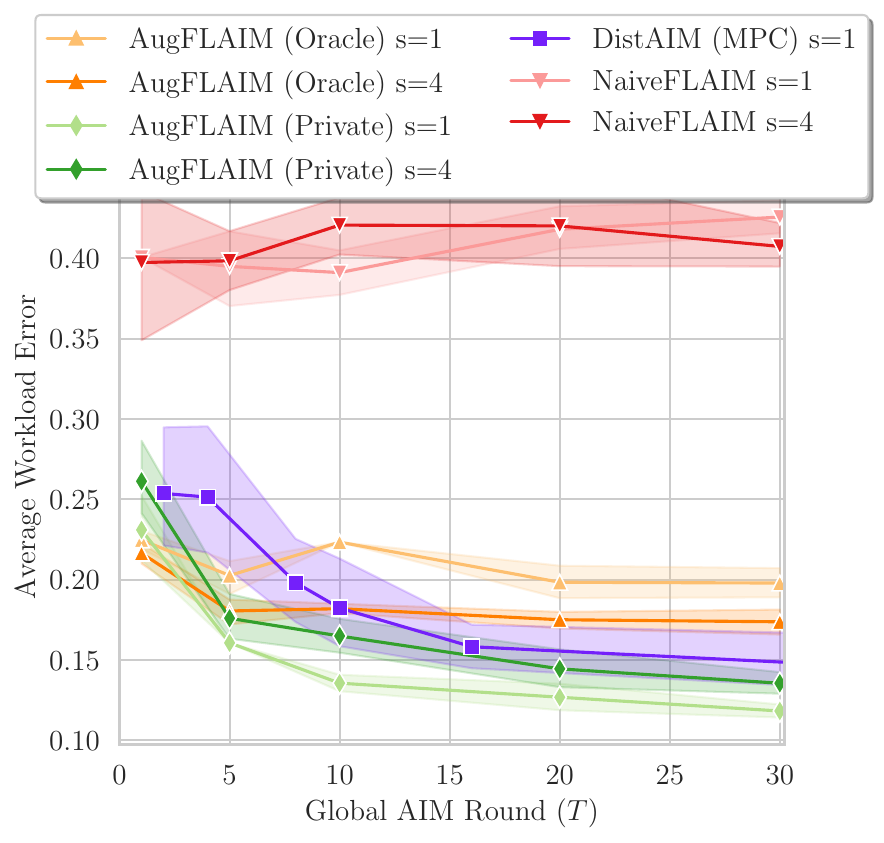}}
	\subfloat[\review{Credit (Test AUC)}]{%
		\includegraphics[width=0.35\linewidth]{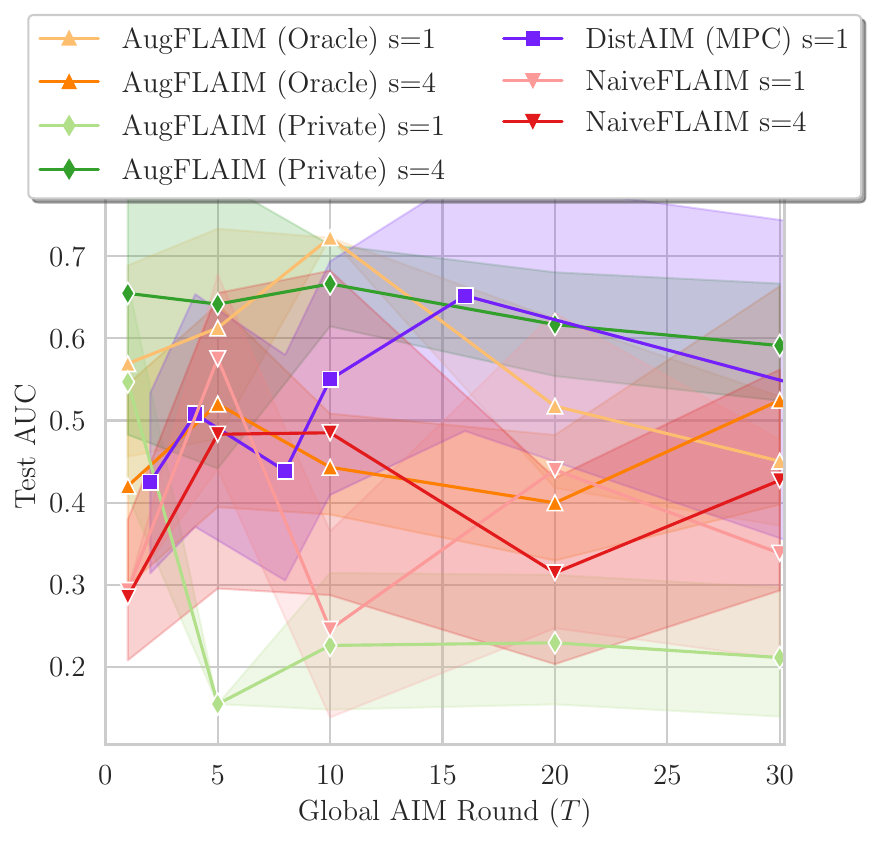}} \\

	\subfloat[Census (Workload Error)]{%
	\includegraphics[width=0.35\linewidth]{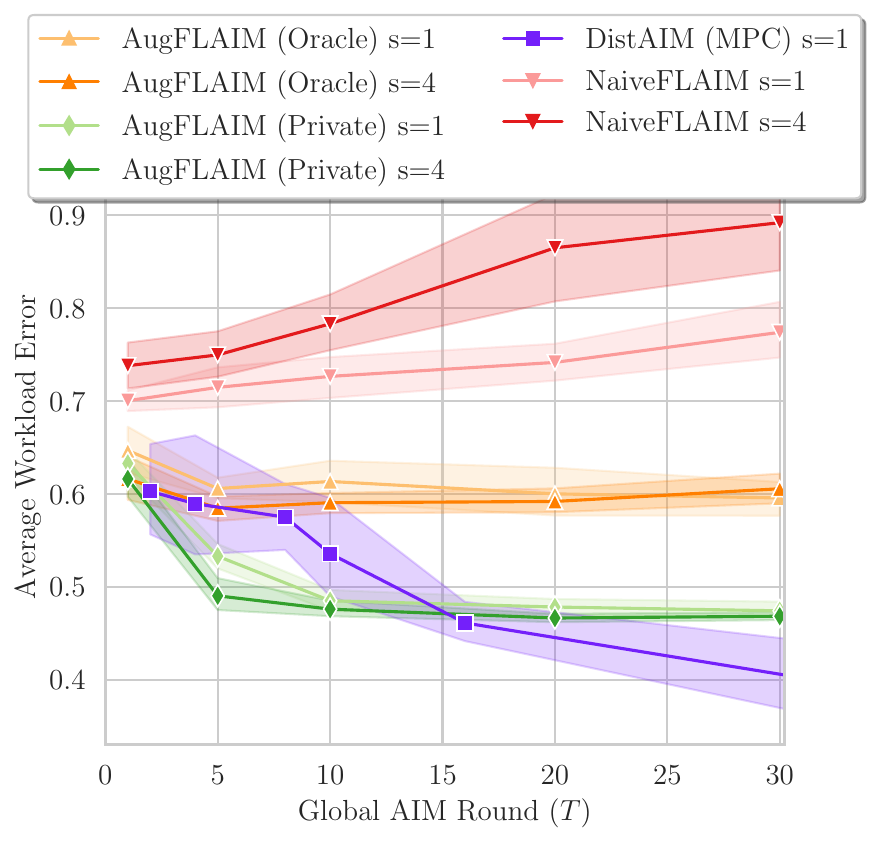}}
	\subfloat[\review{Census (Test AUC)}]{%
		\includegraphics[width=0.35\linewidth]{appendix_figs/fig5_auc_credit_sdv_eps=1_weight=sigma_scaled_squared_nweight.pdf}} 
        \caption{\review{Varying local rounds $s \in \{1,4\}$ as in Figure \ref{fig:vary_s_eps1} but on alternative datasets.}}	\label{fig:appendix_vary_s1} 
\end{figure*}

\begin{figure*}[t]
	\centering
	
	\subfloat[Marketing (Workload Error)]{%
	\includegraphics[width=0.35\linewidth]{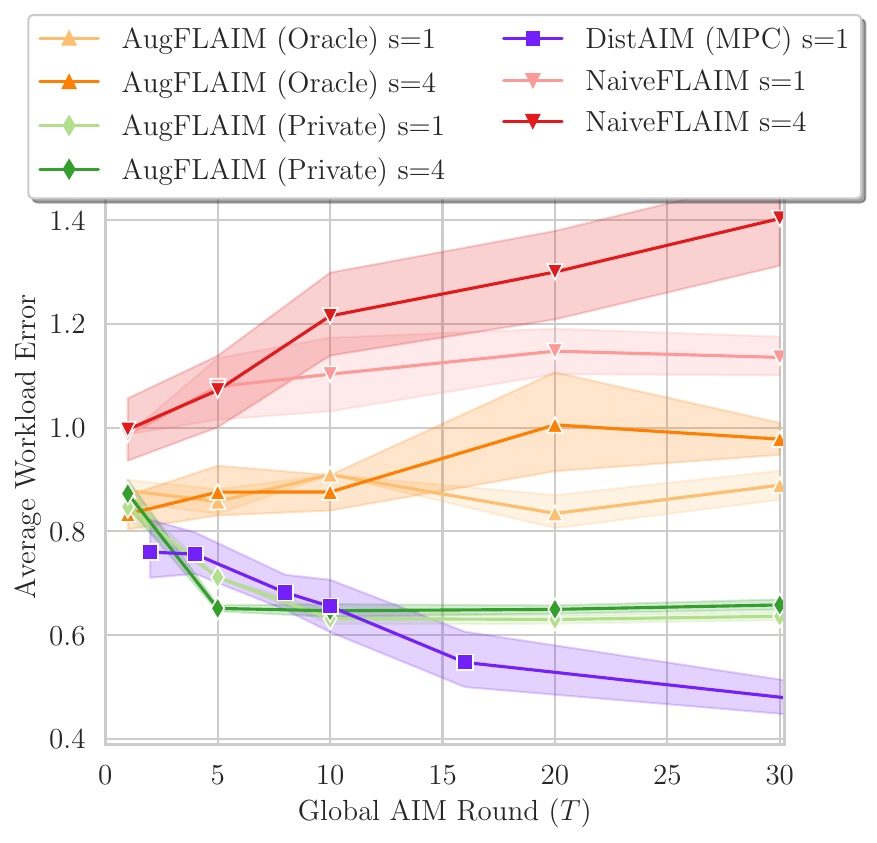}}
	\subfloat[\review{Marketing (Test AUC)}]{%
		\includegraphics[width=0.35\linewidth]{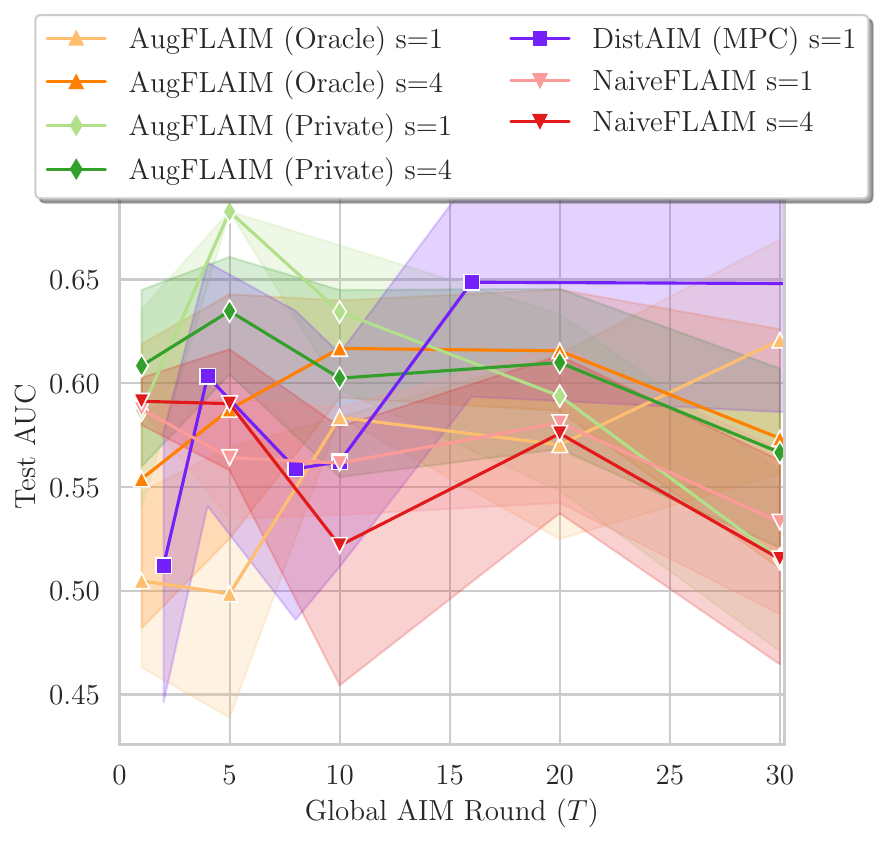}} \\

	\subfloat[Covtype (Workload Error)]{%
	\includegraphics[width=0.35\linewidth]{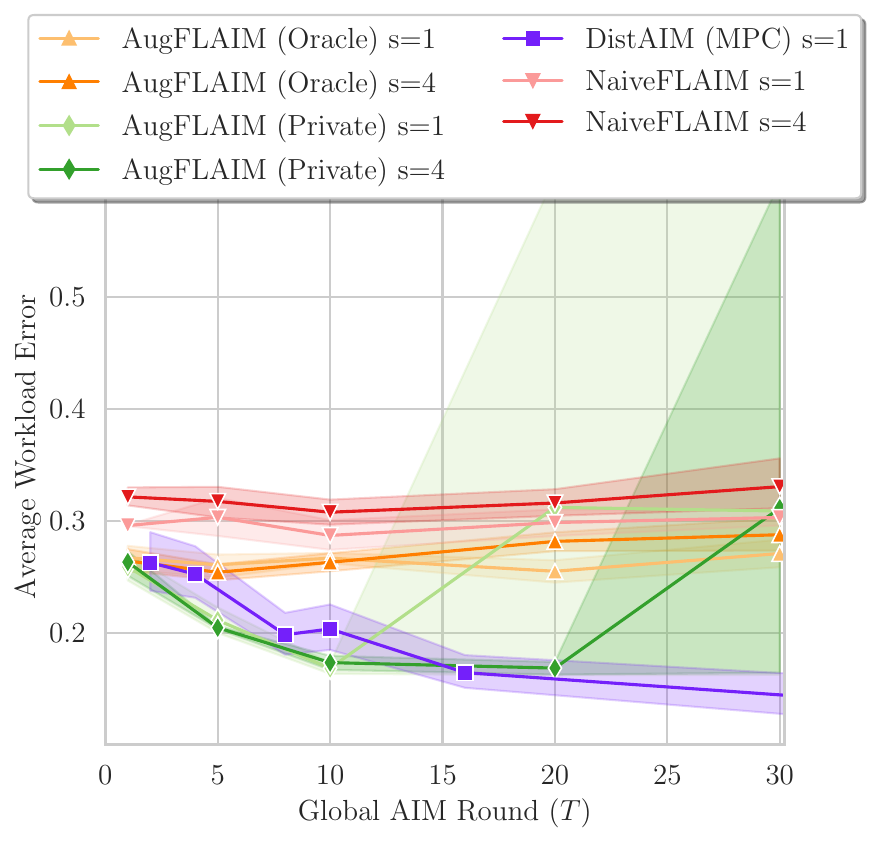}}
	\subfloat[\review{Covtype (Test AUC)}]{%
		\includegraphics[width=0.35\linewidth]{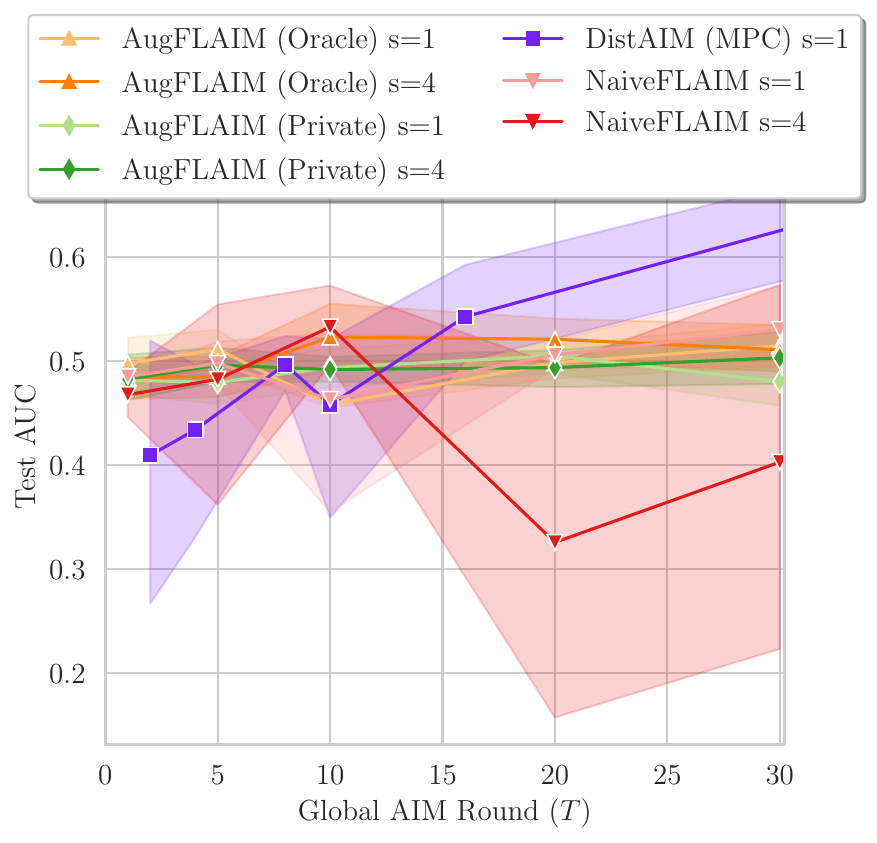}} \\

	\subfloat[Intrusion (Workload Error)]{%
	\includegraphics[width=0.35\linewidth]{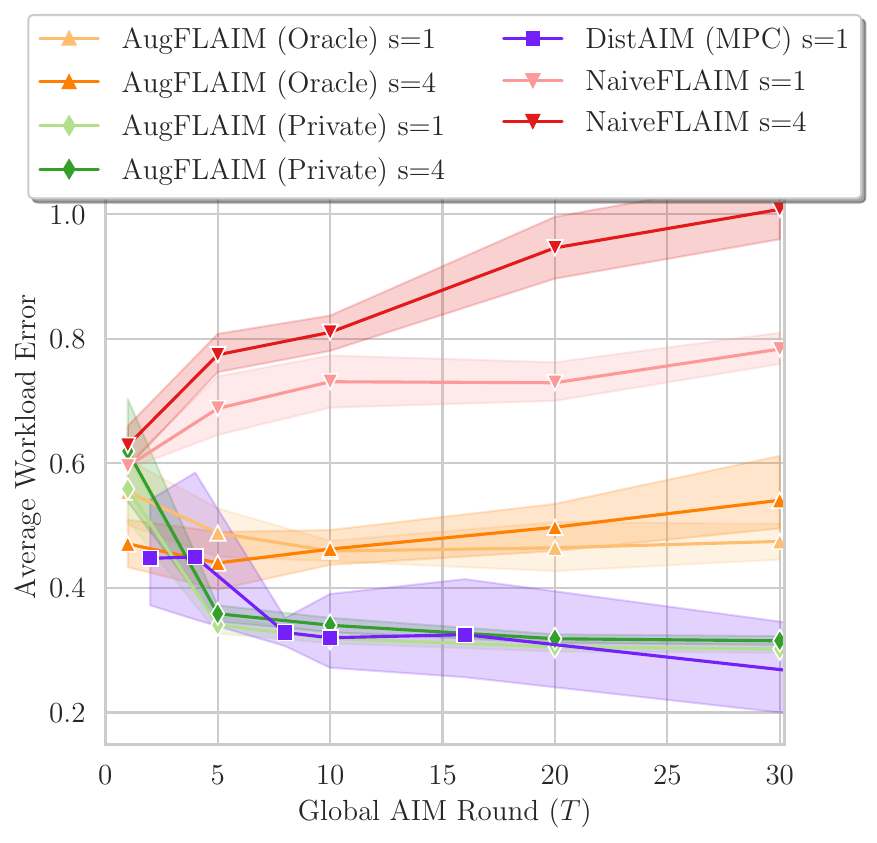}}
	\subfloat[\review{Intrusion (Test AUC)}]{%
		\includegraphics[width=0.35\linewidth]{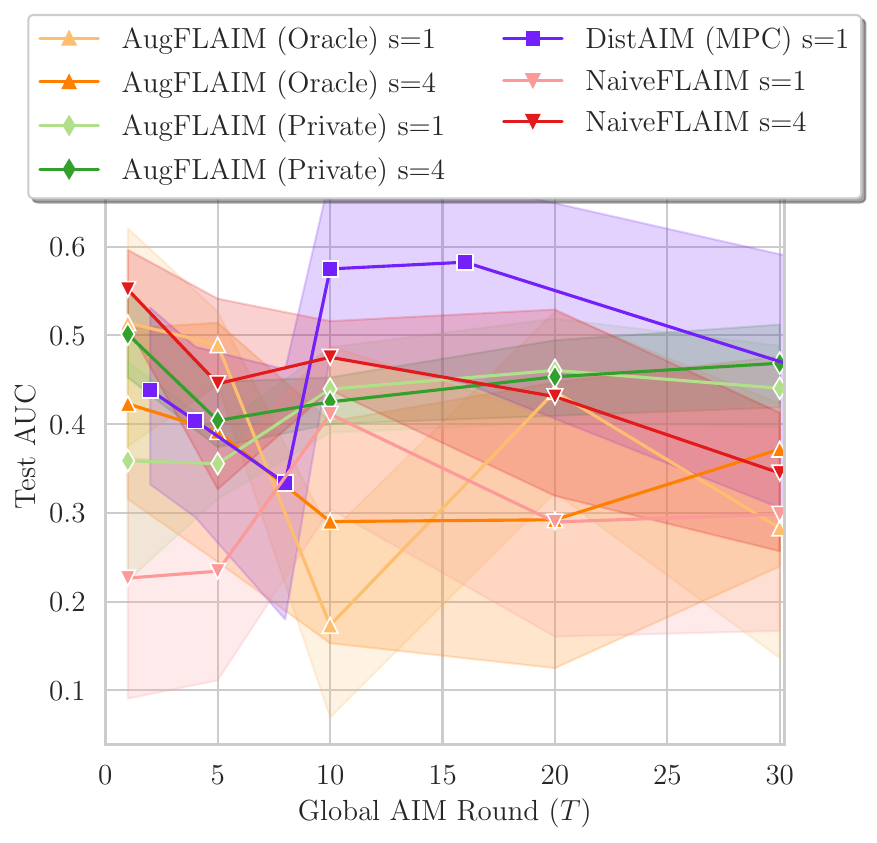}} 
        \caption{\review{Varying local rounds $s \in \{1,4\}$ as in Figure \ref{fig:vary_s_eps1} but on alternative datasets.} \label{fig:appendix_vary_s2}} 
\end{figure*}

\end{document}